\keywords{Church Synthesis Problem, Infinite Two-player Games, Monadic Second-Order Logic}
\theoremstyle{plain} %\crefname{satz}{Satz}{S\"atze}
\newcommand{\Formula}[2]{{\mathfrak{Form}^{#1}_{#2}}}
\def\bp{\bar{P}}
\def\bA{\bar{A}}
\def\bq{\bar{Q}}
\def\bx{\bar{x}}
\def\bX{\bar{X}}
\def\by{\bar{y}}
\def\bY{\bar{Y}}
\def\bE{Succ_0,\dots,Succ_{k-1}}
\def\Sigmain{\Sigma_{in}}
\def\Sigmaout{\Sigma_{out}}
\def\As{\mathcal{AS}}
\def\AStore{\mathcal{S}}
\def\cA{\mathcal{A}}
\def\self{\mathit{self}}
\def\forward{\mathit{forward}}
\def\backward{\mathit{backward}}
\def\om{\omega}
\def\nat{{\mathbb{N}}}
\def\mM{\mathcal M}
\def\Nat{\mathbb{N}}
\def\MSO{\mathsf{MSO}}
\begin{document}

\title{Synthesis of Infinite State Systems}

\author{Ohad Drucker}
\author{Alexander Rabinovich}[a]

\address[a]{Tel Aviv University, Israel}

\begin{abstract}
  \noindent The classical Church synthesis problem, solved by B\"uchi and Landweber \cite{BL69}, treats the synthesis of finite state systems. The synthesis of infinite state systems, on the other hand, has only been investigated few times since then, with no complete or systematic solution.

We present a systematic study of the synthesis of infinite state systems. The main step involves the synthesis of $\MSO$-definable parity games, which is, finding $\MSO$-definable uniform memoryless winning strategies for these games.
\end{abstract}

\maketitle

\section{Introduction}

Let Spec be a specification language and Pr be an implementation language. The following problem is known as the Synthesis problem:

\medskip

\noindent $\mbox{}$\hspace{0.1cm} \framebox {
\begin{minipage}{0.95\hsize}
\noindent \hspace{2.0cm}
{\bf  Synthesis Problem}\vskip4 pt
\noindent{\em Input:} A specification $S(I,O)\in Spec$.\\

\noindent{\em Question:} Is there a program $P\in Pr$ which implements it, i.e., $\forall I\ S(I,P(I))$ holds.
\end{minipage}
}
\medskip

Our aim is a systematic study of the generalized Church synthesis problem.

We first discuss the \textbf{Church synthesis problem over $\omega=(\Nat,<)$}, and then explain its generalizations.

\subsection{The Church synthesis problem over $\omega$}

A synthesis problem is parameterized by a class of specifications and a class of implementations.

The \textbf{specification language} for the Church synthesis problem over $\omega$ is Monadic Second-Order Logic of Order ($\MSO$).
In an $\MSO$ formula $\varphi(X,Y)$, free variables   $X$ and $Y$  range over  monadic predicates over $\Nat$.
Since each such monadic predicate  can be identified with its characteristic $\omega$-string, $\varphi$ defines a binary relation on
  $\omega$-strings. In particular  a specification $S(I,O)$ relates  input $\omega$-strings $I$ and output $\omega$-strings $O$.

The \textbf{implementations} should then be functions from  $\omega$-strings to $\omega$-strings, which we will call operators. However, one would want to consider operators that may be implemented by machines. When a machine computes an operator,
at every moment $t\in \mathbb{N}$ it reads an input symbol $X(t)\in \{0, 1\}$, updates its internal state, and produces an output
symbol $Y (t) \in \{0, 1\}$. Hence, the output $Y(t)$ produced at $t$ depends only on inputs
symbols $X(0), X(1),\dots, X(t)$. Such operators are called \textbf{causal} operators, and these are the implementations of the Church synthesis problem over $\omega$:%, and if

\medskip

\noindent $\mbox{}$\hspace{0.1cm} \framebox {
\begin{minipage}{0.95\hsize}
\noindent \hspace{2.0cm}
{\bf  Church Synthesis problem over $\omega$}\vskip4 pt
\noindent{\em Input:} An $\MSO$ formula $\Psi(X,Y)$.\\

\noindent{\em Question:} Is there a causal operator F such that
$ \forall X\ \Psi(X,F(X))$ holds over $\Nat$?
\end{minipage}
}
\medskip

\subsection{B\"{u}chi and Landweber's solution}

B\"{u}chi and Landweber \cite{BL69} proved that the Church synthesis problem is
computable. Their main theorem can be stated as: % follows:
\begin{thm}
\label{BLthm}
Let $\Psi(X,Y)$ be an $\MSO$ formula.
\begin{enumerate}
    \item (Decidability) It is decidable whether there is a causal operator $F$ such that $\mathbb{N}\models \forall X\ \Psi(X,F(X))$.
    \item (Definability) If yes, then there is an $\MSO$ formula that defines a causal operator which implements $\Psi$.
    \item (Computability) There is an algorithm that constructs, for each implementable $\MSO$ formula $\Psi(X,Y)$, an $\MSO$ formula $\varphi(X,Y)$ which defines a causal operator that implements $\Psi$.
\end{enumerate}
\end{thm}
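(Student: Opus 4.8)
The plan is to reduce the Church synthesis problem to a game-theoretic problem and then apply determinacy and definability results for games on $\omega$. First I would encode a specification $\Psi(X,Y)$ as an infinite two-player game $\mathcal{G}_\Psi$ played on $\Nat$ between Player I (the environment, who chooses input bits $X(0),X(1),\dots$) and Player II (the system, who chooses output bits $Y(0),Y(1),\dots$). At each round $t$ Player I first reveals $X(t)$ and then Player II responds with $Y(t)$; a play thus determines a pair of $\omega$-strings $(X,Y)$, and Player II wins the play exactly when $\Nat\models\Psi(X,Y)$. The crucial observation is that a causal operator $F$ implementing $\Psi$ is precisely the same data as a winning strategy for Player II in $\mathcal{G}_\Psi$: causality (the dependence of $Y(t)$ only on $X(0),\dots,X(t)$) is exactly the information constraint imposed by the turn order, so ``there exists a causal $F$ with $\forall X\,\Psi(X,F(X))$'' is equivalent to ``Player II has a winning strategy in $\mathcal{G}_\Psi$.''

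Next I would translate the $\MSO$ winning condition into a finite-state automaton. By the B\"uchi--Elgot--Trakhtenbrot theorem, the $\MSO$ formula $\Psi$ over $\omega$ is equivalent to a nondeterministic B\"uchi (or, after determinization via McNaughton/Safra, a deterministic parity) automaton $\cA_\Psi$ over the alphabet $\{0,1\}\times\{0,1\}$ recognizing the language of winning $(X,Y)$-sequences. Taking the synchronous product of the arena of $\mathcal{G}_\Psi$ with the state space of $\cA_\Psi$ yields a finite-state parity game $G$ whose positions are pairs (automaton state, whose turn it is) and whose winner is determined by the parity of the least priority seen infinitely often. Player II wins $\mathcal{G}_\Psi$ if and only if Player II wins from the initial position of $G$.

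For such finite parity games I would invoke memoryless (positional) determinacy: one of the two players has a winning strategy that depends only on the current position, not on the history. This gives part~(1), \emph{decidability}: since $G$ is a finite graph whose construction from $\Psi$ is effective, computing the winning region and a memoryless winning strategy is a decidable (indeed algorithmic) procedure, so one can decide whether Player II's initial position is winning. For part~(2), \emph{definability}, I would observe that a memoryless strategy on the finite product game induces a finite-state transducer, which in turn can be described by an $\MSO$ formula $\varphi(X,Y)$: the transducer's run on input $X$ is $\MSO$-definable (finite automata and their runs are expressible in $\MSO$ over $\omega$), and the output bit at each position is a Boolean function of the current state, so the input/output relation it computes is $\MSO$-definable. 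Part~(3), \emph{computability}, then follows because every step above---building $\cA_\Psi$, forming the product, solving the parity game, and reading off the transducer and its defining formula---is effective.

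The main obstacle, and the technical heart of the argument, is the determinacy-plus-finite-memory result for the game $G$: proving that infinite games with an $\omega$-regular (parity) winning condition are determined and, moreover, that the winner has a \emph{memoryless} winning strategy. Determinacy of open/closed games is elementary, but extending determinacy with positional strategies to the full parity (Rabin/Muller) condition is the substantive combinatorial step---classically handled either by B\"uchi and Landweber's original transfinite/ordinal-rank analysis of the Rabin--McNaughton condition or, in modern treatments, by an induction on the number of priorities (the Zielonka/McNaughton argument). The finiteness of the memory is exactly what makes the resulting strategy implementable by a machine and hence $\MSO$-definable, so this is where the three clauses of the theorem ultimately rest.
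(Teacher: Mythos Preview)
Your proposal is correct and follows essentially the same approach the paper sketches: the paper does not give a detailed proof of this classical theorem but outlines exactly your four steps---translate the $\MSO$ specification into a deterministic automaton, convert the automaton into a finite two-player game, solve the game (via memoryless determinacy of parity games), and turn the resulting finite-memory strategy into an input-output automaton (equivalently, an $\MSO$-definable causal operator). Your write-up fleshes out each step with the standard ingredients (McNaughton/Safra determinization, positional determinacy of parity games, $\MSO$-definability of finite-state transducers), which is precisely what the paper's four-item outline is pointing to.
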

\smallskip

Note that a causal operator is $\MSO$-definable if and only if it is computable by a finite state automaton.

The proof of Theorem \ref{BLthm} is based on the fundamental connections between Logic, Games and Automata. The proof has four main steps:

\begin{description}
\item[From logic to automata] Construct a deterministic finite state automaton equivalent to a given logical formula.
\item[From automata to games] Convert the  automaton  into a (finite) game arena for a two person perfect information infinite
game.
\item[Solve the game] Find who wins the game and find a winning strategy.
\item[From winning strategies to input-output automata]
Given a winning strategy construct an automaton that implements the logical specification.
\end{description}
\smallskip

It is worth mentioning that although the synthesis problem over $\omega$ is traditionally attributed to Church, in \cite{Church57} Church does not explicitly restrict the discussion to finite state
systems. He has a vague and general formulation of “logistic
systems” and “circuits” and in fact discusses infinite state systems:
“Given a requirement which a circuit is to satisfy, we may suppose the
requirement expressed in some suitable logistic system which is an
extension of restricted recursive arithmetic. The synthesis problem is
then to find recursion equivalences representing a circuit that satisfies
the given requirement (or alternatively, to determine that there is no
such circuit).”

However, following B\"uchi and Landweber \cite{BL69} the community narrowed the
view of Church’s Problem to the finite-state case. B\"uchi and Landweber's game theoretical
methods for solving the synthesis problem initiated a line of research in the
algorithmic theory of infinite games that has been extensively
developed in recent years. Enormous
technical and conceptual progress has been achieved in developing game
methods,  and a game-theoretic
framework for models of interactive computation has been established.
Game-theoretic methods have also been extended to study synthesis
problems for real-time and hybrid
automata \cite{AFHMS03,BHPR07,BouyerFLMS08,AP18,CL20}. At the same time, comparatively little attention has been paid to the original
motivation of the Church problem, with a relatively small number of  works directly addressing synthesis of infinite state systems. In that sense, our paper takes the discussion back to its original context. 

\subsection{Review of synthesis results for infinite state systems}

Before we introduce our results, we review the known results involving synthesis of infinite state systems.

In \cite{Wal01} Walukiewicz considered arenas generated by \textbf{pushdown automata}.
A configuration of a pushdown automaton contains the content of a stack and a control state.
The set of such configurations is infinite and a pushdown automaton defines in a natural way a transition relation (graph)
on its configurations.
 Walukiewicz \cite{Wal01}
  proved: there is an algorithm that decides who is the winner of the parity game over the arena described by a given pushdown automaton; moreover, the winner
has a winning strategy which can be implemented by a pushdown transducer.

These results were generalized in \cite{CarayolHMOS08} to \textbf{higher-order pushdown automata}.
Carayol et al.  proved: there is an algorithm that decides who is the winner of the parity game over the arena described by a given higher-order pushdown automaton; moreover, the winner
has a winning strategy which can be implemented by a higher-order pushdown transducer.
Recently, it was shown that a similar theorem fails for collapsible pushdown automata \cite{Bro21,carayol2010choice,OngPersonal}.

The arena for the above games is infinite, however the out degree of every node is finite
and these games can be considered as games over an infinite arena with  a \textbf{finite alphabet}.

One case of games over an \textbf{infinite alphabet} that can be found in the literature is the synthesis of games over \textbf{prefix-recognizable graphs} \cite{CachatWaluk}. Cachat showed that the synthesis problem of such games may be efficiently reduced to the synthesis problem of pushdown automata.

The above examples are about synthesis of games over infinite arenas, namely, finding definable winning strategies for such games.
Brütsch and Thomas \cite{BRUTSCHLT17,BrutschT22} explicitly considered a generalization of Church problem in which infinite state systems over infinite alphabets are to be synthesized.
They introduced $\nat$-\textbf{automata} that define  specifications over the alphabet of natural numbers. They also introduced transducers
that can implement such relations.
Their main results show that it is decidable whether the relation defined by an $\nat $-automaton is implementable by a transducer, and if so, there is an algorithm that computes such a transducer.

There is also a line of work on the synthesis of reactive systems over infinite alphabets equipped with equality or a linear order \cite{BhaskarP24, Exibard21, ExibardFK22, KhalimovK19}.
These works are based on the study of constraint sequences, which abstract the behaviour of register automata and allow the reduction of Church games to $\omega$-regular games.

Altogether, these examples are instances of synthesis of infinite state systems that were carefully studied.
However, there are no systematic studies of this topic.

\smallskip

\smallskip

\subsection{Overview of our results}
Our first step in studying synthesis of infinite state systems is synthesis of \textbf{games over infinite arenas}. In order to study algorithmic questions about these games, we need a formalism that provides a finite description of infinite state arenas, and a formalism that provides a finite  description of strategies over infinite arenas.

Examining a concrete example, in the synthesis of $\nat$-automata  considered in \cite{BRUTSCHLT17,BrutschT22}, the resulting games are parity games
\textbf{$\MSO$-definable in $\omega=(\nat ,<)$}. Although any parity game has simple winning strategies, in the sense that they are uniform and memoryless, their descriptive complexity might be high. Thus, we may ask a general question on $\omega$: given an  $\MSO$-definition of a parity game in $\omega$, can we decide who wins the game and describe a memoryless winning strategy? 

We answer this question positively by showing that this synthesis problem is decidable, and by finding a memoryless winning strategy that is definable
in $\omega$. The same positive answer applies for expansions of $\omega$ by parameters. Moreover, an implementing $\nat$-transducer may efficiently be constructed out of that definable strategy.

Another concrete example to examine is the parity game described  by a  pushdown  graph. Such a game is \textbf{$\MSO$-definable} in the \textbf{full binary tree}. The previous question now arises in this wider context: given an $\MSO$-definition of a parity game in the full binary tree, can we decide who wins the game and describe a memoryless winning strategy?

We answer this question positively as well, a result that generalizes the  Walukiewicz  theorem. Our proof can be modified
to obtain a simple proof of the  Walukiewicz theorem (although this simple proof does not give any complexity analysis as the one given by Walukiewicz). Our result also applies to some expansions of the full binary tree by parameters.

More generally, our approach to investigate synthesis of games over infinite arenas is
parameterized by a structure $\mM$ and can be stated as:

\medskip

\noindent $\mbox{}$\hspace{0.1cm} \framebox {
\begin{minipage}{0.95\hsize}
\noindent \hspace{1.4cm}
{\bf  Game Synthesis Problem for $\mM$}\vskip4 pt
\noindent{\em Input:} An $\MSO$ definition of a parity game in $\mM$.\\% $\mbox{} \hspace{1.0cm}$

\noindent{\em Task:} Decide who wins the game;

  \hspace{0.88cm} Find an $\MSO$ definition of a memoryless winning strategy.
\end{minipage}
}\medskip

For the precise definitions, see subsection \ref{preliminary_games}.

Our main results provide sufficient conditions for the decidability of the game synthesis problem for $\mathcal{M}$ and the existence of a winning  strategy that is $\MSO$-definable in $\mM$.

The notion of \textbf{selection} plays a vital role in these results. Let $\varphi(X_1,\dots X_l)$ be a formula. A formula $\psi(X_1,\dots ,X_l)$ is a \textbf{selector} for $\varphi$ in a model $\mM$ if whenever $\varphi$ is satisfiable in $\mM$, there is a unique tuple that satisfies $\psi$ and it satisfies $\varphi$. We say that $\mathcal{M}$ has the \textbf{selection property} if all satisfiable formulas have a selector. Extensive investigation of the selection property may be found in the literature, such as \cite{LifschesS98,RabinovichS08,RS08,Rabinovich07a}. Selection for iterated pushdown memory was studied in \cite{Fratani}.

In many cases there is a formula that describes the set of memoryless winning strategies for a given definable parity game. If selection fails for this formula, one  cannot describe a single memoryless  winning strategy. However, we show:

\begin{thm}[\textbf{Definable strategies in bounded degree games}]
If $\mathcal{M}$ has the selection property, and $G$ is a parity game $MSO$-definable in $\mathcal{M}$ such that $G$ has bounded degree, then there are MSO-definable in $\mathcal{M}$ uniform memoryless winning strategies for $G$.
\end{thm}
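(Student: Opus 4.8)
The plan is to leverage the two hypotheses --- the selection property of $\mathcal{M}$ and the bounded degree of $G$ --- to extract a single definable strategy out of the (generally non-unique) set of all memoryless winning strategies. First I would recall the classical fact that every parity game admits uniform memoryless winning strategies: from each vertex in the winning region of a player, that player can win using a positional strategy that does not depend on the play history. So the existence of \emph{some} memoryless winning strategy is free; the whole content of the theorem is making such a strategy $\MSO$-\emph{definable in} $\mathcal{M}$. The natural route is to write down an $\MSO$ formula $\sigma(x,y)$, with free first-order-like variables (or monadic variables coding vertices), whose intended meaning is ``$y$ is the successor chosen by the strategy at vertex $x$.'' Since the game $G$ is $\MSO$-definable in $\mathcal{M}$ by assumption, the arena, the priority assignment, the partition of vertices between the two players, and hence the property ``this choice function is a uniform memoryless winning strategy'' are all expressible by an $\MSO$ formula over $\mathcal{M}$.

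The key step is then to reduce ``definable strategy'' to ``definable selection.'' Let $\varphi(S)$ be the $\MSO$ formula saying that the relation $S$ (coded as a monadic predicate over pairs, or a suitable tuple of predicates) is a uniform memoryless winning strategy for the relevant player. By the classical positional determinacy of parity games, $\varphi$ is satisfiable in $\mathcal{M}$. Applying the selection property, there is a selector $\psi(S)$ for $\varphi$: a formula that, whenever $\varphi$ is satisfiable, is satisfied by a \emph{unique} tuple, and that tuple satisfies $\varphi$. This unique $S$ is therefore a genuine uniform memoryless winning strategy, and it is $\MSO$-definable in $\mathcal{M}$ --- one defines it as ``the unique $S$ with $\psi(S)$,'' or extracts from $\psi$ a formula $\sigma(x,y)$ picking out the chosen move at each vertex. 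This is exactly the output the game synthesis problem demands.

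The role of \textbf{bounded degree} is what makes the coding of a strategy as a single (tuple of) monadic predicate(s) first-order expressible rather than requiring quantification over arbitrary-arity objects. When every vertex has at most $d$ outgoing edges, a memoryless strategy is a choice among boundedly many alternatives at each vertex, so ``being a winning strategy'' --- including the requirement that from each winning vertex every consistent play satisfies the parity condition --- can be captured by an $\MSO$ formula $\varphi(S)$ of the right shape, to which the selection property applies. Without a degree bound one would need to select a choice function ranging over an unbounded or infinite branching, which does not fit the monadic framework, and the reduction to a selector for a single $\MSO$ formula breaks down. So the bounded-degree hypothesis is precisely what guarantees that the object we must select is of a type for which the selection property of $\mathcal{M}$ is applicable.

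I expect the main obstacle to be the faithful $\MSO$-encoding of the winning condition for a uniform memoryless strategy inside $\mathcal{M}$ --- in particular, expressing ``from every winning vertex, every infinite play consistent with $S$ satisfies the parity acceptance condition'' as a single $\MSO$ formula over $\mathcal{M}$, rather than over the tree or over $\omega$ directly. One must verify that the parity condition along $S$-consistent plays (an $\omega$-regular property of paths in the arena) is itself $\MSO$-definable in $\mathcal{M}$ given that the arena and priorities are, and that the resulting formula $\varphi(S)$ has free variables of a type the selection property covers. Once $\varphi(S)$ is correctly written and shown satisfiable, the application of the selection property is routine, and definability of the selected strategy follows immediately.
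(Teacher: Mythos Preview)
Your overall architecture matches the paper's exactly: encode a memoryless strategy by a tuple of monadic predicates, write an $\MSO$ formula asserting that the tuple codes a uniform memoryless winning strategy, and apply selection. However, you leave the crucial step vague, and one of your suggestions is actually wrong. Writing that $S$ is ``coded as a monadic predicate over pairs'' does not work: $\MSO$ over $\mathcal{M}$ quantifies only over subsets of $M$, not over binary relations on $M$, so a strategy $s:V_I\to V$ cannot be a single set variable. Your fallback, ``a suitable tuple of predicates,'' is the right target, but you never say how a bounded number of unary predicates can encode, at each vertex $u$, \emph{which} out-neighbour is chosen. Having out-degree $\le d$ tells you there are at most $d$ options, but it gives no canonical way to \emph{name} them; you need a definable way to index the out-neighbours of each vertex.

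This is precisely the technical content you are missing. The paper first produces an $\mathcal{M}$-definable \emph{local linear order} on the out-neighbours of every vertex; once such an order exists, a strategy is encoded by $(P_0,\dots,P_{d-1})$ with $P_i(u)$ meaning ``play the $i$-th out-neighbour of $u$.'' Obtaining a definable local linear order is where \emph{bounded degree} (not merely bounded out-degree) is used: one forms the auxiliary graph with an edge between any two common out-neighbours of some vertex, observes its degree is bounded by $d^2$, and invokes the De Bruijn--Erd\H{o}s colouring theorem to get a proper $(d^2{+}1)$-colouring. The existence of such a colouring is expressible in $\MSO$, and selection then yields a \emph{definable} colouring, hence a definable local linear order. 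Only after this are your Steps~2 and~3 (writing $\varphi(\bar P)$ and selecting) possible. Incidentally, the obstacle you anticipated --- expressing the parity winning condition along $\bar P$-consistent plays --- is routine once the encoding is in place; the real work is the encoding itself.
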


This result is then extended to \textbf{unbounded degree games}, out of which we get:

\begin{thm}
Let $\mathcal{M}$ be an expansion by unary predicates of an ordinal $< \omega^\omega$, or an expansion by unary predicates of the full $k$-ary tree which has the selection property. Then MSO-definable in $\mathcal{M}$ parity games have MSO-definable in $\mathcal{M}$ uniform memoryless winning strategies.
\end{thm}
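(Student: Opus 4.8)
The plan is to reduce the unbounded-degree case to the bounded-degree case and then invoke the \emph{Definable strategies in bounded degree games} theorem. For this I need two ingredients about each of the two classes of structures: first, that $\mathcal{M}$ has the selection property, so that the bounded-degree theorem is applicable; and second, that $\mathcal{M}$ carries an $\MSO$-definable, bounded-degree, connected ``navigation'' graph $N$ spanning all of its elements. Given such an $N$, the idea is that an arbitrary, possibly unbounded-degree, move of the game can be simulated by a finite sequence of bounded-degree steps along $N$, so that a single macro-move toward its target is replaced by a walk.

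Concretely, given an $\MSO$-definable parity game $G$ in $\mathcal{M}$ with arena $(V,E)$ and priority function $\Omega$, I would build an auxiliary game $G'$ as follows. Each vertex $v$ owned by a player $P$ is replaced by a gadget in which $P$ walks along $N$-edges starting from $v$; a gadget position is coded by a pair $(v,w)$ recording the origin $v$ of the macro-move and the current $N$-node $w$. At any $w$ that is an $E$-successor of $v$ the owner $P$ may ``commit'', which leads to the genuine position $w$ and inherits the priority $\Omega(v)$. All gadget positions are owned by $P$, so the opponent cannot interfere with the routing, and the out-degree at a gadget position is at most the degree of $N$ plus one (for the commit option). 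The internal priorities are chosen so that an infinite stall inside a gadget, i.e. walking forever without ever committing, is losing for the owner $P$; hence an optimal $P$ never stalls. The resulting $G'$ is $\MSO$-definable in $\mathcal{M}$, since the gadgets are defined from $E$, $\Omega$ and the definable graph $N$, and $G'$ has bounded degree.

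Applying the bounded-degree theorem to $G'$ yields an $\MSO$-definable in $\mathcal{M}$ uniform memoryless winning strategy $\sigma'$. Since $\sigma'$ is memoryless, following it from a genuine position $v$ traces a uniquely determined $N$-walk ending at a definite $E$-successor $\sigma(v)$ of $v$; thus $\sigma(v)$ is the endpoint of $\sigma'$-navigation, which is $\MSO$-definable by reachability along the definable $\sigma'$-edges, and memoryless in $G$. The parity bookkeeping is arranged so that plays of $G$ consistent with $\sigma$ correspond to plays of $G'$ consistent with $\sigma'$ with the same winner, so $\sigma$ is winning on the original winning region. This reduction is exactly the extension to unbounded degree announced after the bounded-degree theorem.

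It remains to supply the two ingredients for the two classes. For the full $k$-ary tree the navigation graph is the tree itself, with parent and child edges: it is $\MSO$-definable, has degree at most $k+1$, and connects all nodes, so committing amounts to walking from $v$ along tree edges to the chosen successor; the selection property is assumed in the hypothesis. For an ordinal $\alpha<\omega^\omega$ expanded by unary predicates the selection property is available from the known selection results over ordinals, so only $N$ must be produced. Here the finitary Cantor normal form is essential: since $\alpha<\omega^\omega$ we have $\alpha<\omega^n$ for some $n$, each element is addressed by an $n$-tuple of naturals, and $N$ is taken to be the $\MSO$-definable grid graph joining addresses that differ by one in a single coordinate. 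I expect this to be the main obstacle: unlike the tree, the successor relation on an ordinal is not connected, since limit points have no predecessor, and the delicate point is to $\MSO$-define, uniformly in $\alpha$, a bounded-degree graph that nonetheless reaches the limit points. This is precisely what the bound $\omega^\omega$ buys — the normal form has only finitely many exponents, so the addressing, and hence $N$, stays definable and of bounded degree; beyond $\omega^\omega$ this argument breaks down.
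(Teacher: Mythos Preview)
Your reduction has the right high-level shape --- simulate an unbounded-degree move by a bounded-degree walk --- but there is a genuine gap in the gadget construction. You encode a gadget position as a pair $(v,w)$, where $v$ is the origin of the macro-move and $w$ is the current navigation node. The arena of your auxiliary game $G'$ is therefore (a subset of) $M \times M$, not of $M$ or of any finite copying $M \times d$. In the paper's sense, ``$\MSO$-definable in $\mathcal{M}$'' means a one-dimensional interpretation: vertices are elements of $M$ and edges are binary relations on $M$. Your $G'$ is not of this form, so the bounded-degree theorem does not apply to it. Even if one reads ``definable in $\mathcal{M}$'' as a two-dimensional interpretation, there is no reason for the resulting structure on $M^2$ to inherit the selection property, which is precisely what the bounded-degree theorem needs.

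The paper's route is designed to eliminate the need to remember $v$. Using the composition method (subsections~\ref{TypesSub}--\ref{CompSub}, Theorems~\ref{MainTheorem} and~\ref{MainThmOrd}), it shows that whether $E(v,w)$ holds is determined by a bounded amount of data: in the ordinal case, the $n$-types of $[0,\min(v,w))$, $[\min(v,w),\max(v,w))$, and $[\max(v,w),\infty)$, and analogous path/subtree types in the tree case. Since there are only finitely many such types, the state carried during the walk is not the element $v$ but a type, and the gadget genuinely lives in a finite copying $M \times k$. This is packaged as \emph{definability by regular expressions} (subsection~\ref{RegExpSub}): the edge relation $E$ is recognised by a finite automaton reading labels along the navigation path, and the automaton's state replaces your unbounded coordinate $v$ (Theorem~\ref{RegExpToRepr}). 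Your navigation graph $N$ is essentially the right object --- it plays the role of the labelled bounded-degree graph $G$ in the paper --- but the missing ingredient is this type-theoretic compression of the origin, without which the auxiliary game does not stay one-dimensional.
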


The next step is applying these results on the \textbf{synthesis of infinite state systems}. In order to properly define the synthesis problem, we need a specification language that describes an input-output relation. A natural possibility is to consider a definable parity game between players input and output whose edges are labelled by some alphabets $\Sigmain, \Sigmaout$ that are not necessarily finite. The transitions are deterministic, which is, a vertex has at most one outgoing edge for each label. However, a single outgoing edge may have many different labels. The input-output relation defined by the game is then the set of pairs $(x,y) \in {\Sigmain}^\omega \times {\Sigmaout}^\omega$ which label a play that is winning for player output.

The generalized Church problem can then be stated as:

\medskip

\noindent $\mbox{}$\hspace{0.1cm} \framebox {
\begin{minipage}{0.95\hsize}
\noindent \hspace{1.2cm}
{\bf  Generalized Church Synthesis Problem for $\mM$}\vskip4 pt
\noindent{\em Input:} An $\MSO$ definition of an edge labelled parity game in $\mM$.\\ $\mbox{} \hspace{1.0cm}$

\noindent{\em Question:} Is there a causal operator that implements the input-output relation defined by the game?

\hspace{1.65cm} If yes, can the relation be implemented by a transducer that is $\MSO$-definable in $\mathcal{M}$?
\end{minipage}
}
\medskip

For the precise definitions, see subsection \ref{GCPInfSect}.

To obtain a solution of this generalized Church problem from the solution of the game synthesis problem, we need to be able to \textbf{choose} one label of the transition specified by a winning strategy \textbf{in a definable way}. When the set of labels is finite, that can easily be done. However, when the set of labels is infinite that is not necessarily possible, and we provide sufficient conditions for that definable choice of labels to exist.

In particular, we prove:
\begin{thm}[\textbf{Synthesis of definable infinite state systems}]
Let $\mathcal{M}$ be an ordinal $< \omega^\omega$, or the full $k$-ary tree. Let $\mathcal{G}$ be an MSO-definable in $\mathcal{M}$ edge labelled parity game over alphabets $\Sigmain$ and $\Sigmaout$ (not necessarily finite).
\begin{enumerate}
\item (Decidability) It is decidable whether there exists a causal operator that implements the relation induced by $\mathcal{G}$.
\item (Definability) If yes, then there is an MSO-definable in $\mathcal{M}$ transducer over $\Sigmain$ and $\Sigmaout$ that implements the relation induced by $\mathcal{G}$.
\end{enumerate}
\end{thm}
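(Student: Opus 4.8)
The plan is to reduce the generalized Church problem for $\mathcal{G}$ to the game synthesis problem already solved for these $\mathcal{M}$, and then to turn a definable winning strategy into a definable transducer by committing, in a definable way, to a single output label at each move. The whole reduction rests on a semantic equivalence: a causal operator $F:\Sigmain^\omega\to\Sigmaout^\omega$ implementing the relation $R$ induced by $\mathcal{G}$ exists if and only if player output has a winning strategy in $\mathcal{G}$, read as a parity game in which, round by round, input chooses a $\Sigmain$-label and output a $\Sigmaout$-label. In one direction, evaluating a winning strategy along the play generated by the prefix $x(0),\dots,x(t)$ yields an output symbol $y(t)$ depending only on that prefix, hence a causal operator; in the other, a causal operator induces a strategy for output, and $(x,F(x))$ labels a winning play for every $x$ precisely when that strategy wins.

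For part (1), note that because parity games are positionally determined, output has a winning strategy iff it has a memoryless one, and the latter property is expressible by an $\MSO$ sentence in $\mathcal{M}$ once $\mathcal{G}$ is $\MSO$-definable. Decidability of the $\MSO$ theories of ordinals $<\omega^\omega$ and of the full $k$-ary tree (equivalently, the decidability half of the game synthesis theorem) then settles whether such an operator exists.

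For part (2), assume output wins. Since $\Sigmain$ and $\Sigmaout$ may be infinite, the arena can have unbounded out-degree, so I would invoke the unbounded-degree game synthesis theorem, valid for these $\mathcal{M}$ because they have the selection property, to obtain a uniform memoryless winning strategy $\sigma$ that is $\MSO$-definable in $\mathcal{M}$. The transducer is then read off directly: its states are the arena vertices (definable in $\mathcal{M}$), on an input symbol it follows the unique matching outgoing edge, and at an output-vertex it plays the edge prescribed by $\sigma$ and emits a label carried by that edge; causality and the winning condition are inherited from $\sigma$.

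The one genuinely new point, and the step I expect to be the main obstacle, is the last clause: the edge $\sigma(v)$ at an output-vertex $v$ may carry many $\Sigmaout$-labels, and a transducer must emit exactly one, chosen definably and uniformly in $v$. When $\Sigmaout$ is finite this is a trivial finite choice, but for infinite $\Sigmaout$ selecting one element of the definable, parameter-dependent label set is a uniformization problem that can genuinely fail. I would avoid uniformization by folding the label choice into a single selected object: apply the selection property of $\mathcal{M}$ to the formula $\Phi(S)$ stating that $S$ encodes a uniform memoryless winning strategy together with one chosen label on each selected output-edge. As output wins and every edge carries at least one label, $\Phi$ is satisfiable, so selection yields a definable witness $S$, from which the labelled transducer is extracted. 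Equivalently, one may first split each output-move of $\mathcal{G}$ into single-labelled edges, obtaining a game still $\MSO$-definable in $\mathcal{M}$ with the same winner, and then apply the game synthesis theorem to it so that the resulting definable memoryless strategy already commits to labels. A minor routine point is the convention at vertices with no outgoing edge (finite plays), where the stuck player loses; restricting attention to inputs that generate infinite plays makes the operator total on the relevant domain.
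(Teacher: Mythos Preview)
Your overall architecture is right: reduce to the game $\tilde{\mathcal{G}}$ obtained by forgetting labels, invoke decidability of $MTh(\mathcal{M})$ for part (1), invoke Theorem \ref{MainDefStrategy} to get a definable uniform memoryless winning strategy $\phi(v_{out},v_{in})$, and then build the transducer. The genuine gap is exactly where you suspected it would be, but your proposed fix does not work.

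The difficulty is that a label-choice is a function $V_{out}\to\Sigmaout$, i.e.\ a \emph{binary relation} on $M$. Selection applies to formulas whose free variables are monadic; it lets you pick a definable tuple of \emph{subsets} of $M$, not a definable binary relation. Your formula $\Phi(S)$ would have to encode, by finitely many unary predicates, an arbitrary function between two possibly infinite definable subsets of $M$, and there is no such encoding in general. (The encoding trick from Theorem \ref{boundedOutDegree} works precisely because bounded out-degree lets a partition of the domain name one of finitely many out-neighbours; here the set of candidate labels at a vertex is infinite.) Your alternative, ``split each output-move into single-labelled edges,'' runs into the same wall: to make the split game a simple graph one needs intermediate vertices indexed by pairs $(v,\sigma_{out})\in V_{out}\times\Sigmaout$, and no finite copying of $\mathcal{M}$ provides those.

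The paper closes this gap with a different tool: \emph{weak first-order uniformization}. Given the definable strategy $\phi(v_{out},v_{in})$, one forms $\tilde{\phi}(v_{out},\sigma_{out}):=\exists v_{in}\big(\phi(v_{out},v_{in})\wedge\delta_{\mathcal{G}}(v_{out},\sigma_{out})=v_{in}\big)$, a formula with two \emph{first-order} free variables, and uniformizes it with domain variable $v_{out}$ to obtain a definable function picking one $\sigma_{out}$ per $v_{out}$. For ordinals this uniformization is trivial (take the least witness). For $\mathcal{T}_k$ it is a nontrivial theorem (Theorem \ref{unifoT2}, attributed to Muchnik): formulas with first-order domain variables can be uniformized, even though by Gurevich--Shelah full uniformization fails over $\mathcal{T}_k$. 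So the key missing ingredient in your argument is not selection but this intermediate property---first-order uniformization---which is strictly stronger than what selection alone delivers and strictly weaker than full uniformization.
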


\smallskip

The paper is organized as follows.
Section \ref{Prelim} contains preliminaries.
Section \ref{SectionBoundedOutDegree} considers definable games over graphs of bounded out degree.
It provides sufficient conditions for the existence of definable memoryless  winning strategies for these games.
Section \ref{SectionUnbounded} generalises these results to games of unbounded out degree.
Section \ref{GCPSection} considers the generalized Church problem over finite and infinite alphabets. 

\section{Preliminaries}\label{Prelim}
This long preliminary  recalls  standard notations and results about logic, automata and games.
In addition, it covers elements of the composition method  in subsection \ref{sect:comp} and the selection property in subsection \ref{sect:select}.

\subsection{Notations and terminology}
We use $n,m,i,j,k,l$ for natural numbers, $\omega$ for the set of natural numbers, and $\alpha,\beta$ for ordinals.

If $X$ is a set, $2^X$ is the power set of $X$, which is, the set of subsets of $X$. A finite sequence of elements of $X$ is called a word or a string over $X$.
The empty string is denoted by $\epsilon$. The set of words over $X$ is denoted by $X^{<\omega}$ or $X^*$. An $\omega$-word over $X$ is a function $f:\omega \to X$.
%The empty string is denoted by $\epsilon$.
If $u,w$ are words over $X$, $u\cdot w$ is their concatenation, and $u^\omega$ is the $\omega$-word formed by infinitely many concatenations of $u$. If $L$ is a set of words, $L^+=\{w_0\cdots w_n\ |\ n\in\omega, \forall i\ w_i \in L\}$, and $L^*=L^+ \cup \{\epsilon\}$.

For a tuple of sets $\bar{A}=(A_0,\dots,A_{n-1})$, $\bar{A}\subseteq X$ means that for every $0\leq i \leq n-1$, $A_i \subseteq X$.

We use the symbol $\simeq$ for isomorphism. If $f:A\to B$ and $A' \subseteq A$, then $f(A'):=\{f(x)\ |\ x \in A'\}$.

If $(A,<)$ is a linearly ordered set, we use standard notation for its intervals: $(a,b)=\{x \in A\ |\ a < x < b\}$, $[a,b):=(a,b)\cup\{a\}$, etc. The interval $(a,\infty)$ is $\{x \in A\ |\ x > a\}$, and $(-\infty,a):=\{x \in A\ |\ x < a\}$.

The term `effectively determines / represents / defines' means that there is an algorithm that determines, or that finds the required representation or definition. Similarly, when an object is `computable', it means that there is an algorithm that finds it.

\subsection{Monadic Second Order Logic}
We use standard notations and terminology about Monadic Second Order logic (see, for example, \cite{ThomasMSO}). In this version the first order variables are eliminated and only second order variables that are interpreted as unary predicates are used.
\begin{defi}[$\MSO$]
\begin{enumerate}
\item (Syntax) Let $\mathcal{L}=\langle R_i\ |\ i \in I \rangle$ be a relational signature, where $R_i$ is a $k_i$-ary relation symbol.
The \textbf{vocabulary} of $\MSO$ over $\mathcal{L}$ is $\mathcal{L}$ together with the \textbf{monadic second order variables} $\{X_i\ |\ i \in \omega\}$, binary relation symbols $=, \subseteq$, and unary relation symbols $'empty'$ and $'sing'$ (for `singleton').

\textbf{Atomic formulas} take the form $X_i=X_j$, $X_i \subseteq X_j$, $empty(X_i)$, $sing(X_i)$ or $R_i(X_{i_0},\dots,X_{i_{k-1}})$.
The \textbf{Monadic Second Order} $(=MSO)$ \textbf{formulas} are the closure of the atomic formulas by the usual logical connectives and second-order quantifiers $\exists X_i, \forall X_i$. An $\MSO$-formula is a \textbf{sentence} if it has no free variables.

We use upper case letters $X,Y,Z,\dots$ to denote variables; with an overline, $\bX, \bY$, etc. to denote finite tuples of variables (always assumed distinct). We denote by $|\bX|$ the length of a tuple $\bX$, that is, the number of variables appearing in it. A formula $\phi(X_0,\dots,X_{l-1})$ is a formula with free variables among $X_0,\dots,X_{l-1}$.

\item (Semantics) A \textbf{structure with} $l$ \textbf{distinguished elements} is a pair $(\mathcal{M},\bp)$ where $\mathcal{M}=(M,\langle R_i^\mathcal{M}\ |\ i \in I\rangle)$ is a structure and $\bp=(P_0,\dots,P_{l-1})$ is an $l$-tuple of subsets of $M$. 

Given a structure with $l$ distinguished elements $(\mathcal{M},\bp)$ and a formula $\phi(X_0,\dots,X_{l-1})$, we define the \textbf{satisfaction} relation $(\mathcal{M},\bp) \models \phi$ \big(also denoted by $\mathcal{M} \models \phi(\bp)$\big) as follows: $(\mathcal{M},\bp)\models X_i=X_j$ if $P_i,P_j$ are equal singletons, $(\mathcal{M},\bp) \models X_i \subseteq X_j$ if $P_i \subseteq P_j$, $(\mathcal{M},\bp) \models empty(X_i)$ if $P_i = \emptyset$, $(\mathcal{M},\bp) \models sing(X_i)$ if $P_i$ is a singleton, and $(\mathcal{M},\bp) \models R_i(X_{i_0},\dots,X_{i_{k-1}})$ if $P_{i_0},\dots,P_{i_{k-1}}$ are singletons $\{m_{i_0}\},\dots,\{m_{i_{k-1}}\}$, respectively, and $(m_{i_0},\dots,m_{i_{k-1}}) \in R_i^\mathcal{M}$. The boolean connectives are handled as usual and quantifiers range over subsets of $M$.

In weak-$\MSO$ semantics, the quantification is over finite sets.

\item Given a structure $\mathcal{M}$, the \textbf{monadic theory} of $\mathcal{M}$, $MTh(\mathcal{M})$, is the set of all sentences $\phi$ satisfied by $\mathcal{M}$.

\end{enumerate}
\end{defi}
Along the paper, when a calligraphic letter denotes a structure, its non-calligraphic uppercase form stands for the domain of that structure. All formulas in this paper are $\MSO$ formulas, unless otherwise noted.

\begin{defi} Let $\mathcal{M}=(M,\langle R_i^\mathcal{M}\ |\ i \in I\rangle)$ be a structure, and let $\bq$ be a tuple of subsets of $M$. A structure of the form $(M,\langle R_i^\mathcal{M}\ |\ i \in I\rangle,\bq)$ is called $\mathcal{M}$ \textbf{expanded by the parameters} $\bq$ (or simply $\mathcal{M}$ \textbf{with  parameters} $\bq$). It is denoted by $(\mathcal{M},\bq)$.
\end{defi}
\begin{defi}[first-order variables]
A free variable $x$ in a formula $\phi(x,\bY)$ is a \textbf{first-order free variable} if $\phi(x,\bY)$ is of the form $sing(x) \land \psi(x,\bY)$.
\end{defi}
We follow the convention that if a variable (or a tuple of variables) is denoted by a lowercase letter then it is a first order variable, unless otherwise noted. The structure $(\omega,<)$ is usually simply denoted by $\omega$.

Given a structure $\mathcal{M}$, one may ask whether $MTh(\mathcal{M})$ is a decidable set.

\begin{thm}[B\"uchi \cite{Buchi}] The monadic theory of $\omega$ is decidable.
\end{thm}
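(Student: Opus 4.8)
The plan is to follow the automata-theoretic method, establishing an effective equivalence between $\MSO$-definability over $\omega$ and recognizability by Büchi automata on infinite words. First I would fix an encoding: a tuple $\bp = (P_0,\dots,P_{l-1})$ of subsets of $\Nat$ is represented by the $\omega$-word over the alphabet $\Sigma_l = \{0,1\}^l$ whose $t$-th letter records, for each $i$, whether $t \in P_i$. The goal is then to prove, by induction on the structure of a formula $\vp(X_0,\dots,X_{l-1})$, that one can effectively construct a Büchi automaton $\cA_\vp$ over $\Sigma_l$ accepting exactly the encodings of tuples satisfying $\vp$ in $\omega$. Once this is in place, a sentence $\vp$ (the case $l=0$) is true in $\omega$ iff $\cA_\vp$, now running over a one-letter alphabet, accepts its unique input word, which reduces to testing non-emptiness of $L(\cA_\vp)$.

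For the base cases I would build explicit automata for the atomic formulas $X_i \subseteq X_j$, $X_i = X_j$, $sing(X_i)$, $empty(X_i)$, and the order atom, exploiting that each either constrains the input letters pointwise or imposes a simple ``exactly once'' / ``occurs before'' pattern that a small automaton can check. For the inductive step, disjunction and conjunction are handled by the standard union and product constructions for Büchi automata, and existential second-order quantification $\exists X_i$ is handled by projection, erasing the $i$-th component of the alphabet; the nondeterminism of Büchi automata makes projection immediate. The delicate case is negation, which requires closure under complementation.

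Complementation of Büchi automata is the main obstacle and the step I expect to dominate the argument; unlike on finite words, there is no determinization that makes the complement immediate. The route I would take is Büchi's original congruence-and-Ramsey argument. For a given automaton $\cA$ with state set $Q$, associate to each finite word $u$ its profile recording, for every pair $(p,q) \in Q \times Q$, whether $\cA$ has a run from $p$ to $q$ on $u$ and whether it has such a run visiting an accepting state. Words with the same profile form a congruence of finite index under which the class of a concatenation depends only on the classes of the factors, so the classes form a finite monoid. Applying Ramsey's theorem to the coloring of pairs $i<j$ by the class of the infix $\alpha[i,j)$ shows that every $\omega$-word $\alpha$ lies in some set $U \cdot V^\omega$ with $U,V$ classes and $V$ idempotent. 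A saturation lemma then shows that each such $U \cdot V^\omega$ is either contained in $L(\cA)$ or disjoint from it, whence both $L(\cA)$ and its complement are finite unions of sets $U \cdot V^\omega$ with $U,V$ regular. The complement is therefore again $\omega$-regular, and a complementing automaton can be built effectively.

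Finally, I would invoke decidability of the non-emptiness problem for Büchi automata: $L(\cA) \neq \emptyset$ iff the transition graph contains an accepting state that is reachable from an initial state and lies on a cycle, which is decidable by elementary graph reachability. Combining the effective translation of formulas into automata, the effective complementation, and this emptiness test yields an algorithm deciding whether $\omega \models \vp$ for an arbitrary sentence $\vp$, establishing decidability of $MTh(\omega)$.
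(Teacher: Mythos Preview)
Your proposal is correct and is precisely the classical automata-theoretic proof due to B\"uchi. Note, however, that the paper does not supply a proof of this theorem at all: it is stated as a background result with a citation to \cite{Buchi} and left unproved, since it serves only as a prerequisite for the paper's own contributions. So there is no ``paper's own proof'' to compare against; your outline simply reconstructs the standard argument from the cited reference.
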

\begin{defi}
The $k$\textbf{-ary tree} is $T_k=k^{<\omega}$. The $k$\textbf{-ary tree with successors} $\mathcal{T}_k=(T_k,<,\bE)$ is $T_k$ with a binary relation $x<y$ if $x$ is a prefix of $y$, and unary predicates $Succ_i$ indicating sequences that end with $i$.
\end{defi}

\begin{thm}[Rabin \cite{Rabin69}] The monadic theory of $\mathcal{T}_k$ is decidable.
\end{thm}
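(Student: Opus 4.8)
The plan is to follow the classical automata-theoretic route, reducing the decision problem for the monadic theory of $\mathcal{T}_k$ to the emptiness problem for tree automata. First I would introduce nondeterministic parity tree automata running on $k$-ary trees labelled by a finite alphabet $\Sigma$ carrying one bit per free monadic variable, where a run assigns states to tree nodes consistently with the transition relation and acceptance is the parity condition along every infinite branch. The central bridge is the equivalence: a set of labelled $k$-ary trees is $\MSO$-definable over $\mathcal{T}_k$ if and only if it is recognized by such an automaton. Granting this equivalence together with decidability of emptiness, a sentence $\phi$ holds in $\mathcal{T}_k$ iff the automaton $A_\phi$ equivalent to $\phi$ accepts the (unique unlabelled) tree, which reduces to testing whether $L(A_\phi)$ is nonempty.

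Next I would establish the translation from formulas to automata by induction on formula structure. Atomic formulas define elementary regular sets of labelled trees and are directly recognizable. Conjunction and disjunction are handled by the standard product and union constructions, both preserving recognizability for nondeterministic automata. Existential set quantification $\exists X_i\,\psi$ corresponds to projection, that is, erasing the $X_i$-component of the alphabet, again a nondeterministic operation preserving recognizability. The only problematic connective is negation, which forces the class of recognizable tree languages to be closed under complementation.

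The hard part, and the technical heart of the whole argument, is this complementation theorem. I would prove it through the acceptance game: fixing an automaton $A$ and a labelled tree $t$, set up a two-player parity game in which the automaton player chooses transitions and the pathfinder chooses branches, so that the automaton player wins iff $A$ has an accepting run on $t$. By the positional (memoryless) determinacy of parity games, which holds over arbitrary arenas, exactly one player has a memoryless winning strategy from the initial position. The remaining task is to show that the set of trees on which the pathfinder wins, namely the complement of $L(A)$, is itself recognizable, by constructing a dual automaton whose runs encode memoryless pathfinder strategies. The availability of memoryless strategies is precisely what lets these strategies be captured by a finite-state device and thus by a tree automaton; this is where the combinatorial difficulty of Rabin's original argument is concentrated, and I expect it to be the main obstacle.

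Finally I would dispatch the emptiness problem. Testing whether a parity tree automaton accepts some labelled tree reduces to solving a parity game played on the finite transition structure of the automaton, where one player resolves the nondeterministic choices and the other selects directions; nonemptiness holds iff the first player wins, and this is decidable since parity games over finite arenas are effectively solvable. Combining the effective translation $\phi \mapsto A_\phi$, closure under the Boolean operations and projection, and decidable emptiness yields an algorithm deciding whether $\mathcal{T}_k \models \phi$, which establishes decidability of the monadic theory of $\mathcal{T}_k$.
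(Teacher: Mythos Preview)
Your sketch is the standard automata-theoretic proof of Rabin's theorem and is correct as an outline; the paper, however, does not prove this statement at all---it is quoted as a classical preliminary result with a citation to \cite{Rabin69} and no proof is given. So there is nothing to compare against: the paper simply takes the theorem as known, and your proposal supplies (in outline) the well-known argument the paper omits.
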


\begin{defi}[Definability] Let $\mathcal{M}$ be a structure.

Given a formula $\psi(x_1,\dots,x_k)$, \[\psi^\mathcal{M}:=\{(x_1,\dots,x_k) \in M^k\ |\ \mathcal{M}\models \psi(\{x_1\},\dots,\{x_k\})\}.\]

A relation on $M$, $A \subseteq M^k$, is (weak-)
$\MSO$\textbf{-definable} in $\mathcal{M}$ if there is an (weak-)
$\MSO$ formula $\psi(x_1,\dots,x_k)$ such that $A=\psi^\mathcal{M}$.

We sometimes use `$\mathcal{M}$-definable' for `$\MSO$-definable in $\mathcal{M}$'.
\end{defi}
\begin{defi}[Mutual definability] Let $(M,\bar{R}), (M,\bar{S})$ be structures with the same domain $M$. We say that they are \textbf{mutually definable} if $\bar{R}$ is definable in $(M,\bar{S})$ and $\bar{S}$ is definable in $(M,\bar{R})$.
\end{defi}

\subsection{Elements of the composition method}\label{sect:comp}

This section presents the elements of the composition method. It is used only in the proofs of Theorems \ref{MainTheorem} and \ref{MainThmOrd} -- we advise to skip it now and read together with these proofs.
\subsubsection{Types} \label{TypesSub}
We review the basic definitions and facts about types, as may be found in \cite{ThomasComp}.

In what follows, $\mathcal{L}$ is a finite relational signature: $\mathcal{L}=\{R_1,\dots,R_n\}.$ The quantifier depth of an $\MSO$ formula is defined in a standard way. The set of formulas of quantifier depth $\leq d$ and free variables among $X_0,\dots,X_{l-1}$ is denoted by $\Formula{d}{l}$.
\begin{defi}
Let $\mathcal{M}_1,\mathcal{M}_2$ be $\mathcal{L}$-structures, and $\bar{A}_1 \subseteq M_1$, $\bar{A}_2 \subseteq M_2$ be $l$-tuples of subsets. We say that \[(\mathcal{M}_1,\bar{A}_1) \equiv_d (\mathcal{M}_2,\bar{A}_2)\] if they agree on $\Formula{d}{l}$.
%\end{enumerate}
\end{defi}

\begin{obs} For every $d$ and $l$, there are finitely many formulas in $\Formula{d}{l}$, up to equivalence.
\end{obs}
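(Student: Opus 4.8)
The plan is to argue by induction on the quantifier depth $d$, proving the statement simultaneously for all arities $l$, since the passage from depth $d+1$ to depth $d$ introduces one fresh quantified variable and so moves from arity $l$ to arity $l+1$. The workhorse throughout is the elementary fact that a Boolean combination of a finite set of formulas $\{\phi_1,\dots,\phi_m\}$ is, up to logical equivalence, determined by the Boolean function on $\{0,1\}^m$ that it induces on the truth values of $\phi_1,\dots,\phi_m$; since there are only $2^{2^m}$ such functions, any collection of formulas obtained as Boolean combinations of finitely many fixed formulas splits into at most finitely many equivalence classes.

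For the base case $d=0$ I would count the atomic formulas with free variables among $X_0,\dots,X_{l-1}$. Because $\mathcal{L}=\{R_1,\dots,R_n\}$ is finite, each $R_i$ has a fixed arity, and the usable variables are finite in number, there are only finitely many atoms of the forms $X_i=X_j$, $X_i(X_j)$, $X_i\subseteq X_j$, $empty(X_i)$, $sing(X_i)$ and $R_i(X_{i_0},\dots,X_{i_{k-1}})$. Every quantifier-free formula in $\Formula{0}{l}$ is a Boolean combination of these finitely many atoms, so the workhorse fact gives finitely many equivalence classes.

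For the inductive step, assume the claim at depth $d$ for every arity and take $\varphi\in\Formula{d+1}{l}$. A routine structural induction on $\varphi$, using $\forall X\,\chi\equiv\neg\exists X\,\neg\chi$ to eliminate universal quantifiers, shows that $\varphi$ is a Boolean combination of atomic formulas and of blocks $\exists X\,\psi$ in which $\psi$ has quantifier depth $\leq d$ and free variables among $X_0,\dots,X_{l-1},X$, i.e.\ $\psi\in\Formula{d}{l+1}$ after renaming. The decisive observation is that $\psi\equiv\psi'$ implies $\exists X\,\psi\equiv\exists X\,\psi'$, so the equivalence class of such a block depends only on the class of $\psi$; by the induction hypothesis there are finitely many classes of $\psi$, hence finitely many classes of blocks. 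Thus $\varphi$ ranges over Boolean combinations of one fixed finite set of formulas, and the workhorse fact closes the induction.

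I expect no genuine difficulty in this argument; it is entirely standard. The only point demanding care is the bookkeeping of the arity across the quantifier: the induction must be stated for all $l$ at once, because resolving depth $d+1$ at arity $l$ appeals to depth $d$ at arity $l+1$, and one must check that the fresh bound variable is correctly accounted for when the normal form is read off.
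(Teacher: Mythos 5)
Your proof is correct and is precisely the standard argument; the paper does not prove this Observation at all but simply records it as a known fact from the composition-method literature (citing Thomas), where the usual proof is exactly your simultaneous induction on quantifier depth over all arities, with the Boolean-function count handling the propositional closure and the arity bump from $l$ to $l+1$ absorbing the fresh bound variable. Nothing is missing, and your cautionary remark about renaming the bound variable and tracking the arity is the only point of the argument that genuinely requires care.
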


\begin{prop}[Hintikka formulas] For every $d,l \in \omega$, there is a finite set $H^d_l \subseteq \Formula{d}{l}$ such that:
\begin{enumerate}
\item For every $\mathcal{L}$-structure $\mathcal{M}$ and an $l$-tuple of subsets $\bar{A} \subseteq M$, there is a unique $\phi_d^{\mathcal{M},\bar{A}} \in H^d_l$ that is satisfied by $(\mathcal{M},\bar{A})$.
\item If $\phi \in H^d_l$ and $\psi \in \Formula{d}{l}$, then $\phi \models \psi$ or $\phi \models \neg \psi$.

Moreover, there is an algorithm that given such $\phi$ and $\psi$, decides between $\phi \models \psi$ and $\phi \models \neg \psi$ (it might be that both hold, and in this
case only one of the options is chosen).
\end{enumerate}
\end{prop}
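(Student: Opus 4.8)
The plan is to construct the sets $H^d_l$ by induction on the quantifier depth $d$, simultaneously for all $l$, so that each $\phi \in H^d_l$ is a maximal consistent description of the depth-$d$ behaviour of a structure with $l$ distinguished subsets. Properties 1 and 2 then say precisely that these descriptions partition all structures with $l$ distinguished subsets and that each one is \emph{complete} for $\Formula{d}{l}$, i.e. decides every formula of that class. For the base case $d=0$ I would use that $\mathcal{L}$ is finite: up to the fixed free variables $X_0,\dots,X_{l-1}$ there are only finitely many atomic formulas $\alpha_1,\dots,\alpha_m$. For each sign pattern $\sigma\colon\{1,\dots,m\}\to\{+,-\}$ set $\phi_\sigma=\bigwedge_j \alpha_j^{\sigma(j)}$, with $\alpha^+=\alpha$ and $\alpha^-=\neg\alpha$, and let $H^0_l=\{\phi_\sigma\}_\sigma$. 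Every $(\mathcal{M},\bar A)$ satisfies exactly the $\phi_\sigma$ recording which atomic formulas hold, which gives property 1; and since every quantifier-free $\psi$ is a Boolean combination of the $\alpha_j$, the pattern $\sigma$ determines its truth value, which gives property 2.

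For the inductive step I would build $H^{d+1}_l$ out of $H^d_{l+1}$. The key reduction is that, after renaming the quantified variable to $X_l$, every formula of $\Formula{d+1}{l}$ is a Boolean combination of atomic formulas and formulas $\exists X_l\,\theta$ with $\theta\in\Formula{d}{l+1}$ (the dual $\forall X_l\,\theta$ being $\neg\exists X_l\,\neg\theta$). For each subset $S\subseteq H^d_{l+1}$ define
\[
\phi_S \;=\; \bigwedge_{\tau\in S}\exists X_l\,\tau \;\land\; \bigwedge_{\tau\notin S}\neg\,\exists X_l\,\tau ,
\]
and let $H^{d+1}_l$ be the set of all such $\phi_S$; it is finite since $H^d_{l+1}$ is, and each $\phi_S$ has quantifier depth $d+1$. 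Given $(\mathcal{M},\bar A)$, let $S$ be the set of $\tau\in H^d_{l+1}$ realized by some one-subset extension $(\mathcal{M},\bar A,B)$. Then $(\mathcal{M},\bar A)\models\phi_S$, and distinct $\phi_S$ are mutually exclusive, so property 1 holds. (Note $S$ is always nonempty, e.g. via $B=\emptyset$.)

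The step I expect to require the most care is property 2 for $d+1$, which is exactly the composition argument. It suffices to check that $\phi_S$ decides every atomic formula and every $\exists X_l\,\theta$ with $\theta\in\Formula{d}{l+1}$. Atomic formulas are decided because any realized type $\tau\in S$ already fixes the depth-$0$ information about $X_0,\dots,X_{l-1}$, and all extensions share that base, so $\phi_S$ entails it. For the existentials, suppose $(\mathcal{M}_1,\bar A_1)$ and $(\mathcal{M}_2,\bar A_2)$ both satisfy $\phi_S$ and that $(\mathcal{M}_1,\bar A_1)\models\exists X_l\,\theta$, witnessed by $B_1$. By the inductive hypothesis the type $\tau\in H^d_{l+1}$ of $(\mathcal{M}_1,\bar A_1,B_1)$ decides $\theta$, so $\tau\models\theta$; moreover $\tau\in S$, so $(\mathcal{M}_2,\bar A_2)\models\exists X_l\,\tau$, and any witness $B_2$ yields $(\mathcal{M}_2,\bar A_2,B_2)\models\tau$, hence $\models\theta$, whence $(\mathcal{M}_2,\bar A_2)\models\exists X_l\,\theta$. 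By symmetry and closure under Boolean combinations, any two models of the same $\phi_S$ agree on all of $\Formula{d+1}{l}$; in the language of the preceding definition this is $(\mathcal{M}_1,\bar A_1)\equiv_{d+1}(\mathcal{M}_2,\bar A_2)$, and it is equivalent to completeness of $\phi_S$.

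Finally, effectiveness follows by carrying out the construction explicitly: $H^0_l$ is computable, and the passage from $H^d_{l+1}$ to $H^{d+1}_l$ is computable since it only enumerates subsets. The decision procedure of the ``moreover'' clause is obtained by recursion on $\psi$: at depth $0$ one evaluates the Boolean combination under the sign pattern of $\phi_\sigma$; at depth $d+1$ one rewrites $\psi$ as a Boolean combination of atomic formulas and existentials $\exists X_l\,\theta$, and decides each existential by testing, for the given $\phi=\phi_S$, whether some $\tau\in S$ satisfies $\tau\models\theta$ (using the depth-$d$ decision procedure recursively); atomic conjuncts are read off any fixed $\tau\in S$. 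This is precisely the algorithm asserted, completing the plan.
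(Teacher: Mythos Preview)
Your construction is the standard inductive definition of Hintikka formulas and is correct. Note, however, that the paper does not actually prove this proposition: subsection~\ref{TypesSub} is explicitly a review of known material (``We review the basic definitions and facts about types, as may be found in \cite{ThomasComp}''), and the proposition is stated without proof, followed immediately by a corollary. So there is no ``paper's own proof'' to compare against; your argument simply supplies the omitted standard proof, and it does so faithfully to the usual treatment (e.g.\ as in \cite{ThomasComp}).
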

The elements of $H^d_l$ are called $(d,l)$\textbf{-Hintikka formulas}.
\begin{cor}
Let $\mathcal{M}_1,\mathcal{M}_2$ be $\mathcal{L}$-structures, and $\bar{A}_1 \subseteq M_1$, $\bar{A}_2 \subseteq M_2$ be $l$-tuples of subsets. Then \[\phi_d^{\mathcal{M}_1,\bar{A_1}} \equiv \phi_d^{\mathcal{M}_2,\bar{A_2}}\] if and only if \[(\mathcal{M}_1,\bar{A}_1) \equiv_d (\mathcal{M}_2,\bar{A}_2).\]
\end{cor}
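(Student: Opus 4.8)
The plan is to derive the statement directly from the two properties of the Hintikka formulas established in the preceding proposition, together with the basic fact that each Hintikka formula is satisfied by its own structure, namely $(\mathcal{M},\bar{A}) \models \phi_d^{\mathcal{M},\bar{A}}$ (this is the existence half of property~(1)). Recall also that $H^d_l \subseteq \Formula{d}{l}$, so every $\phi_d^{\mathcal{M},\bar{A}}$ is itself a formula of quantifier depth $\le d$ with free variables among $X_0,\dots,X_{l-1}$; this observation is what makes each direction go through.

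For the forward direction I would assume $\phi_d^{\mathcal{M}_1,\bar{A}_1} \equiv \phi_d^{\mathcal{M}_2,\bar{A}_2}$ and fix an arbitrary $\psi \in \Formula{d}{l}$, the goal being to show that $(\mathcal{M}_1,\bar{A}_1)$ and $(\mathcal{M}_2,\bar{A}_2)$ agree on $\psi$. By property~(2) the Hintikka formula $\phi_d^{\mathcal{M}_1,\bar{A}_1}$ either entails $\psi$ or entails $\neg\psi$. In the first case, since both structures satisfy $\phi_d^{\mathcal{M}_1,\bar{A}_1}$ — the first because it is its own Hintikka formula, the second via the assumed equivalence $\phi_d^{\mathcal{M}_1,\bar{A}_1}\equiv\phi_d^{\mathcal{M}_2,\bar{A}_2}$ and the fact that $(\mathcal{M}_2,\bar{A}_2)\models\phi_d^{\mathcal{M}_2,\bar{A}_2}$ — both are forced to satisfy $\psi$. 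In the second case both are forced to satisfy $\neg\psi$. Either way they agree on $\psi$, and since $\psi$ was arbitrary this yields $(\mathcal{M}_1,\bar{A}_1) \equiv_d (\mathcal{M}_2,\bar{A}_2)$.

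For the backward direction I would use uniqueness. Assuming $(\mathcal{M}_1,\bar{A}_1) \equiv_d (\mathcal{M}_2,\bar{A}_2)$, the two structures agree on every formula of $\Formula{d}{l}$, and in particular on $\phi_d^{\mathcal{M}_1,\bar{A}_1}$. Since $(\mathcal{M}_1,\bar{A}_1)\models\phi_d^{\mathcal{M}_1,\bar{A}_1}$, agreement gives $(\mathcal{M}_2,\bar{A}_2)\models\phi_d^{\mathcal{M}_1,\bar{A}_1}$. But property~(1) says that $\phi_d^{\mathcal{M}_2,\bar{A}_2}$ is the \emph{unique} element of $H^d_l$ satisfied by $(\mathcal{M}_2,\bar{A}_2)$; as $\phi_d^{\mathcal{M}_1,\bar{A}_1}\in H^d_l$ also holds at $(\mathcal{M}_2,\bar{A}_2)$, uniqueness forces $\phi_d^{\mathcal{M}_1,\bar{A}_1}=\phi_d^{\mathcal{M}_2,\bar{A}_2}$, and in particular $\phi_d^{\mathcal{M}_1,\bar{A}_1}\equiv\phi_d^{\mathcal{M}_2,\bar{A}_2}$.

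The argument is routine and I do not anticipate a genuine obstacle; the only points needing care are that the two clauses of the proposition play complementary roles — completeness (property~(2)) drives the forward direction while uniqueness (property~(1)) drives the backward one — and the mild bookkeeping that, within the finite set $H^d_l$, uniqueness delivers literal identity, so that \textbf{equivalent} and \textbf{equal} coincide for these formulas and the symbol $\equiv$ in the statement is read accordingly.
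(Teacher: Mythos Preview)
Your proof is correct and is exactly the standard unpacking of why the corollary follows from the Hintikka-formula proposition; the paper itself gives no proof, treating the statement as an immediate consequence of properties~(1) and~(2), and your argument is precisely the routine verification the paper leaves implicit.
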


\begin{defi}[$d$-type of a structure]
Let $\mathcal{M}$ be an $\mathcal{L}$-structure, and $\bar{A} \subseteq M$ an $l$-tuple of subsets. The $d$\textbf{-type} of $(\mathcal{M},\bar{A})$ is $\phi_d^{\mathcal{M},\bar{A}}$. It is also denoted by $tp^d(\mathcal{M},\bar{A})$.
\end{defi}

Thus, the type $tp^d(\mathcal{M},\bar{A})$ effectively determines the formulas $\phi(\bX)$ of quantifier depth at most $d$ that hold in $(\mathcal{M},\bar{A})$.

\begin{defi} Let $\mathcal{L}=\{<\}$. An $\mathcal{L}$-structure with distinguished elements that is a linear order is called a \textbf{labelled linear order}. In other words, a labelled linear order is $(\mathcal{M},\bar{A})$ such that $\mathcal{M}=(M,<)$ is a linear order.
\end{defi}

\begin{defi}[Concatenation of labelled linear orders] Let $(\mathcal{M}_1,\bar{A}_1), (\mathcal{M}_2, \bar{A}_2)$ be labelled linear orders, where $\bar{A}_1$ and $\bar{A}_2$ are $l$-tuples of subsets. The labelled linear order $(\mathcal{M}_1,\bar{A}_1) + (\mathcal{M}_2, \bar{A}_2)$ is $(M,<,\bar{A})$ where:
\begin{enumerate}
\item The domain $M$ is the disjoint union of $M_1$ and $M_2$.
\item $< \ :=\ <^{\mathcal{M}_1} \cup <^{\mathcal{M}_2} \cup (M_1 \times M_2)$.
\item If $\bar{A}_1=(A_0^1,\dots,A_{l-1}^1)$ and $\bar{A}_2=(A_0^2,\dots,A_{l-1}^2)$, then $\bar{A}=(A_0^1 \cup A_0^2,\dots,A_{l-1}^1 \cup A_{l-1}^2)$.
\end{enumerate}
\end{defi}

\begin{prop} \label{EquivD} $\equiv_d$ is a congruence with respect to concatenation of labelled linear orders, which is, if $(\mathcal{M}_1,\bar{A}_1) \equiv_d (\mathcal{M}'_1,\bar{A}'_1)$ and $(\mathcal{M}_2,\bar{A}_2) \equiv_d (\mathcal{M}'_2,\bar{A}'_2)$, then \[\big((\mathcal{M}_1,\bar{A}_1) + (\mathcal{M}_2, \bar{A}_2)\big) \equiv_d \big((\mathcal{M}'_1,\bar{A}'_1) + (\mathcal{M}'_2, \bar{A}'_2)\big).\]
\end{prop}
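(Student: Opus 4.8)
The statement is the classical additivity (composition) theorem for $\MSO$ over linear orders, and the natural tool is the Ehrenfeucht--Fra\"iss\'e game. The plan is to recall that $\equiv_d$ is characterized by a back-and-forth game and then to build a winning strategy on the sum out of the winning strategies on the two summands. Since in the present formalism all variables are monadic, a round of the $d$-round game is a \emph{set move}: Spoiler picks a subset of one of the two structures and Duplicator answers with a subset of the other; starting from the distinguished tuples $\bar{A}_1,\bar{A}'_1$ (resp. $\bar{A}_2,\bar{A}'_2$), Duplicator wins the play iff after $d$ rounds the two resulting tuples of sets satisfy the same atomic formulas over $\{<,=,\subseteq,empty,sing\}$. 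The standard fact (see \cite{ThomasComp}) is that Duplicator has a winning strategy in this $d$-round game iff the two structures are $\equiv_d$; equivalently, $(\mathcal{M},\bar{A})\equiv_{d+1}(\mathcal{M}',\bar{A}')$ holds iff for every set $B$ in one structure there is a set $B'$ in the other with $(\mathcal{M},\bar{A},B)\equiv_d(\mathcal{M}',\bar{A}',B')$, and symmetrically.

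Given this, I would fix Duplicator's winning strategies $\sigma_1,\sigma_2$ witnessing the two hypotheses and define a strategy $\sigma$ for Duplicator on the two sums using the unique decomposition of any subset of the disjoint union: every $B\subseteq M_1\sqcup M_2$ splits as $B=(B\cap M_1)\sqcup(B\cap M_2)$. When Spoiler plays a set $B$ in, say, $(\mathcal{M}_1,\bar{A}_1)+(\mathcal{M}_2,\bar{A}_2)$, Duplicator splits $B$ into its two components, lets $\sigma_1$ answer the first component inside the $\mathcal{M}_1$-versus-$\mathcal{M}'_1$ game and $\sigma_2$ answer the second inside the $\mathcal{M}_2$-versus-$\mathcal{M}'_2$ game, and replies with the union of the two answers in $(\mathcal{M}'_1,\bar{A}'_1)+(\mathcal{M}'_2,\bar{A}'_2)$. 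This maintains the invariant that the play projects to two plays consistent with $\sigma_1$ and $\sigma_2$, so after $d$ rounds each component play is won by Duplicator.

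It then remains to check that componentwise wins assemble into a win on the sum, i.e. that the final tuples of sets agree on all atoms. The Boolean-flavoured atoms are decided componentwise: a set is empty (resp. a singleton, resp. contained in another, resp. equal to another) in the sum iff the corresponding condition holds of its two projections (for $sing$: a singleton in one summand and empty in the other), and these conditions are matched on both sides because the component plays are won. The one genuinely global atom is the order $X_i<X_j$ between two singletons. Here I would use the shape of the concatenation: if both singletons lie in the same summand the relation is settled inside that summand and is therefore preserved; if they lie in different summands the relation is forced by the clause $(M_1\times M_2)\subseteq\,<$ (and its absence in the reverse direction), and this verdict depends only on \emph{which} summand each singleton lands in, which again is matched on both sides. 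Hence the final position is winning for Duplicator and the two sums are $\equiv_d$. The same argument can be phrased as an induction on $d$ via the one-round characterization above, with the inductive hypothesis supplying the congruence at level $d$ for the structures extended by the split sets.

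The main obstacle is precisely the bookkeeping around the order atom and singletons across the seam between the two summands: one must ensure that the split-and-recombine strategy preserves not only whether a move is a singleton but also in which summand that singleton sits, so that cross-summand comparisons receive the same verdict on both sides. This is exactly the point where the \emph{ordered} structure of the sum matters; concatenation adds no relational tuples spanning the two parts other than those forced by $<$, and it is this that makes the componentwise reconstruction of the atomic type correct.
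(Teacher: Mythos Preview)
Your argument is correct and is precisely the classical Ehrenfeucht--Fra\"iss\'e proof of the Feferman--Vaught style composition theorem for ordered sums. The paper does not give its own proof of this proposition: it is stated as a standard fact from the composition-method toolbox (with a reference to \cite{ThomasComp}) and used only to justify that addition of $d$-types is well-defined. Your split-and-recombine strategy and the case analysis of the atoms---in particular the observation that the only atom spanning the seam is $<$ between two singletons, whose truth value is forced by which summand each singleton lies in---is exactly the textbook argument. One small addition for completeness in this paper's syntax: the atom $X_i(X_j)$ (singleton membership) should be listed alongside the other ``Boolean-flavoured'' atoms; it is decided componentwise for the same reason as $sing$ and $\subseteq$, so no new idea is needed.
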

That justifies the following definition:
\begin{defi}[Addition of types] \label{AdditionOfTypes}  Let $t,t_1,t_2$ be $d$-types of $l$ variables over $\mathcal{L}=\{<\}$.

We will say that $t=t_1+t_2$ if for any labelled linear orders $(\mathcal{M}_1,\bar{A}_1), (\mathcal{M}_2,\bar{A}_2)$, if \[tp^d(\mathcal{M}_1,\bar{A}_1)=t_1;\ tp^d(\mathcal{M}_2,\bar{A}_2)=t_2\] then \[tp^d((\mathcal{M}_1,\bar{A}_1)+(\mathcal{M}_2,\bar{A}_2))=t.\]
This is well defined due to Proposition \ref{EquivD}.
\end{defi}

The next  theorem is a standard consequence of composition methods for $\MSO$ over linear orders:

\begin{thm} \label{BinaryOrdinalParam}
Let $(\mathcal{M},\bar{A})$ be a labelled linear order, and let  $R(x,y)$ be a definable binary relation on $M$. Then there is $n \in \omega$ such that $R(x,y)$ is effectively determined by:
\begin{itemize}
\item $x=y$,
\item $x<_{\mathcal M}y$,
\item $tp^n(M\cap [-\infty,min\{x,y\}),<,\bar{A})$,
\item $tp^n(M\cap [min\{x,y\},max\{x,y\}),<,\bar{A})$,
\item $tp^n(M\cap [max\{x,y\},\infty),<,\bar{A})$.
\end{itemize}
\end{thm}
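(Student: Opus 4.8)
The plan is to use the composition method set up above: reduce the truth of $R(x,y)$ to the value of a single $(d,l+2)$-type of the whole order in which $x$ and $y$ are marked, and then split that order at $x$ and $y$.

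First I would fix an $\MSO$ formula $\rho(x,y)$ of quantifier depth $d$ defining $R$, where $x,y$ are first-order variables. Marking $x$ and $y$ by singleton predicates $P_x,P_y$, the condition $R(x,y)$ becomes a sentence $\hat\rho\in\Formula{d}{l+2}$ over the signature $\{<\}$ with the $l+2$ distinguished predicates $\bar A,P_x,P_y$; thus $R(x,y)$ holds iff $tp^d(M,<,\bar A,\{x\},\{y\})\models\hat\rho$. By the Hintikka-formula proposition this entailment is decidable once the type is known, so it suffices to show that this $(d,l+2)$-type is effectively determined by the data in the statement.

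Consider the case $x<y$, so $\min\{x,y\}=x$ and $\max\{x,y\}=y$ (the case $y<x$ is symmetric, and $x=y$ is the degenerate subcase in which the middle interval is empty; the flags \enquote{$x=y$} and \enquote{$x<y$} tell us which case we are in). The half-open intervals $[-\infty,x)$, $[x,y)$, $[y,\infty)$ partition $M$, with $x$ the least element of $[x,y)$ and $y$ the least element of $[y,\infty)$. Writing $\bar A\!\restriction$ for the restriction of the labels, we obtain the decomposition of labelled linear orders
\[(M,<,\bar A,\{x\},\{y\})\;=\;L + Mid + Rt,\]
where $L=([-\infty,x),<,\bar A\!\restriction,\emptyset,\emptyset)$, $Mid=([x,y),<,\bar A\!\restriction,\{x\},\emptyset)$ and $Rt=([y,\infty),<,\bar A\!\restriction,\emptyset,\{y\})$. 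By the congruence property of $\equiv_d$ under concatenation together with Definition \ref{AdditionOfTypes} (applied twice, using associativity of concatenation), the type of the full order equals the sum $tp^d(L)+tp^d(Mid)+tp^d(Rt)$, and this sum is effectively computable since there are only finitely many $(d,l+2)$-types.

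It then remains to express each piece-type as a computable function of the given $\bar A$-type of the corresponding interval. For $L$ both extra predicates are empty, so every depth-$d$ formula over the $(l+2)$-signature is equivalent, after replacing each atom $P_x(z),P_y(z)$ by $\mathit{false}$, to a depth-$d$ formula over the $l$-signature; hence $tp^d(L)$ is computable from $tp^d([-\infty,x),<,\bar A)$. For $Mid$ (resp.\ $Rt$) the predicate $P_x$ (resp.\ $P_y$) marks the least element, which is definable by a fixed formula $\delta(z)$ of some quantifier depth $c$, while the other extra predicate is empty; substituting $\delta$ for the relevant atom and $\mathit{false}$ for the other raises the quantifier depth by at most $c$, so $tp^d(Mid)$ and $tp^d(Rt)$ are computable from $tp^{d+c}([x,y),<,\bar A)$ and $tp^{d+c}([y,\infty),<,\bar A)$. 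Taking $n=d+c$ and composing the effective steps --- reduction of each interval type, type addition, and the final decision $\phi\models\hat\rho$ --- yields an algorithm that determines $R(x,y)$ from the listed data. I expect this last reduction to be the main technical point: the book-keeping that marking a definable (least) element, or a predicate known to be empty, costs only a bounded increase in quantifier depth and can be carried out effectively at the level of types.
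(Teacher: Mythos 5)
Your proof is correct and takes the standard composition-method route; the paper itself states Theorem \ref{BinaryOrdinalParam} without proof, as a fact imported from \cite{ThomasComp}, and your argument (decompose the marked order at $x$ and $y$, add the types of the three pieces, and eliminate the singleton markers at the cost of a bounded increase in quantifier depth since each marks a definable least element or is empty) is exactly the intended one. The only point stated a little too quickly is that effectiveness of type addition does not follow merely from the finiteness of $H^d_{l+2}$, but this is the standard effective content of the composition theorem and is taken for granted by the paper as well.
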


\subsubsection{Composition theorems for the $k$-ary tree} \label{CompSub} \hfill

In the following subsection, that is based on \cite{LifShelWO}, we consider the $k$-ary tree $T_k$ with its partial order $<$ and parameters $\bp \subseteq T_k$. The parameters \textbf{may or may not include} the successors $\bE$.

\begin{defi}
\begin{enumerate}
\item $P \subseteq T_k$ is a \textbf{path} if it is linearly ordered, and for every $x \in P$, if $x$ is not a maximum of $P$ then exactly one of its successors is in $P$.
\item $B\subseteq T_k$ is a \textbf{branch} if it is a maximal linearly ordered subset.
\item Let $z\in T_k$. The sub-tree $\{x\ |\ x \geq z\}$ is denoted by $T_{\geq z}$.
\end{enumerate}
\end{defi}

We provide a characterization of the definable binary relations of $(T_k,<,\bp)$ in terms of types of certain paths and sub-trees.

We use the notation $\mathcal{T}^d_l$ for the set of $d$-types with $l$ free variables over the signature $\{<\}$.

\begin{defi}[$d$-expansion of a path]\label{def:expansion-path}

Given $z < z' \in T_k$ and $x \in [z,z')$, the forest $T_{>x\perp [z,z')}$ is the union of $T_{\geq x \cdot i}$ for every $i$ such that $x \cdot i \not\in [z,z']$. Given $\tau \in \mathcal{T}^d_{l}$, let $Q^\tau_{z,z',\bp}$ be such that $x \in Q^\tau_{z,z',\bp}$ if \[tp^d(T_{>x\perp [z,z')},<,\bp)=\tau.\] By $\bq^{d}_{z,z',\bp}$ we refer to a vector of all predicates $Q^\tau_{z,z',\bp}$ for any $\tau \in \mathcal{T}^d_{l}$, where $l=|\bp|$.

The $d$\textbf{-expansion of the path} $[z,z')$ of the parameterized tree $(T_k,<,\bp)$ is \[([z,z'),<,\bp,\bq^{d}_{z,z',\bp}).\]
\end{defi}

\smallskip

A multi-set stands for a set with repetitions.

\begin{thm} \label{BinaryTkParam} (Definable binary relations on $(T_k,<,\bp)$) Let $R(x,y)$ be a definable binary relation on $(T_k,<,\bp)$. There are computable $n,d \in \omega$ such that for any $x,y \in T_k$ and $z(x,y)$ their largest common prefix, $R(x,y)$ is effectively determined by \[tp^n([\epsilon,z),<,\bp,\bq^{d}_{\epsilon,z,\bp}),\]\[ tp^n([z,x),<,\bp,\bq^{d}_{z,x,\bp}), tp^n(T_{\geq x},<,\bar{P}),\ \]\[ tp^n([z,y),<,\bp,\bq^{d}_{z,y,\bp}), tp^n(T_{\geq y},<,\bar{P}), \] and the multi-set of \[\ \{tp^n(T_{\geq z \cdot j},<,\bar{P})\ |\ (x \not \geq z \cdot j) \land (y \not \geq z \cdot j)\}.\]
\end{thm}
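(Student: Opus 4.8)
The plan is to run a composition argument parallel to the linear-order case of Theorem~\ref{BinaryOrdinalParam}, flattening the two branches from the root to $x$ and to $y$ into labelled linear orders via the $d$-expansions of Definition~\ref{def:expansion-path}. First I would reduce the claim to a statement about types. As $R$ is definable, fix an $\MSO$ formula $\psi(x,y)$ of quantifier depth $m$ with $R=\psi^{(T_k,<,\bp)}$; then the truth of $\psi(x,y)$ is determined by $tp^{m}(T_k,<,\bp,\{x\},\{y\})$, the $m$-type of the tree with $x,y$ adjoined as singleton parameters. So it suffices to exhibit computable $n\geq m$ and $d$ making this marked type a computable function of the listed data, the passage back to the truth value of $\psi$ — and hence to $R(x,y)$ — being effective by the entailment algorithm for Hintikka formulas.

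Next I would set up the decomposition. Let $z$ be the largest common prefix of $x$ and $y$; in the generic case $x\geq z\cdot i$ and $y\geq z\cdot j$ with $i\neq j$. The marked tree splits as the stem $[\epsilon,z)$ sitting above the subtree $T_{\geq z}$, and $T_{\geq z}$ is its root $z$ together with the $k$ immediate subtrees $T_{\geq z\cdot c}$: the one carrying $x$ ($c=i$), the one carrying $y$ ($c=j$), and the side subtrees ($c\neq i,j$). The two arms $[z,x)$ and $[z,y)$ descend towards $x$ and $y$; taking $d$-expansions presents each arm as a labelled linear order whose nodes are tagged by the $d$-types of the forests hanging off the arm. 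Note $z\in[z,x)\cap[z,y)$, so the arms also record $z$'s parameter membership; splitting $z$ off by a one-point linear composition separates that datum from the expansion of the true immediate subtree and lets me discard the forest label the arm attaches at $z$, which would otherwise double-count the $y$-subtree and the side subtrees.

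I would then assemble the marked type in three composition moves. Vertically along each arm — which may be infinite — the linear-order composition for $\equiv_n$ under concatenation (Definition~\ref{AdditionOfTypes}), applied in the signature enriched by the labels $\bq^{d}$ exactly as in Theorem~\ref{BinaryOrdinalParam}, combines the $[z\cdot i,x)$-part of $tp^n([z,x),<,\bp,\bq^{d}_{z,x,\bp})$ with $tp^n(T_{\geq x},<,\bp)$ to give $tp^n(T_{\geq z\cdot i},<,\bp,\{x\})$, and likewise $tp^n(T_{\geq z\cdot j},<,\bp,\{y\})$. Horizontally at $z$, since the degree $k$ is finite, $tp^n(T_{\geq z},<,\bp,\{x\},\{y\})$ is a computable function of $z$'s parameter membership, those two marked immediate-subtree types, and the multiset $\{tp^n(T_{\geq z\cdot c},<,\bp)\ |\ x\not\geq z\cdot c \land y\not\geq z\cdot c\}$ — a multiset suffices precisely because siblings are interchangeable and finitely many. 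Finally, a vertical composition along the stem combines $tp^n([\epsilon,z),<,\bp,\bq^{d}_{\epsilon,z,\bp})$ with $tp^n(T_{\geq z},<,\bp,\{x\},\{y\})$ to yield $tp^n(T_k,<,\bp,\{x\},\{y\})$, as required.

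The main obstacle is the vertical composition lemma that legitimises flattening a branch into the $d$-expansion of a path: one must show that, for a computable choice of $d$ depending on $n$, the $n$-type of a tree relative to a marked branch is determined by the $n$-type of the corresponding $d$-expanded path as a labelled linear order. This is where the interplay between the two depths $d$ and $n$ is pinned down, and where the composition over the (possibly infinite) order type of the path — rather than a naive finite gluing — is essential; it is the tree analogue of the content behind Theorem~\ref{BinaryOrdinalParam} and follows the method of \cite{LifShelWO}. The remaining work is routine: checking the finitely many degenerate configurations ($x=y$, or $x$ a proper ancestor of $y$ so that $z=x$ and one arm is empty), and verifying that each combination function invoked above is not merely existent but computable, which follows from the decidability built into the Hintikka-formula machinery.
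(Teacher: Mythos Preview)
The paper states Theorem~\ref{BinaryTkParam} in the preliminaries as a background result without proof, attributing the method to \cite{LifShelWO}; there is thus no proof in the paper to compare your proposal against. Your outline is the standard composition-method argument for this kind of statement and is essentially correct, with the three composition moves (vertical along each arm, horizontal at $z$ using the multiset of side-subtree types, vertical along the stem) giving the right structure.

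One small point of phrasing: the step where you ``split $z$ off by a one-point linear composition'' is not literally an application of the composition congruence, since composition computes the type of a concatenation from the types of its summands and does not in general run in reverse. What actually justifies recovering both the labels at $z$ and $tp^{n'}([z\cdot i,x),<,\bp,\bq^{d}_{z\cdot i,x,\bp})$ from $tp^{n}([z,x),<,\bp,\bq^{d}_{z,x,\bp})$ is that the first element and its complement are uniformly definable in the labelled linear order by formulas of bounded depth, so a standard interpretation argument yields the extraction at the cost of a bounded, computable drop in quantifier depth (and the $\bq^d$-labels on $[z\cdot i,x)$ agree with those computed relative to $[z,x)$, since the off-path forests coincide there). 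Since you are free to choose $n$ large enough at the outset to absorb all such losses and still dominate $m$, this causes no difficulty.
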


\subsection{Automata}

\begin{defi} Let $\Sigma$ be a finite alphabet.

A \textbf{deterministic parity automaton} on $\omega$-words over $\Sigma$ is $\mathcal{A}=(Q,q_0,\delta:Q\times\Sigma \to Q, c:Q \to \omega)$ where $Q$ is a finite set of states, $q_0 \in Q$ is an initial state, $\delta$ is a transition function, and $c$ is a coloring of the states.

Given an $\omega$-word $x$ over $\Sigma$, the \textbf{run} of $\mathcal{A}$ on $x$ is an $\omega$-word over $Q$, $(q_0,q_1,\dots)$, such that $q_0$ is the initial state, and for every $n$, $q_{n+1} = \delta(q_n,x(n))$.

An $\omega$-word $x$ over $\Sigma$ is \textbf{accepted} by $\mathcal{A}$ if in the run of $\mathcal{A}$ on $x$ the minimal state color that repeats infinitely many times is even. The \textbf{language} of $\mathcal{A}$, denoted by $\mathcal{L}(\mathcal{A})$, is the set of $\omega$-words that are accepted by $\mathcal{A}$.
\end{defi}

\begin{defi} \label{CharFun}
Let $M$ be a set, and $M_1,\dots,M_n$ be subsets of $M$. The \textbf{characteristic function} of $M_1,\dots,M_n$ is \[f_{M_1,\dots,M_n}: M \to 2^{\{1,\dots,n\}}\] defined by mapping each $m \in M$ to the indices of the subsets it belongs to.
\end{defi}

The classical B\"uchi-McNaughton theorem establishes the equivalence between Monadic Second-Order ($\MSO$) logic and deterministic parity automata over infinite words:

\begin{thm}[B\"uchi \cite{Buchi}, McNaughton \cite{McNaughton66}] \label{AutMSOEqOmega}
Let $\phi(X_1,\dots,X_n)$ be an $\MSO$ formula over $\omega$. There is a deterministic parity automaton $\mathcal{A}_\phi$ over $2^{\{1,\dots,n\}}$ such that for every $M_1,\dots,M_n \subseteq \omega$: \[\omega\models \phi(M_1,\dots,M_n)\] if and only if $\mathcal{A}_\phi$ accepts $f_{M_1,\dots,M_n}$.
Moreover, $\mathcal{A}_\phi$ is computable from $\phi$.
\end{thm}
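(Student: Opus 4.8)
The plan is to prove the statement by structural induction on $\phi$, strengthening the claim to the following invariant: for every subformula $\psi$ with free variables among $X_1,\dots,X_n$, the language
\[
L_\psi \;:=\; \{\, f_{M_1,\dots,M_n} \mid \omega \models \psi(M_1,\dots,M_n)\,\} \;\subseteq\; \big(2^{\{1,\dots,n\}}\big)^\omega
\]
is recognized by a deterministic parity automaton computable from $\psi$. I fix the convention that a formula with free variables among $X_1,\dots,X_n$ is read over the alphabet $\Sigma_n = 2^{\{1,\dots,n\}}$, and I record two routine alphabet operations that preserve deterministic parity recognizability: \emph{cylindrification}, padding the alphabet with an ignored coordinate, used to align the alphabets of two conjuncts and of subformulas mentioning fewer variables than their context; and \emph{coordinate erasure}, used for the quantifier step below.

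For the atomic formulas the automata are immediate. The conditions $empty(X_i)$ and $X_i \subseteq X_j$ are safety properties, recognized by a complete two-state automaton with an absorbing odd-colored rejecting state. The conditions $sing(X_i)$, $X_i = X_j$, $X_i(X_j)$, and the order atom (asserting that two singletons stand in the relation $<$) are recognized by small automata that track the finitely many relevant positions and then require the run to settle into an even-colored sink. For $\neg \psi$ I use that the automaton for $\psi$ is deterministic and complete, hence has a unique total run on every input; incrementing every color by one flips the parity of the least color occurring infinitely often, and so yields an automaton for the complement. For $\psi_1 \land \psi_2$ and $\psi_1 \lor \psi_2$ I take the synchronized product of the two deterministic automata. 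The product is deterministic but carries a conjunction (respectively disjunction) of two parity conditions, i.e.\ a Muller/Rabin--Streett condition; a final conversion back to a parity condition, via an index-appearance-record construction, restores the invariant.

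The case $\exists X_i\,\psi$ is where the real difficulty lies, and I expect it to be the main obstacle. Semantically $L_{\exists X_i\,\psi}$ is exactly the image of $L_\psi$ under erasing the $i$-th coordinate of the alphabet, because a witness for $X_i$ amounts to a choice, at each position, of the erased bit. Applying coordinate erasure to the deterministic automaton for $\psi$ therefore yields a \emph{nondeterministic} parity automaton recognizing $L_{\exists X_i\,\psi}$, and to restore the invariant I must determinize it. This is the hard step: the subset construction that works for finite words fails for $\omega$-automata, and one must invoke McNaughton's theorem---concretely, Safra's determinization---to turn a nondeterministic parity (equivalently B\"uchi) automaton into an equivalent deterministic parity automaton. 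One could instead follow B\"uchi's original route via an explicit complementation lemma for nondeterministic B\"uchi automata, but the determinization route delivers the deterministic parity automaton demanded by the statement directly.

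Finally, all the remaining constructions---products, color-shift complementation, cylindrification, coordinate erasure, and the Rabin-to-parity conversion---are elementary and manifestly effective, and the determinization procedure is effective as well. Hence the composite construction assigning to each $\phi$ its automaton is effective, so $\mathcal{A}_\phi$ is computable from $\phi$ by induction, which gives the ``moreover'' clause.
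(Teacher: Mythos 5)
The paper does not prove this statement at all: it is quoted in the preliminaries as a classical result, with a citation to B\"uchi, and all of the work is deferred to the literature. So there is no in-paper proof to compare against; what you have written is a reconstruction of the standard argument, and it is correct. Your induction invariant (every subformula yields a computable deterministic parity automaton over the appropriate power-set alphabet), the handling of atoms as safety or ``finitely many tracked positions'' conditions, complementation by shifting colors of a deterministic complete automaton, products with an index-appearance-record conversion from the resulting Rabin/Streett condition back to parity, and projection followed by determinization at the existential step are exactly the canonical route. You also correctly locate the one genuinely hard ingredient: the quantifier step forces determinization of a nondeterministic $\omega$-automaton, i.e.\ McNaughton's theorem (via Safra's construction), which is why the deterministic-parity form stated here is strictly stronger than B\"uchi's original nondeterministic-automaton equivalence --- B\"uchi himself went through a complementation lemma instead, as you note. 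The only caveat worth recording is attributional rather than mathematical: the theorem as stated (with \emph{deterministic} parity automata) is really B\"uchi's translation combined with McNaughton's determinization, so a self-contained proof must cite or prove the latter, which your sketch does but the paper's citation to B\"uchi alone glosses over.
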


\begin{defi} Let $\Sigma$ be a finite alphabet.

A \textbf{parity} $k$\textbf{-tree automaton} over $\Sigma$ is $\mathcal{A}=(Q,q_0,\Delta \subseteq Q\times\Sigma \times Q^k, c:Q \to \omega)$ where $Q$ is a finite set of states, $q_0 \in Q$ is an initial state, $\Delta$ is a transition relation, and $c$ is a coloring of the states.

A $\Sigma$\textbf{-labelled} $k$\textbf{-ary tree} is a function $\mu:T_k \to \Sigma$. Given a $\Sigma$-labelled $k$-ary tree $\mu:T_k \to \Sigma$, a \textbf{run} of $\mathcal{A}$ on $\mu$ is a $Q$-labelled $k$-ary tree $\nu$ in which the root is labelled by $q_0$, and for every $t \in T_k$, $(\nu(t),\mu(t),\nu(t\cdot 0),\dots,\nu(t\cdot (k-1))) \in \Delta$.

A $\Sigma$-labelled $k$-ary tree $\mu:T_k \to \Sigma$ is \textbf{accepted} by $\mathcal{A}$ if there is $\nu$ a run of $\mathcal{A}$ on $\mu$ such that on each branch of $\nu$, the minimal state color that repeats infinitely many times is even. The \textbf{language} of $\mathcal{A}$, denoted by $\mathcal{L}(\mathcal{A})$, is the set of $\Sigma$-labelled $k$-ary trees that are accepted by $\mathcal{A}$.
\end{defi}

\begin{thm}[Rabin \cite{Rabin69}] \label{AutLogEqTrees}
Let $\phi(X_1,\dots,X_n)$ be an $\MSO$ formula over $\mathcal{T}_k$. There is a parity $k$-tree automaton $\mathcal{A}_\phi$ over $2^{\{1,\dots,n\}}$ such that for every $M_1,\dots,M_n \subseteq T_k$: \[\mathcal{T}_k \models \phi(M_1,\dots,M_n)\] if and only if $\mathcal{A}_\phi$ accepts $f_{M_1,\dots,M_n}$. Moreover, $\mathcal{A}_\phi$ is computable from $\phi$.
\end{thm}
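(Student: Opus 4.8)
The plan is to proceed by induction on the structure of the formula $\phi$, showing that the class of tree languages recognized by parity $k$-tree automata is effectively closed under the operations corresponding to the $\MSO$ connectives and quantifiers. Throughout, an input for a formula $\phi(X_1,\dots,X_n)$ is encoded as a $2^{\{1,\dots,n\}}$-labelled $k$-ary tree via the characteristic function $f_{M_1,\dots,M_n}$, so that recognizing the models of $\phi$ is literally recognizing a tree language. Since the theorem concerns $\MSO$ over $\mathcal{T}_k$, the atomic relations are the prefix order $<$ and the successor predicates $Succ_0,\dots,Succ_{k-1}$, together with $\subseteq$, $sing$, $empty$, and singleton equality, each of which is a local or path-based condition.

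First I would handle the atomic formulas by direct construction. For instance, $Succ_i(X_j)$ forces $X_j$ to mark exactly those nodes whose last letter is $i$, which an automaton checks locally from the label of each node; $sing(X_j)$ requires that exactly one node carries index $j$, verified by nondeterministically guessing the unique marked node and confirming that no other node is marked; and $X_j \subseteq X_{j'}$ and $X_i = X_j$ are similar guess-and-check conditions. The prefix relation $<$ between two singletons is checked by tracking the marked ancestor along a branch.

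Next come the inductive steps. For $\phi_1 \vee \phi_2$ and $\phi_1 \wedge \phi_2$ I would invoke closure of parity tree languages under union and intersection: union by an initial nondeterministic choice between the two automata, and intersection by a product construction that merges the two priority functions into a single parity condition (for example via latest-appearance-record bookkeeping, or by first passing through equivalent Rabin/Streett conditions). For the existential quantifier $\exists X_{n+1}\,\phi$ I would use projection: given the automaton for $\phi$ over $2^{\{1,\dots,n+1\}}$, erase the $(n+1)$-th coordinate from the alphabet and let the automaton guess it, which again yields a nondeterministic parity $k$-tree automaton. Universal quantification and negation both reduce to complementation.

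The hard part, carrying essentially the entire technical weight of the theorem, is closure under complementation, needed for $\neg\phi$. The cleanest route I would take is through alternating parity tree automata: translate $\phi$ compositionally into an alternating automaton, where conjunction/disjunction and existential/universal quantification map directly onto the two players' choices, so that complementation becomes the trivial dualization (swap the players' moves and shift all priorities by one). The real obstacle is then the \emph{simulation theorem}: every alternating parity tree automaton can be converted to an equivalent nondeterministic one. This step rests on two deep ingredients, namely positional (memoryless) determinacy of parity games, which lets a winning strategy for the existential player be chosen uniformly and without memory, and determinization of $\omega$-word automata (Safra/McNaughton), needed to summarize the acceptance condition along each branch into a deterministic parity condition that the nondeterministic tree automaton can guess and verify. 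I expect the determinization-together-with-positional-determinacy argument to be the main obstacle; everything else is routine bookkeeping, and effectiveness (computability of $\mathcal{A}_\phi$ from $\phi$) follows because each construction in the induction is given by an explicit algorithm.
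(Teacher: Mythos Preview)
The paper does not prove this theorem: it is stated in the preliminaries with attribution to Rabin~\cite{Rabin69} and no argument is given, so there is nothing to compare your proposal against on the paper's side.

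Your outline is the standard modern route to Rabin's theorem and is essentially correct. One small slip: under the paper's semantics, an atomic relation $Succ_i(X_j)$ holds only when $X_j$ is a \emph{singleton} $\{m\}$ with $m$ ending in $i$, not when $X_j$ marks \emph{all} such nodes; your automaton for this atom should therefore check singletonhood together with the last letter. Apart from that, the decomposition (atomic cases, closure under Boolean operations and projection, complementation via alternation plus the simulation theorem using positional determinacy and $\omega$-determinization) is exactly the expected argument, and effectiveness follows as you say.
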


\subsection{Selection and uniformization}\label{sect:select}
\begin{defi}[Selection]
Let $\mathcal{M}$ be a structure. A formula $\psi(\bX)$ \textbf{selects} (or is a \textbf{selector} of) $\phi(\bX)$ over $\mathcal{M}$ if the following conditions hold:
\begin{itemize}
\item $\mathcal{M} \models \exists \bX \phi(\bX) \to \exists \bX \psi(\bX)$.
\item $\mathcal{M} \models \exists^{\leq 1} \bX \ \psi(\bX)$.
\item $\mathcal{M} \models \forall \bar{X} (\psi(\bar{X}) \to \phi(\bar{X})).$
\end{itemize}
(Here, $\exists^{\leq 1}$ stands for ``there exists at most one''.)

$\mathcal{M}$ has the \textbf{selection property} if every $\phi(\bX)$ has a selector over $\mathcal{M}$. Equivalently, $\mathcal{M}$ has the selection property if every satisfiable $\phi(\bX)$ has a definable model $\bX$.

$\mathcal{M}$ has the \textbf{solvable selection property} if there is an algorithm that produces selectors over $\mathcal{M}$.

\end{defi}

\begin{defi}[Selection over a class] Let $\mathcal{C}$ be a class of structures. A formula $\psi(\bX)$ \textbf{selects} $\phi(\bX)$ over $\mathcal{C}$ if for every $\mathcal{M} \in \mathcal{C}$, $\psi$ selects $\phi$ over $\mathcal{M}$.

$\mathcal{C}$ has the \textbf{selection property} if every $\phi(\bX)$ has a selector over $\mathcal{C}$.

$\mathcal{C}$ has the \textbf{solvable selection property} if there is an algorithm that produces selectors over $\mathcal{C}$.

\end{defi}
\begin{defi}[Uniformization] Let $\mathcal{M}$ be a structure. A formula $\psi(\bX,\bar{Y})$ \textbf{uniformizes} (or is a \textbf{uniformizer} of) $\phi(\bX,\bar{Y})$ with domain variables $\bX$ over $\mathcal{M}$ if the following conditions hold:
\begin{itemize}
\item $\mathcal{M} \models \forall \bX (\exists \bY \phi(\bX,\bY) \to \exists \bY \psi(\bX,\bY)).$
\item $\mathcal{M} \models \forall \bX \ \exists^{\leq 1} \bY \ \psi(\bX,\bY)$.
\item $\mathcal{M} \models \forall \bX \ \forall \bar{Y} (\psi(\bX,\bar{Y}) \to \phi(\bX,\bar{Y})).$
\end{itemize}
$\mathcal{M}$ has the \textbf{uniformization property} if every $\phi(\bX,\bar{Y})$ has a formula that uniformizes it with domain variables $\bX$ over $\mathcal{M}$.

$\mathcal{M}$ has the \textbf{solvable uniformization property} if there is an algorithm that produces uniformizers over $\mathcal{M}$.

\end{defi}
\begin{defi}[Uniformization over a class] Let $\mathcal{C}$ be a class of structures. A formula $\psi(\bX,\bar{Y})$ \textbf{uniformizes} $\phi(\bX,\bar{Y})$ with domain $\bX$ over $\mathcal{C}$ if for every $\mathcal{M} \in \mathcal{C}$, $\psi$ uniformizes $\phi$ with domain $\bX$ over $\mathcal{M}$.

$\mathcal{C}$ has the \textbf{uniformization property} if every $\phi(\bX,\bY)$ has a formula that uniformizes it with domain $\bX$ over $\mathcal{C}$.

$\mathcal{C}$ has the \textbf{solvable uniformization property} if there is an algorithm that produces uniformizers over $\mathcal{C}$.
\end{defi}
An immediate observation is that a structure $\mathcal{M}$ has the uniformization property if and only if for every $l$, the class of $\mathcal{M}$ with $l$ parameters has the selection  property

\begin{thm}[\cite {LifschesS98, AlexShomratUnif}]\label{unifoAlpha}
Let $\alpha$ be an ordinal. $(\alpha,<)$ has the solvable uniformization property if and only if $\alpha < \omega^\omega$. In particular, for $\alpha < \omega^\omega$ and $\bp \subseteq \alpha$, $(\alpha,<,\bp)$ has solvable selection and solvable uniformization.
\end{thm}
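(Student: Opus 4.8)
By the observation preceding the statement, $(\alpha,<)$ has (solvable) uniformization exactly when, for every $l$, the class $\{(\alpha,<,\bp)\ :\ \bp\subseteq\alpha,\ |\bp|=l\}$ has (solvable) selection. So the positive half of the theorem reduces to showing that for every $\alpha<\omega^\omega$ the structures $(\alpha,<,\bp)$ have solvable selection, uniformly in $\bp$; the ``in particular'' clause is then immediate, since adjoining $\bp$ to the signature only enlarges the parameter tuple. I would prove this uniform selection by the composition method, and I would read the $\omega^\omega$ threshold off the same argument.

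\textbf{Positive direction.}
Every $\alpha<\omega^\omega$ is a finite ordered sum of blocks of the form $\omega^{n}$, and each $\omega^{n+1}$ is the $\omega$-indexed sum of copies of $\omega^{n}$; the base case $\omega^{0}=1$ is a single point, where selection is trivial. It therefore suffices to show that uniform solvable selection is preserved (i) under finite concatenation of labelled orders and (ii) under $\omega$-indexed concatenation of copies of a fixed structure. For a satisfiable $\phi(\bX)$ over such a sum, the congruence of $\equiv_{d}$ under concatenation and the addition of types (Definition~\ref{AdditionOfTypes}), extended to $\omega$-sums, reduce satisfaction of $\phi$ to the choice of a sequence of block $d$-types together with the distribution of $\bX$ across blocks; an additive-Ramsey/McNaughton argument lets one pick a satisfying type sequence that is ultimately periodic. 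Inside each block the induction hypothesis provides a definable witness realizing the prescribed $d$-type, the block boundaries (initial segments whose order type is a multiple of $\omega^{n}$, i.e.\ limit points of the appropriate rank) are $\MSO$-definable, and, in the spirit of Theorem~\ref{BinaryOrdinalParam}, the relevant type information is computed by a formula of bounded quantifier depth. Gluing the per-block witnesses along the definable boundaries yields a single $\MSO$-definable set; since Hintikka formulas, type addition, and the block-level selectors are all computable, the resulting selector is produced effectively.

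\textbf{Where the threshold comes from.}
The crux, and the main obstacle, is step (ii): one must ensure that the glued set is \emph{uniquely} pinned down by \emph{one} formula of fixed quantifier depth across the infinitely many blocks. This works precisely because $\alpha<\omega^\omega$ forces the nesting of $\omega$-sums to be finite, so the block $d$-types range over a finite set and the whole construction stays at bounded quantifier depth. At $\omega^\omega=\sum_{n}\omega^{n}$ this fails: for each fixed depth $d$ the blocks $\omega^{n}$ realize only finitely many $d$-types, so distinct large blocks share a type and no depth-$d$ formula can make the block-dependent choices that a correct selection would require.

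\textbf{Negative direction.}
For $\alpha\ge\omega^\omega$ it suffices, again by the reduction, to exhibit one parameter expansion $(\omega^\omega,<,\bp)$ and a satisfiable formula $\phi(\bX)$ with \emph{no} definable model; since a copy of $\omega^\omega$ sits as an initial segment of every such $\alpha$, which we may mark by a parameter and relativize to, the failure there lifts to failure of uniformization for $(\alpha,<)$. The plan is to let $\bp$ mark, inside each block, an instance of a choice that has solutions but no canonical one, and to design $\phi$ so that any model must record, block by block, unboundedly growing information. A composition/pumping argument --- swapping two large blocks realizing the same $d$-type in a putative definable model --- then shows that any candidate of quantifier depth $d$ either excludes all models or admits at least two, contradicting the uniqueness and existence clauses of selection. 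Pinning down the explicit witness $\phi$ and $\bp$ is the delicate point of this half, and is the content of the constructions in \cite{LifschesS98,AlexShomratUnif}.
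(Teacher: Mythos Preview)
The paper does not supply a proof of this theorem: it is stated with a citation to \cite{LifschesS98,AlexShomratUnif} and used as a black box. There is therefore no ``paper's own proof'' to compare your proposal against.

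That said, your sketch is in the right spirit and aligns with how the cited references proceed: the positive direction does go by induction along the Cantor normal form, using the congruence of $\equiv_d$ under sums and an ultimately-periodic type decomposition of $\omega$-sums to glue block-wise selectors; the negative direction does rest on showing that at $\omega^\omega$ the blocks $\omega^n$ become $\equiv_d$-indistinguishable for large $n$, so a swap argument defeats any candidate selector of fixed depth. Two places where your outline is thin: in the positive direction you assert that the glued witness is pinned down by a single formula, but the actual work is in making the ``choose the lexicographically least ultimately periodic type sequence and then select inside each block'' step uniformly definable with parameters $\bp$, and that requires care beyond what you wrote. In the negative direction, the reduction ``a copy of $\omega^\omega$ sits as an initial segment of every $\alpha\ge\omega^\omega$, mark it by a parameter and relativize'' is fine, but the concrete witness $(\bp,\phi)$ in \cite{LifschesS98,AlexShomratUnif} is the substantive content, and your swap argument as stated does not by itself establish non-definability without that explicit construction. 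If you intend to include a proof here rather than cite, those two points need to be filled in.
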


As to the $k$-ary tree $T_k$, when it is only equipped with its partial order, selection fails for the formula ``$x$ is a successor of the root''. However, by Rabin's basis theorem:
\begin{thm}[Rabin \cite{Rabin69}]
$\mathcal{T}_k=(T_k,<,\bE)$  has the solvable selection property.
\end{thm}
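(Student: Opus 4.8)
The plan is to reduce selection over $\mathcal{T}_k$ to a question about parity tree automata, and then to use positional determinacy of parity games to extract a \emph{regular} — hence $\MSO$-definable — witness. Given a formula $\phi(X_1,\dots,X_n)$, I would first apply Theorem \ref{AutLogEqTrees} to compute a parity $k$-tree automaton $\cA_\phi$ over the alphabet $2^{\{1,\dots,n\}}$ whose language $\mathcal{L}(\cA_\phi)$ consists exactly of the characteristic functions $f_{M_1,\dots,M_n}$ of tuples $(M_1,\dots,M_n)$ with $\mathcal{T}_k\models\phi(M_1,\dots,M_n)$. Thus $\phi$ is satisfiable in $\mathcal{T}_k$ iff $\mathcal{L}(\cA_\phi)\neq\emptyset$, and a \emph{definable} model of $\phi$ is precisely an $\MSO$-definable tree in $\mathcal{L}(\cA_\phi)$. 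It therefore suffices to show that emptiness of $\mathcal{L}(\cA_\phi)$ is decidable and that, when nonempty, the language effectively contains a tree $\MSO$-definable in $\mathcal{T}_k$.

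For the emptiness test I would pass to the standard \emph{acceptance game} of $\cA_\phi$: a two-player parity game with finitely many positions in which one player (``Automaton'') chooses at the current node a letter together with a matching transition of $\Delta$, and the opponent chooses a direction $i\in\{0,\dots,k-1\}$ in which to descend; Automaton wins a play iff the parity condition holds along the resulting branch. A winning strategy for Automaton from the initial state is exactly a labelled tree carrying an accepting run, so $\mathcal{L}(\cA_\phi)\neq\emptyset$ iff Automaton wins. Parity games are positionally (memorylessly) determined, a standard fact, and the winner together with a memoryless winning strategy is computable from the finite game graph. If Automaton loses, then $\phi$ is unsatisfiable, and any unsatisfiable formula — for instance $\exists X\,(empty(X)\wedge\neg empty(X))$ — selects $\phi$: the clause $\exists\bX\,\phi\to\exists\bX\,\psi$ holds because its antecedent is false, while the other two clauses hold vacuously.

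If Automaton wins, I would fix a memoryless winning strategy $\sigma$. Since $\sigma$ depends only on the current automaton state, and the state reached at a node $t$ is determined by the sequence of directions along the access path from the root to $t$, the labelled tree and run it produces are \emph{regular}, i.e.\ computed by a finite Mealy machine reading the directions leading to each node. Here the successor predicates $\bE$ are exactly what make this regular object $\MSO$-definable: I would guess a partition of $T_k$ into state-classes, require the root to carry the initial state, and require each edge to respect the transition dictated by its direction — which $\bE$ identifies — so that the state-class of $t$ is definable; from that class one reads off both the letter prescribed by $\sigma$ and hence the membership of $t$ in each target set $M_i$. This yields $\MSO$ formulas $\mu_1(z),\dots,\mu_n(z)$ defining an accepting tuple $(M_1^*,\dots,M_n^*)$, a definable model of $\phi$.

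The selector is then $\psi(X_1,\dots,X_n):=\bigwedge_{i=1}^{n}\forall z\,(X_i(z)\leftrightarrow\mu_i(z))$, which is satisfied by the unique tuple $(M_1^*,\dots,M_n^*)$; since that tuple satisfies $\phi$ by construction, all three selection clauses hold, and every step above is effective, giving \emph{solvable} selection. The heart of the argument, and the step I expect to be most delicate, is the passage from ``nonempty'' to ``contains a definable tree'': it is exactly here that positional determinacy is invoked to obtain a finite-memory, and therefore regular, witness, and it is here that $\bE$ is indispensable. Without the successors one cannot distinguish the $k$ children of a node, so genuinely distinct but symmetric labellings cannot be pinned down by any formula — the failure already illustrated by ``$x$ is a successor of the root'' over $(T_k,<)$. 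The remaining verification, that a regular labelling is $\MSO$-definable once directions are available, is routine.
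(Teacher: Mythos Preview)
The paper does not supply its own proof of this statement: it is quoted as Rabin's basis theorem with a citation to \cite{Rabin69} and no argument. Your sketch is the standard modern proof of that theorem and is essentially correct.

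A couple of small points worth tightening. First, in your emptiness game the positions should be made explicit: Automaton positions are states $q\in Q$, from which Automaton selects a pair $(a,(q_0,\dots,q_{k-1}))$ with $(q,a,q_0,\dots,q_{k-1})\in\Delta$; Opponent positions are such tuples, from which Opponent selects a direction $i$ and play continues at $q_i$; the priority is $c(q)$. This makes it transparent that the arena is finite and that a memoryless Automaton strategy fixes, for each state $q$, a single letter $a_q$ and transition, so the induced input tree and run are generated by a finite top-down deterministic transducer reading only the direction sequence --- hence regular. Second, the definability step can be phrased without ``guessing'' a partition: the state reached at a node $t$ is the unique $q$ such that the finite path from the root to $t$, read as a word over $\{0,\dots,k-1\}$ via $\bE$, drives the deterministic transducer to $q$; this is first-order definable with one monadic second-order quantifier for the least-fixed-point closure, and the letter (hence each $M_i$-membership) is then read off from $q$. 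With these clarifications your argument is complete and effective, matching what the paper takes for granted by citation.
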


As to uniformization:
\begin{thm}[Gurevich-Shelah \cite{ShelahGurevich}] $\mathcal{T}_k$ doesn't have the uniformization property.
\end{thm}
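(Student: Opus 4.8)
The plan is to disprove the uniformization property by pinpointing a single formula that has no uniformizer, namely the one whose uniformizers are exactly the definable choice functions. Take the domain variable to be a set variable $X$ and the output variable to be a first-order variable $y$, and consider
\[
\phi(X,y)\ :=\ \neg\, empty(X)\ \wedge\ X(y).
\]
A uniformizer $\psi(X,y)$ of $\phi$ with domain $X$ is precisely an $\MSO$-definable choice function on $\mathcal{T}_k$: for every nonempty $X \subseteq T_k$ there is a \emph{unique} $y$ with $\mathcal{T}_k \models \psi(X,y)$, and that $y$ lies in $X$. So it suffices to prove that $\mathcal{T}_k$ admits no definable choice function. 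This is the correct target, and it also clarifies the contrast with the preceding theorem: by Rabin's basis theorem $\mathcal{T}_k$ has the selection property, so the obstruction cannot be the mere selection of \emph{one} set; it must arise from the parameter $X$, i.e.\ from being forced to choose uniformly across the whole family of nonempty sets at once.

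Assume toward a contradiction that $\psi(X,y)$ is such a choice function, of quantifier rank $d$. The guiding principle is that a rank-$d$ formula cannot single out an element of a set that is \emph{$\equiv_d$-homogeneous}. Concretely, I would establish the following reduction: if $X$ is nonempty with $|X| \ge 2$ and every $y \in X$ realizes the same $d$-type $tp^d(\mathcal{T}_k,X,y)$ in the expansion $(\mathcal{T}_k,X)$, then we are done. Indeed, nonemptiness forces $\psi$ to select some $y_0 \in X$; by homogeneity there is $y_1 \in X \setminus \{y_0\}$ with $tp^d(\mathcal{T}_k,X,y_1)=tp^d(\mathcal{T}_k,X,y_0)$; since $\psi$ has rank at most $d$ and mentions only $X$ and the chosen point, $\mathcal{T}_k \models \psi(X,y_1)\leftrightarrow\psi(X,y_0)$, so $\psi(X,y_1)$ holds as well, contradicting uniqueness. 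Thus the entire theorem is reduced to a purely combinatorial task: for each $d$, construct a nonempty $X \subseteq T_k$ with at least two elements, all of the same $d$-type in $(\mathcal{T}_k,X)$.

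The construction of this homogeneous $X$ is the main obstacle, and it is the genuine heart of the Gurevich--Shelah result. The difficulty is that $\mathcal{T}_k$ is \emph{rigid}: because the signature includes the successor predicates $\bE$, the structure has no nontrivial automorphisms, so homogeneity can never be obtained from a symmetry swapping two candidate elements. One must instead manufacture type-equality \emph{without} an automorphism, using the self-similarity of the tree together with a Ramsey/pumping argument over the finitely many $d$-types. The composition machinery is exactly the right tool here: by Theorem~\ref{BinaryTkParam} the truth value of $\psi(X,y)$ is determined by the $n$-types of the root-to-$y$ path and of the subtrees hanging off it (for a computable $n$ depending on $d$), so homogeneity only needs to be arranged at the level of these finitely many types. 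The subtle point, and the reason naive antichains fail, is that a definable choice function can always exploit a \emph{transient} feature — most dangerously the root, which is a definable distinguished point — to pick, say, ``the element of $X$ nearest the root''. A workable $X$ must therefore have \emph{no} definable distinguished element at all; one engineers this by an iterated substitution that folds the exceptional, near-the-root region into a part of $X$ that is itself $\equiv_d$-equivalent to the generic region, so that the pumping has no residual boundary to be seized upon.

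Finally, the whole argument can equally be run in automata-theoretic terms: via Theorem~\ref{AutLogEqTrees} the formula $\psi$ is converted into a parity $k$-tree automaton over $2^{\{1,2\}}$, and the homogeneity construction is replayed as a pumping argument on the finite state set, producing two leaves labelled by $X$ whose surrounding runs are interchangeable. I expect the type-based formulation above to be cleaner to write, but either way the crux — and the step I anticipate consuming essentially all the work — is the explicit construction of a nonempty, $\equiv_d$-homogeneous set with no definable distinguished element in the rigid tree $\mathcal{T}_k$.
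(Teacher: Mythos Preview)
The paper does not prove this theorem; it is stated as a citation to Gurevich--Shelah \cite{ShelahGurevich} with no accompanying argument, so there is no ``paper's own proof'' to compare against.

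As to the content of your proposal: the reduction is correct --- a uniformizer of $\phi(X,y):=\neg\,empty(X)\wedge X(y)$ is exactly a definable choice function, and the homogeneity argument (two distinct elements of $X$ with the same $d$-type in $(\mathcal{T}_k,X)$ would both be selected) is sound. You also correctly identify the real obstacle: the rigidity of $\mathcal{T}_k$ rules out symmetry arguments, and the construction of the homogeneous $X$ is the entire substance of the Gurevich--Shelah result. What you have written is a faithful outline of their strategy, but it is explicitly a plan rather than a proof: the ``iterated substitution that folds the exceptional, near-the-root region into a part of $X$ that is itself $\equiv_d$-equivalent to the generic region'' is gestured at but not carried out, and that construction is genuinely delicate. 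If this were being graded as a proof, the gap is that the hard step is named but not executed; as a proposal it is on target, though you should be aware that the actual Gurevich--Shelah construction is more intricate than a single pumping-on-types pass.
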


When $\mathcal{T}_k=(T_k,<,\bE)$ is expanded by parameters, selection may or may not hold:

\begin{exa}[\cite{Fratani} Level predicates] \label{YesSelExample} Let $P \subseteq \omega$, and let $P^*\subseteq T_k$ be defined by \[x \in P^* \iff |x| \in P.\] $P^*$ is called a \textbf{level predicate}. Let $(\mathcal{T}_k,\bar{P^*})$ be an expansion of $\mathcal{T}_k$ by some level predicates $\bar{P^*}$. $(\mathcal{T}_k,\bar{P^*})$ has the selection property. Moreover, the class of subtrees $(T_{\geq z},<,\bE,\bar{P^*})$ has the selection property.
\end{exa}
\begin{restatable}[\cite{carayol2010choice}]{prop}{NoSelExampleStatement}\label{NoSelExample}  There is $P \subseteq T_2$ such that $(\mathcal{T}_2,P)$ doesn't have the selection property.
\end{restatable}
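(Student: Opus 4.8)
The plan is to exhibit a single predicate $P \subseteq T_2$ such that selection fails for an explicit formula over $(\mathcal{T}_2, P)$. The natural source of such a counterexample is the fact, due to Gurevich-Shelah (stated above), that $\mathcal{T}_2$ lacks the uniformization property: there is a formula $\phi(X,Y)$ and a subset $A \subseteq T_2$ (serving as a value of $X$) such that $\{Y : \mathcal{T}_2 \models \phi(A,Y)\}$ is nonempty but contains no definable member. By the observation immediately preceding Theorem~\ref{unifoAlpha}, the failure of uniformization over $\mathcal{T}_2$ is exactly the failure of the selection property for the class of $\mathcal{T}_2$ with one parameter. The idea, then, is to encode such a hard instance $A$ as a single parameter $P$ and transfer the non-definability of the witness into a failure of selection over the fixed structure $(\mathcal{T}_2, P)$.

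Concretely, I would proceed as follows. First, I would take the formula $\phi(X,Y)$ and the parameter value $A$ witnessing the failure of uniformization, and set $P := A$ (or, if the counterexample genuinely needs a tuple of parameters, encode the tuple into a single set via a definable pairing on $T_2$, using the successors $\bE$ to address finitely many ``tracks''). The candidate formula whose selection fails is then $\psi(Y) := \phi(P, Y)$, now read in the fixed structure $(\mathcal{T}_2, P)$ where $P$ is no longer a free variable but part of the signature. Second, I would verify satisfiability: $\exists Y\, \psi(Y)$ holds in $(\mathcal{T}_2, P)$ because the uniformization counterexample guarantees the solution set is nonempty for $X = A$. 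Third, and this is the crux, I would argue that $\psi$ has no selector. A selector $\sigma(Y)$ would be an $\MSO$ formula over $(\mathcal{T}_2, P)$ defining a single $Y_0$ with $\mathcal{T}_2 \models \phi(P, Y_0)$. But a subset of $T_2$ definable in $(\mathcal{T}_2, P)$ is exactly a subset definable in $\mathcal{T}_2$ with the parameter $A$, which contradicts the choice of $A$ as a point at which $\phi$ admits no definable witness.

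The step I expect to be the main obstacle is the third one: the correct handling of parameters. The failure of uniformization says there is \emph{some} $A$ with no definable solution-witness \emph{relative to that parameter}; a definable witness in $(\mathcal{T}_2, P)$ is precisely a witness definable in $\mathcal{T}_2$ using $P = A$ as a parameter, so the two notions of definability must be checked to coincide. This is a routine but essential point: adding $P$ to the signature does not change which subsets are definable, since $\MSO$-definability in $(\mathcal{T}_2, P)$ with no free variables equals $\MSO$-definability in $\mathcal{T}_2$ with $P$ as a parameter. One must also confirm that the Gurevich-Shelah counterexample can be taken with a single free second-order domain variable $X$, or else perform the encoding of a parameter tuple into one set; with the successors available in $\mathcal{T}_k$ this encoding is straightforward. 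Once these definability-transfer and encoding issues are settled, the contradiction is immediate and the proof closes.
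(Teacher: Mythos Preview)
The paper does not supply its own proof of this proposition; it is simply stated with a citation to \cite{carayol2010choice}. So there is no in-paper argument to compare against.

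Your proposal, however, has a genuine gap in the quantifier structure. You want to extract a single $P$ for which $(\mathcal{T}_2,P)$ fails selection from the Gurevich--Shelah statement that $\mathcal{T}_2$ fails uniformization. The observation you invoke says uniformization for $\mathcal{M}$ is equivalent to the selection property for the \emph{class} of parameter-expansions of $\mathcal{M}$. Failure of that class property means: there is a formula $\phi$ such that for every candidate selector $\psi$ there is some parameter $A_\psi$ on which $\psi$ fails --- an $\exists\phi\,\forall\psi\,\exists A$ statement. What you need is $\exists\phi\,\exists A\,\forall\psi$: one fixed $A$ defeating every selector at once. That swap is not free. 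It is entirely consistent with the bare statement ``uniformization fails'' that each individual $(\mathcal{T}_2,A)$ has selection, via a selector formula that varies with $A$ in a non-uniform way, while no single formula works across all $A$. Your step~3 silently assumes the stronger form when it says ``the choice of $A$ as a point at which $\phi$ admits no definable witness'' --- the theorem as stated in the paper does not hand you such an $A$.

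Your instinct about the source is correct, though: the Gurevich--Shelah \emph{proof}, and more explicitly the construction in \cite{carayol2010choice}, do produce a specific $P$ by diagonalizing against the countable list of all potential selector formulas, arranging that each one fails in turn. That diagonal construction is the actual content of the proposition, and it is what you would need to either cite in its stronger form or reproduce. It is not a consequence of the uniformization statement alone.
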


\subsection{Copying}
\begin{defi}
Let $\mathcal{M}$ be a structure for the signature $\mathcal{L}=\langle R_i\ |\ i \in I \rangle$, and $d \in \omega$. The $d$\textbf{-copying} of $\mathcal{M}$, $\mathcal{M} \times d$, is a structure for the signature $\mathcal{L} \times d=\mathcal{L} \cup \{\sim,P_0,\dots,P_{d-1}\}$ defined as follows:
\begin{itemize}
\item The domain is the cartesian product $M \times \{0,\dots,d-1\}$.
\item A tuple $(m_0,j_0),\dots,(m_l,j_l)$ is in $R_i$ if $(m_0,\dots,m_l) \in R_i$ and $j_0=\cdots=j_l$.
\item $(m,j)\in P_k$ if $j=k$.
\item $(m_0,j_0) \sim (m_1,j_1)$ if $m_0=m_1$.
\end{itemize}
\end{defi}

\begin{lem} \label{copyingLem} Let $\mathcal{L}$ be a signature and $d \in \omega$.

For every $\phi(\bX) \in \MSO(\mathcal{L} \times d)$ there is a computable $\phi^*(\bX_0,\dots,\bX_{d-1}) \in \MSO(\mathcal{L})$ \large($|\bX_i|=|\bX|$\large), such that for any $\bA \subseteq M \times \{0,\dots,d-1\}$ \[\mathcal{M} \times d \models \phi(\bA) \iff \mathcal{M} \models \phi^*(\bA_0,\dots,\bA_{d-1})\] where $\bar{A_i}$ is the intersection $\bA \cap (M \times \{i\})$.

And vice versa: for every $\phi^*(\bX_0,\dots,\bX_{d-1}) \in \MSO(\mathcal{L})$ there is a computable $\phi^{**}(\bX) \in \MSO(\mathcal{L} \times d)$ such that for any $\bA \subseteq M \times \{0,\dots,d-1\}$ \[\mathcal{M} \times d \models \phi^{**}(\bA) \iff \mathcal{M} \models \phi^{*}(\bA_0,\dots,\bA_{d-1})\] where $\bar{A_i}$ is the intersection $\bA \cap (M \times \{i\})$.
\end{lem}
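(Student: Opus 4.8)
The plan is to prove both directions of the copying lemma by a straightforward syntactic induction on the structure of formulas, exploiting the fact that the $d$-copying construction is ``first-order interpretable'' in $\mathcal{M}$ in a very explicit way. The key idea is that a subset $A \subseteq M \times \{0,\dots,d-1\}$ is nothing but a $d$-tuple of subsets $(A_0,\dots,A_{d-1})$ of $M$, where $A_i = A \cap (M \times \{i\})$; so a single monadic variable ranging over the copied structure corresponds to $d$ monadic variables ranging over $\mathcal{M}$. The translation therefore replaces each variable $X$ of $\phi$ by a block $\bX = (X^0,\dots,X^{d-1})$, and the whole construction is about verifying that each atomic formula and each logical connective is correctly simulated.

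For the first direction, I would define $\phi^*$ by induction on $\phi \in \MSO(\mathcal{L} \times d)$. For atomic formulas I would give explicit translations: an atom $R_i(X_{j_0},\dots,X_{j_l})$ of $\mathcal{L}$ holds in $\mathcal{M} \times d$ iff the involved points are singletons lying in a common copy $i$ and their $\mathcal{M}$-projections satisfy $R_i$, which is expressible as a disjunction over $i \in \{0,\dots,d-1\}$ using the component variables $X_{j}^i$; the predicate $P_k(X)$ becomes ``$X^k$ is a singleton and all other components $X^j$ are empty''; the relation $X \sim Y$ becomes ``the union of the $X^j$ and the union of the $Y^j$ are equal singletons''; and $\subseteq$, $empty$, $sing$, $=$ translate componentwise in the obvious way. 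Boolean connectives commute with the translation directly. The crucial clause is the quantifier: $\exists X\, \psi$ translates to $\exists X^0 \cdots \exists X^{d-1}\, \psi^*$, and this is sound precisely because every subset of $M \times \{0,\dots,d-1\}$ decomposes uniquely as such a $d$-tuple, so quantifying over one copied set is the same as quantifying over $d$ ordinary sets. At each step I would check the stated equivalence $\mathcal{M} \times d \models \phi(\bA) \iff \mathcal{M} \models \phi^*(\bA_0,\dots,\bA_{d-1})$, which reduces to the chosen atomic translations and the decomposition fact; computability of $\phi^*$ is immediate since the translation is a syntactic recursion.

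The second direction is the mirror image: given $\phi^*(\bX_0,\dots,\bX_{d-1}) \in \MSO(\mathcal{L})$, I would build $\phi^{**} \in \MSO(\mathcal{L} \times d)$ by expressing the $\mathcal{M}$-component relations inside the copied structure. Here the point is that $\mathcal{M}$ is definable within $\mathcal{M} \times d$ as (say) the copy $P_0$, with the $\mathcal{L}$-relations read off directly, and the identification of corresponding points across copies is available via $\sim$. Concretely, I would relativize quantifiers to single copies and use $\sim$ to move a variable $X_j^i$ (thought of as living in copy $i$) to the canonical copy, so that the $d$ separate tuples $\bX_0,\dots,\bX_{d-1}$ of $\phi^*$ are recovered from the single variable $X$ via $X^i := \{m : (m,i) \in X\}$, expressible using $P_i$ and $\sim$. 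Then $\phi^{**}$ simulates $\phi^*$ clause by clause.

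The main obstacle, such as it is, lies in the bookkeeping for the atomic cases rather than in any conceptual difficulty: one must handle the singleton semantics of $\MSO$ carefully (recall that $R_i(X_{j_0},\dots)$ requires each $X_{j_\cdot}$ to be a singleton), and be precise about how the relation $\sim$ and the copy-predicates $P_k$ interact when a variable's ``witness'' sits in a particular copy. In particular the translation of $R_i$ must enforce that all arguments lie in the \emph{same} copy, matching the definition of $R_i$ in $\mathcal{M} \times d$, and the translation of $\sim$ must correctly capture equality of $\mathcal{M}$-projections across possibly different copies. Once these atomic translations are fixed, the inductive verification is routine, and I do not expect any essential difficulty in the connective or quantifier steps.
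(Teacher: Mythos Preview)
Your proposal is correct. The paper states this lemma without proof, treating it as a standard folklore fact about the copying operation; your syntactic induction on the structure of $\MSO$ formulas---replacing each monadic variable over $\mathcal{M}\times d$ by a $d$-tuple of variables over $\mathcal{M}$ and handling the atomic cases for $R_i$, $P_k$, $\sim$, $\subseteq$, $empty$, $sing$ explicitly---is exactly the standard argument, and the bookkeeping caveats you flag (singleton semantics, same-copy constraint for $R_i$, cross-copy matching for $\sim$) are the right points of care.
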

\begin{cor} \label{copyingRecursive}
For every $d\in\omega$, $MTh(\mathcal{M})$ and $MTh(\mathcal{M} \times d)$ are mutually recursive.
\end{cor}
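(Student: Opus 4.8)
The plan is to prove Corollary \ref{copyingRecursive} by using Lemma \ref{copyingLem} to translate sentences in each direction, so that an oracle for one monadic theory lets an algorithm decide membership in the other. Recall that $MTh(\mathcal{M})$ is the set of $\MSO(\mathcal{L})$ sentences true in $\mathcal{M}$, while $MTh(\mathcal{M}\times d)$ is the set of $\MSO(\mathcal{L}\times d)$ sentences true in $\mathcal{M}\times d$. The two translations supplied by the lemma are the only ingredients needed; the work is checking that they behave correctly on sentences (the case of no free variables) and that the translation maps are themselves algorithms, so that the reductions are effective.

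First I would reduce $MTh(\mathcal{M}\times d)$ to $MTh(\mathcal{M})$. Given a sentence $\phi \in \MSO(\mathcal{L}\times d)$, apply the first part of Lemma \ref{copyingLem} to compute $\phi^*(\bX_0,\dots,\bX_{d-1}) \in \MSO(\mathcal{L})$. Here $\phi$ has no free variables, so $|\bX|=0$ and hence each $|\bX_i|=0$, meaning $\phi^*$ is itself a sentence. The equivalence in the lemma, instantiated at $\bA=\emptyset$ (whose components $\bA_i$ are all empty), gives $\mathcal{M}\times d \models \phi \iff \mathcal{M}\models \phi^*$. Thus $\phi \in MTh(\mathcal{M}\times d)$ iff $\phi^* \in MTh(\mathcal{M})$, and since $\phi\mapsto\phi^*$ is computable this is a many-one reduction. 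An algorithm deciding $MTh(\mathcal{M})$ therefore decides $MTh(\mathcal{M}\times d)$.

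The reverse reduction is symmetric, using the ``and vice versa'' part of the lemma. Given a sentence $\phi^* \in \MSO(\mathcal{L})$ with no free variables, compute $\phi^{**}(\bX) \in \MSO(\mathcal{L}\times d)$; again emptiness of the free-variable tuple makes $\phi^{**}$ a sentence, and the stated equivalence at $\bA=\emptyset$ yields $\mathcal{M}\times d \models \phi^{**} \iff \mathcal{M}\models \phi^*$. Hence $\phi^* \in MTh(\mathcal{M})$ iff $\phi^{**}\in MTh(\mathcal{M}\times d)$, giving a computable many-one reduction in the other direction. Combining the two reductions shows that each theory is decidable relative to the other, i.e.\ they are mutually recursive.

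I do not expect a genuine obstacle here, since the heavy lifting is already done in Lemma \ref{copyingLem}; the only point requiring care is the bookkeeping that a sentence ($|\bX|=0$) is mapped to a sentence under both translations, so that the biconditionals can be read at the level of membership in the respective monadic theories rather than merely as schematic equivalences. One should also note that the lemma's equivalences quantify over all $\bA$, and we simply specialize to the empty tuple; no genuine content about nonempty parameter tuples is needed for the statement about theories.
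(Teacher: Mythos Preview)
Your proposal is correct and is exactly the intended argument: the paper states this as an immediate corollary of Lemma \ref{copyingLem} without proof, and your reduction via the two computable translations on sentences is precisely how it follows.
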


\begin{prop} \label{copyingSelection}
$\mathcal{M}$ has the (solvable) selection property if and only if $\mathcal{M} \times d$ has the (solvable) selection property.
\end{prop}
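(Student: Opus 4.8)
The plan is to prove both directions of Proposition~\ref{copyingSelection} by transferring selectors back and forth between $\mathcal{M}$ and $\mathcal{M} \times d$ using the translation of formulas supplied by Lemma~\ref{copyingLem}. The key observation is that the two parts of Lemma~\ref{copyingLem} give computable maps $\phi \mapsto \phi^*$ (from $\MSO(\mathcal{L} \times d)$ to $\MSO(\mathcal{L})$) and $\psi^* \mapsto \psi^{**}$ (back again) that preserve satisfaction once we identify a subset $\bar{A} \subseteq M \times \{0,\dots,d-1\}$ with the tuple of its ``slices'' $\bar{A}_i = \bar{A} \cap (M \times \{i\})$. Since a tuple $\bX$ of $\mathcal{M} \times d$-variables corresponds under this identification to the tuple $(\bX_0,\dots,\bX_{d-1})$ of $\mathcal{M}$-variables (of total length $d \cdot |\bX|$), I expect the whole argument to reduce to checking that each of the three selector clauses --- the existence implication, the at-most-one clause $\exists^{\leq 1}$, and the $\psi \to \phi$ inclusion --- is preserved under this correspondence.

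For the forward direction, suppose $\mathcal{M}$ has the selection property. Given $\phi(\bX) \in \MSO(\mathcal{L} \times d)$, first I would apply Lemma~\ref{copyingLem} to obtain $\phi^*(\bX_0,\dots,\bX_{d-1}) \in \MSO(\mathcal{L})$. By hypothesis there is a selector $\sigma(\bX_0,\dots,\bX_{d-1})$ for $\phi^*$ over $\mathcal{M}$. Applying the second part of Lemma~\ref{copyingLem} to $\sigma$ yields $\sigma^{**}(\bX) \in \MSO(\mathcal{L} \times d)$, and I claim $\sigma^{**}$ selects $\phi$ over $\mathcal{M} \times d$. The verification is a slice-by-slice check: the identification $\bar{A} \leftrightarrow (\bar{A}_0,\dots,\bar{A}_{d-1})$ is a bijection between subsets of $M \times \{0,\dots,d-1\}$ and $d$-tuples of subsets of $M$, so ``there exists $\bar{A}$'' over $\mathcal{M} \times d$ matches ``there exist $\bar{A}_0,\dots,\bar{A}_{d-1}$'' over $\mathcal{M}$, and uniqueness of the tuple transfers to uniqueness of $\bar{A}$. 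The three selector clauses for $\sigma$ relative to $\phi^*$ thus translate directly into the three clauses for $\sigma^{**}$ relative to $\phi$, using Lemma~\ref{copyingLem} twice (once for $\phi \leftrightarrow \phi^*$ and once for $\sigma \leftrightarrow \sigma^{**}$). The reverse direction is symmetric: given a selector for an arbitrary $\phi^*(\bX_0,\dots,\bX_{d-1})$ over $\mathcal{M}$, I would pass to $\phi^{**}$, select it over $\mathcal{M} \times d$, and translate the resulting selector back via the first part of the lemma; the only subtlety is that an arbitrary $\mathcal{M}$-formula must first be presented in the split-variable form $\phi^*(\bX_0,\dots,\bX_{d-1})$, which is harmless since one may always pad or regroup the free variables into $d$ blocks.

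The solvable case adds nothing conceptually: since both translations of Lemma~\ref{copyingLem} are computable and the selectors are obtained algorithmically, composing the algorithm that produces selectors with the two (effective) formula translations yields an algorithm producing the transferred selectors. The main obstacle I anticipate is purely bookkeeping rather than mathematical: one must be careful that the bijection $\bar{A} \leftrightarrow (\bar{A}_0,\dots,\bar{A}_{d-1})$ lines up correctly with the variable tuples on both sides --- in particular that ``at most one $\bar{A}$'' genuinely corresponds to ``at most one tuple $(\bar{A}_0,\dots,\bar{A}_{d-1})$'' and that no spurious tuples (not arising as slices of any $\bar{A}$) interfere, which they cannot, precisely because every $d$-tuple of subsets of $M$ is the slice-decomposition of exactly one subset of $M \times \{0,\dots,d-1\}$. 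Once this correspondence is set up cleanly, each of the three clauses transfers by a one-line application of Lemma~\ref{copyingLem}.
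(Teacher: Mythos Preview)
Your proposal is correct and is exactly the natural argument one would give: the paper itself does not spell out a proof of Proposition~\ref{copyingSelection}, treating it as an immediate consequence of Lemma~\ref{copyingLem}, and what you have written is precisely the expected unpacking of that consequence. The only minor technicality you flag --- padding the free variables of an arbitrary $\mathcal{M}$-formula into $d$ equal blocks before applying the second half of Lemma~\ref{copyingLem} --- is handled correctly (e.g.\ by conjoining $\mathit{empty}(\cdot)$ clauses for the padding variables), and the bijection $\bar{A} \leftrightarrow (\bar{A}_0,\dots,\bar{A}_{d-1})$ does carry the three selector clauses across as you describe.
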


\subsection{Games} \label{preliminary_games}
\begin{defi} \begin{enumerate} \item A \textbf{graph} is $G=(V,E)$ where $V$ is a set of vertices, and $E$ is a binary relation on $V$ representing directed edges, which is, $(v_1,v_2)\in E$ represents an edge from $v_1$ to $v_2$.
 \item Let $v_1,v_2\in V$. If $(v_1,v_2)\in E$ we will say that $v_2$ is an \textbf{out-neighbour} of $v_1$ and $v_1$ is an \textbf{in-neighbour} of $v_2$. A vertex $v_1$ is a \textbf{neighbour} of $v_2$ if either $(v_1,v_2) \in E$ or $(v_2,v_1)\in E$.
 \item A \textbf{path} in a graph is a finite or infinite sequence of vertices $\langle v_n\ |\ n \in N\rangle$ for $N\in\omega\cup\{\omega\}$, such that $(v_i,v_{i+1}) \in E$ whenever $i$ and $i+1$ are both in $N$.
\end{enumerate}
\end{defi}
\begin{defi}
\begin{enumerate}
\item A \textbf{parity game} $G$ is $(V_I,V_{II},E,\Pi_0,\dots,\Pi_{d-1})$ where $V_I,V_{II}$ are disjoint sets of vertices, $E$ is an edge relation on $V=V_I \cup V_{II}$ such that the out degree of each vertex is at least $1$, and $\Pi_0,\dots,\Pi_{d-1}$ is a partition of $V$ into $d$ different colors.
\item Let $v_0 \in V_I \cup V_{II}$. In a play of the game from $v_0$ the two players begin their play in the initial position $v_0$. On each round, if the position is $v \in V_I$ then player $I$ chooses the next position $v'$ such that there is an edge between $v$ and $v'$. Otherwise, if $v \in V_{II}$, player $II$ is the one choosing the next position $v'$ such that $E(v,v')$.

The resulting play of the game is then an $\omega$-path in $G$. The \textbf{play} is \textbf{winning} for player $I$ if the minimal $i$ such that infinitely many vertices of the play are in $\Pi_i$ is even. Otherwise, the play is winning for player $II$.

\item A \textbf{strategy} for player $I$ is $s:(V^*\cdot V_I) \to V$. A play of the game $\langle v_n\ |\ n \in \omega \rangle$ is played by the strategy $s$ if for every $n$, if $v_n \in V_I$ then $v_{n+1} = s(v_0,\dots,v_n)$.

\item A \textbf{memoryless (=positional) strategy} for player $I$ is $s:V_I \to V$. A play of the game $\langle v_n\ |\ n \in \omega \rangle$ is played by the strategy $s$ if for every $n$, if $v_n \in V_I$ then $v_{n+1} = s(v_n)$.

\item A \textbf{strategy} $s$ for player $I$ is \textbf{winning} from a vertex $v$ if all game plays from $v$ that are played by the strategy $s$ are winning for player $I$. The set of vertices from which a strategy $s$ is winning is the \textbf{winning region of} $s$.
\item The \textbf{winning region of player} $I$ is the union of the winning regions of all strategies for player $I$.
\item A winning strategy for player $I$ is \textbf{uniform} if its winning region is the winning region of player $I$.
\item All the above are defined for player $II$ in a similar fashion.
\end{enumerate}
\end{defi}
The following summarizes some basic facts about parity games (for further reading, see \cite{Gradel,Pin}):
\begin{thm}
Let $G$ be a parity game.
\begin{enumerate}
\item For each $v \in V$, if a player has a winning strategy from $v$ then it has a memoryless winning strategy from $v$.
\item Each $v \in V$ is either in the winning region of player $I$, or in the winning region of player $II$.
\item Both players have uniform memoryless winning strategies.
\item The winning regions of both players are $\MSO$-definable in $(V_I,V_{II},E,\Pi_0,\dots,\Pi_{d-1})$.
\end{enumerate}
\end{thm}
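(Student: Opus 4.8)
The three clauses are the standard facts about parity games, and I would prove clauses (1) and (2) together as memoryless (positional) determinacy, then derive clause (3) from the resulting characterization of the winning regions.

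\emph{Clauses (1) and (2): positional determinacy.} The plan is to prove, by induction on the number $d$ of distinct priorities occurring in $G$, that $V$ splits into two sets $W_I,W_{II}$ and that each player has a single positional strategy that wins from every vertex of its set; this gives both the partition of (1) and the uniform memoryless strategies of (2). The main tool is the \emph{attractor}: for a player $P$ and a target $U\subseteq V$, $\mathrm{Attr}_P(U)$ is the least set containing $U$ and closed under the rules ``some out-neighbour lies in the set'' for vertices of $P$ and ``all out-neighbours lie in the set'' for the opponent's vertices. One checks that $P$ has a positional strategy forcing the play into $U$ from every vertex of $\mathrm{Attr}_P(U)$, and that $V\setminus\mathrm{Attr}_P(U)$ is again a parity game (every vertex retains an out-edge) in which the opponent can keep the play out of $U$ forever. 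For the inductive step let $m$ be the least priority that occurs and assume, interchanging the two players if necessary, that $m$ is even, so that $m$ favours player $I$. Since $\Pi_m\subseteq\mathrm{Attr}_I(\Pi_m)$, deleting this attractor yields a subgame whose priorities are all $>m$, to which the induction hypothesis applies; I would then identify $W_{II}$ by a transfinite iteration that repeatedly peels off player $II$'s winning sub-region of such a subgame together with its $II$-attractor, and show that $I$ wins positionally on everything that remains.

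\emph{Clause (3): MSO-definability.} By (1) and (2), $v\in W_I$ exactly when $I$ has a positional winning strategy from $v$. Quantifying over strategies directly is awkward, since a strategy is a sub-relation of $E$ and monadic logic quantifies only over subsets of $V$; so instead I would use the fixpoint description of the winning region. Each attractor is a least fixpoint and is therefore MSO-definable over $(V_I,V_{II},E)$ as ``the intersection of all sets closed under the attractor rule'', expressed with one universal set quantifier. More uniformly, the winning region of a parity game with priorities $0,\dots,d-1$ is defined by the nested parity fixpoint formula of the modal $\mu$-calculus, with one fixpoint variable per priority alternating between greatest and least fixpoints according to parity, read over the game graph with an existential modality at $I$-vertices and a universal modality at $II$-vertices. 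Applying the standard translation of the $\mu$-calculus into MSO on graphs (least and greatest fixpoints becoming the appropriate closed-set quantifications) produces an MSO formula $\mathrm{win}_I(x)$ with $\mathrm{win}_I^{\mathcal G}=W_I$, computable from $d$, and symmetrically for $II$.

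\emph{Main obstacle.} The reduction in (3) is routine once the fixpoint formula is in hand, so the real work sits in (1)--(2). The delicate point is not determinacy at a single vertex but \emph{uniformity}: assembling the many local positional strategies produced along the transfinite peeling into one positional strategy that wins from all of $W_I$ simultaneously, while guaranteeing that no play consistent with it is trapped seeing an odd least priority infinitely often. Over an arbitrary, possibly uncountable arena the attractor computations and the peeling iteration are genuinely transfinite and must be shown to stabilise; this is exactly what Zielonka's two-case recursion on the parity of the least priority manages, and that is the argument I would follow.
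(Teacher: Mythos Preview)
Your proposal is correct and outlines the standard arguments (Zielonka's attractor-based recursion for positional determinacy, and the $\mu$-calculus fixpoint translation for MSO-definability). Note, however, that the paper does not prove this theorem at all: it is stated in the preliminaries as a summary of ``basic facts about parity games'' with citations to \cite{Gradel,Pin}, so there is no in-paper proof to compare against. Your sketch is essentially what one finds in those references.
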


\begin{defi}
Let $\mathcal{M}$ be a structure. A parity game $G$ is $\MSO$\textbf{-definable}
 in $\mathcal{M}$ if
$V_I,V_{II},\Pi_0,\dots,\Pi_{d-1}$ are $\MSO$-definable subsets of $\mathcal{M}$, and $E$ is an $\MSO$-definable binary relation on $\mathcal{M}$. Similarly, a memoryless strategy is $\MSO$-definable in $\mathcal{M}$ if it is an $\MSO$-definable binary relation on $\mathcal{M}$.
\end{defi}

What we call "$\MSO$-definable" is often called "one-dimensional $\MSO$ interpreted".

In Sections \ref{SubsectionApplications} and \ref{SubsectionAbstractStore} we provide examples of such games.

If a parity game is $\MSO$-definable in $\mathcal{M}$, the winning regions are $\MSO$-definable. However, as Proposition \ref{FailureT2P} will show, the existence of $\MSO$-definable uniform memoryless winning strategies is not guaranteed.

\section{bounded degree games} \label{SectionBoundedOutDegree}

\subsection{Definable strategies in bounded degree games} \label{SubsectionBoundedOutDegree}

In the following subsection, we find a sufficient condition on $\mathcal{M}$ such that parity games with bounded degree that are definable in $\mathcal{M}$ will have definable uniform memoryless winning strategies. We will see why bounded degree, and even bounded out degree, simplifies the description of a memoryless strategy, up to the point that the selection property finds a definition of a uniform memoryless winning strategy.
\begin{defi}
\begin{enumerate}
\item We will say a game has \textbf{bounded degree} if its arena has bounded degree, which is, there is $d$ such that all vertices have at most $d$ neighbours.

A game has \textbf{bounded out degree} if there is $d$ such that all vertices of its arena have at most $d$ out-neighbours.
\item A ternary relation $R(u,a,b)$ is a \textbf{local linear order} of a graph $G=(V,E)$ if for any vertex $u\in V$, $R_u(a,b)=\{(a,b)\ |\ R(u,a,b)\}$ linearly orders all out-neighbours of $u$.
\item A binary relation $R(a,b)$ is a \textbf{uniform local linear order} of a graph $G=(V,E)$ if for any vertex $u$ it linearly orders all out-neighbours of $u$.
\item Let $G$ be an $\mathcal{M}$-definable graph. We will say that $G$ has a \textbf{definable} (uniform) local linear order if there is an $\mathcal{M}$-definable relation that is a (uniform) local linear order of $G$.
\end{enumerate}
\end{defi}
When a graph has out degree bounded by $d$ and a definable local linear order $R(u,x,y)$, a $d$-tuple of monadic predicates $(P_0,\dots,P_{d-1})$ that forms a partition of the vertices of one of the players defines a positional strategy: given a vertex $u$, the player orders the out-neighbours of $u$ by $R_u$, and if $P_i(u)$ then the next move is the $i$'th vertex.

\smallskip
The following provides sufficient conditions for the existence of a definable local linear order:
\begin{restatable}{lem}{DefLocLinearOrderStatement} \label{DefLocLinearOrder}
If $G$ is a bounded degree graph definable in $\mathcal{M}$, and $\mathcal{M}$ has the selection property, then $G$ has an $\mathcal{M}$-definable uniform local linear order.

Moreover, if selection for $\mathcal{M}$ is solvable then the local linear order is computable.
\end{restatable}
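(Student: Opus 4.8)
The plan is to reduce the problem of locally ordering the out-neighbours of every vertex to a single global bounded proper vertex-colouring of an auxiliary \textbf{sibling graph}, and then to invoke the selection property to make that colouring definable.

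First I would define the sibling relation $S(a,b) := \exists u\,(E(u,a)\wedge E(u,b))$ and let $H=(V,S')$ be the graph with $S'(a,b) := S(a,b)\wedge a\neq b$. Since $V$ and $E$ are $\mathcal{M}$-definable, so is $H$. The key observation is that $H$ has bounded degree: as $G$ has degree bounded by $d$, every vertex $a$ has at most $d$ in-neighbours, each of which has at most $d$ out-neighbours, so $a$ has at most $d^2$ neighbours in $H$. (It is exactly here that bounded degree, and not merely bounded \emph{out} degree, is needed: with unbounded in-degree a vertex could have unboundedly many siblings.) A second, immediate observation is that for every $u$ the out-neighbours of $u$ are pairwise $H$-adjacent, i.e. they form a clique in $H$.

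Next, since $H$ has maximum degree at most $d^2$, it admits a proper colouring with $N:=d^2+1$ colours: every finite subgraph has maximum degree $\leq d^2$ and hence is $(d^2+1)$-colourable, so the claim follows for infinite $H$ by the De Bruijn–Erd\H{o}s theorem. Therefore the $\MSO$ formula $\phi(X_0,\dots,X_{N-1})$ asserting that $X_0,\dots,X_{N-1}$ partition $V$ and that no two $H$-adjacent vertices lie in the same $X_i$ is \emph{satisfiable} in $\mathcal{M}$. As $\mathcal{M}$ has the selection property, $\phi$ has a definable model, i.e.\ there are $\mathcal{M}$-definable sets $C_0,\dots,C_{N-1}$ with $\mathcal{M}\models\phi(C_0,\dots,C_{N-1})$.

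Finally I would set the \textbf{colour order}
\[
R(a,b)\;:=\;\bigvee_{0\le i<j\le N-1}\big(a\in C_i\wedge b\in C_j\big),
\]
which is $\mathcal{M}$-definable since each $C_i$ is. For a fixed $u$, the out-neighbours of $u$ form a clique in $H$ and so receive pairwise distinct colours; hence $R$ restricted to them is precisely the strict linear order induced by the distinct colour indices, making $R$ a uniform local linear order of $G$. For the ``moreover'' part, if selection for $\mathcal{M}$ is solvable then, applying the selection algorithm to the computable formula $\phi$, we obtain computable definitions of $C_0,\dots,C_{N-1}$, and $R$ is computed from these by the displayed finite disjunction. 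The main conceptual obstacle is the first step—recognising that local ordering reduces to a global \emph{bounded} colouring, with bounded degree (through the sibling graph) being exactly what keeps the number of colours finite; once the colouring formula is seen to be satisfiable, selection does the rest essentially for free.
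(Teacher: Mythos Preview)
Your proof is correct and essentially identical to the paper's: the paper defines the same auxiliary ``sibling'' graph $E'(v,u)\iff v\neq u\wedge\exists w\,(E(w,v)\wedge E(w,u))$, bounds its degree by $d^2$, applies De~Bruijn--Erd\H{o}s to obtain a $(d^2+1)$-colouring, invokes selection to make the colouring definable, and defines the order by colour index exactly as you do.
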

\begin{proof}
This is based on the following well-known coloring theorem:
\begin{prop}[De Bruijn-Erdos \cite{DeErd}]
If $G$ is a graph whose vertices have degree at most $d$, then there is a $d+1$ coloring of the graph, which is, a coloring with $d+1$ colors such that an edge connects vertices of different colors.
\end{prop}
\begin{proof}
For finite graphs, this is proven by induction on the number of vertices - if $v \in G$ and $G \setminus v$ is $d+1$ colored, then since $v$ has at most $d$ neighbours, it may be assigned a color different from all of its neighbours.

For infinite graphs, it follows from compactness of propositional logic.
\end{proof}
Given $G=(V,E)$ a graph with bounded degree $d$, we define a new edge relation on the same vertices: \[(v,u) \in E' \iff (v \neq u) \land \exists w\ \big((w,v) \in E \land (w,u)\in E\big).\] The resulting graph has degree at most $d^2$, so there exists a $d^2+1$ coloring. This coloring induces a partition of $V$ into sets $C_1,\dots,C_{d^2+1}$. On the original graph $G$, this partition satisfies:\[\phi(C_1,\dots,C_{d^2+1}) = \forall v,u_1,u_2\] \[ (v,u_1)\in E \land (v,u_2) \in E \to c(u_1) \neq c(u_2)\]
(where $c(u_1)\neq c(u_2)$ is formalizable in $\MSO(C_1,\dots,C_{d^2+1})$).

We have so far shown that the formula $\phi(\bar{C})$ is satisfiable. Since $\mathcal{M}$ has the selection property, there is an $\mathcal{M}$-definable $\bar{C}$ that satisfies the formula. Using $\bar{C}$ a uniform local linear order may be defined: \[a<b \iff \bigvee _{1\leq i < j \leq (d^2+1)}\ C_i(a) \land C_j(b).\]
\end{proof}

\smallskip
\begin{thm} \label{boundedOutDegree}
If $\mathcal{M}$ has the \textbf{selection property}, and $G$ is a parity game definable in $\mathcal{M}$ such that $G$ has \textbf{bounded out degree} and a \textbf{definable local linear order}, then there are $\mathcal{M}$\textbf{-definable uniform memoryless winning strategies} for $G$.

Moreover, if selection for $\mathcal{M}$ is solvable then the winning strategies are computable.
\end{thm}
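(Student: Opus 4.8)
The plan is to encode a memoryless strategy by a bounded tuple of monadic predicates, express ``this tuple encodes a uniform memoryless winning strategy'' by a single $\MSO$ formula over $\mathcal{M}$, and then invoke the selection property to obtain a \emph{definable} such tuple.

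First I would fix the encoding. Let $m$ bound the out degree of $G$, and let $R(u,a,b)$ be the definable local linear order. For $0 \le i < m$ the relation ``$v$ is the $i$-th out-neighbour of $u$'' — meaning $E(u,v)$ and exactly $i$ out-neighbours $w$ of $u$ satisfy $R(u,w,v)$ — is $\MSO$-definable in $\mathcal{M}$, since it only counts up to $m$. Consequently any $m$-tuple $\bar P=(P_0,\dots,P_{m-1})$ of subsets of $V_I$ that assigns to each $u\in V_I$ a unique index $i$ with an existing $i$-th out-neighbour determines a memoryless strategy $\sigma_{\bar P}$ for player $I$, whose graph $\mathrm{Strat}_{\bar P}(u,v) \defequals \bigvee_{0\le i<m}\bigl(P_i(u)\land v\text{ is the }i\text{-th out-neighbour of }u\bigr)$ is $\MSO$-definable in $\mathcal{M}$ from $\bar P$. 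Both hypotheses are used here: bounded out degree makes the tuple finite and the counting first-order, while the local linear order turns the choice of index into a choice of vertex. The same encoding works verbatim for player $II$ using a partition of $V_{II}$.

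The main step is to write an $\MSO$ formula $\Phi(\bar P)$ over $\mathcal{M}$ that holds exactly when $\bar P$ encodes a uniform memoryless winning strategy for player $I$. I would let $\Phi(\bar P)$ be the conjunction of (i) ``$\bar P$ is a valid encoding'' (the $P_i$ are pairwise disjoint, cover $V_I$, and $P_i(u)$ only when $u$ has at least $i+1$ out-neighbours), and (ii) ``$\sigma_{\bar P}$ wins from every vertex of the winning region of player $I$''. For (ii) note that the strategy subgraph $G_{\bar P}\defequals(V,E_{\bar P},\Pi_0,\dots,\Pi_{d-1})$, where $E_{\bar P}$ keeps all out-edges at vertices of $V_{II}$ and only the $\sigma_{\bar P}$-edge at vertices of $V_I$, is $\MSO$-definable in $\mathcal{M}$ with $\bar P$ as parameters. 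Regarding $G_{\bar P}$ as a parity game in which player $II$ owns every vertex, player $I$'s winning region is precisely the set of $v$ from which \emph{every} play of $G_{\bar P}$ is winning for $I$, i.e.\ from which $\sigma_{\bar P}$ guarantees a win. By the uniform $\MSO$-definability of parity winning regions (fact~3 of the parity-games theorem), this region is $\MSO$-definable in $(V,E_{\bar P},\bar\Pi)$ and hence in $\mathcal{M}$ with parameters $\bar P$; call the resulting formula $\mathrm{Win}_{\bar P}(v)$. Since the winning region $W_I$ of player $I$ in $G$ itself is $\MSO$-definable in $\mathcal{M}$ by the same fact, clause (ii) becomes $\forall v\,(W_I(v)\to \mathrm{Win}_{\bar P}(v))$, and substituting the $\mathcal{M}$-definitions of $V_I,V_{II},E,\bar\Pi$ and $R$ yields $\Phi$ as a genuine $\MSO(\mathcal{M})$ formula. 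This reduction of the dynamic winning condition to the static definability of winning regions is the crux of the argument.

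Finally I would close the loop. The formula $\Phi(\bar P)$ is satisfiable over $\mathcal{M}$: by fact~2 of the parity-games theorem player $I$ has a uniform memoryless winning strategy $\sigma^{*}$, and encoding $\sigma^{*}$ by a tuple $\bar P^{*}$ (at each $u$ record the $R_u$-index of $\sigma^{*}(u)$) gives $\Phi(\bar P^{*})$, because a uniform strategy wins from exactly $W_I$. As $\mathcal{M}$ has the selection property and $\Phi$ is satisfiable, there is an $\mathcal{M}$-definable tuple $\bar P^{\dagger}$ with $\Phi(\bar P^{\dagger})$; then $\mathrm{Strat}_{\bar P^{\dagger}}$ is an $\MSO$-definable uniform memoryless winning strategy for player $I$. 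The argument for player $II$ is symmetric. For the moreover clause, all the ingredients — the winning-region formulas of fact~3 and the local linear order — are computable, so $\Phi$ is computable; if selection for $\mathcal{M}$ is solvable we can compute a selector for $\Phi$, hence a definition of $\bar P^{\dagger}$ and of the strategy. The only real obstacle is step (ii): everything hinges on recognising that, once the strategy is frozen into the parameters $\bar P$, the ``winning'' requirement is just the $\MSO$-definable winning region of the induced one-player game.
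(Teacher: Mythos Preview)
Your proof is correct and follows the same three-step skeleton as the paper: encode a memoryless strategy by a finite tuple of monadic predicates using the bounded out degree and the local linear order, write an $\MSO$ formula asserting that the encoded strategy is uniform winning, and apply selection.

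The one difference worth noting is in how you formalize ``$\sigma_{\bar P}$ wins from $v$''. You freeze player~$I$'s moves and invoke the black-box $\MSO$-definability of the winning region of the resulting one-player parity game $G_{\bar P}$. The paper instead also encodes player~$II$'s counter-strategy as a tuple $\bar Q$ (which is possible because player~$II$'s vertices have bounded out degree too), so that for each pair $(\bar P,\bar Q)$ the play from $v$ is a single deterministic path; it then expresses the parity condition on that path directly via a reachability predicate, and universally quantifies over $\bar Q$. Your route is slightly cleaner and in fact only uses bounded out degree at player~$I$'s vertices; the paper's route is more explicit and avoids citing the definability-of-winning-regions theorem. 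Both are standard and equivalent here.
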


\begin{proof}

The proof has $3$ steps:
\begin{enumerate}
\item Code a uniform memoryless strategy as a tuple $\bar{P}$ of unary predicates.
\item Write a formula $uWinSt(\bar{P})$ saying that $\bar{P}$ codes a uniform memoryless winning strategy.
\item Apply selection on the formula $uWinSt(\bar{P})$ to get a definable uniform memoryless winning strategy.
\end{enumerate}

Let $k$ be the number of colors of the vertices, and $d$ a bound of the out degree.
Let \[V_I(x),V_{II}(x),E(x,y),\Pi_0(x),\dots,\Pi_{k-1}(x)\] be formulas that define $G$ in $\mathcal{M}$. Let $\bp = (P_0,\dots,P_{d-1}), \bq = (Q_0,\dots,Q_{d-1})$ be unary predicates. Let $R(u,a,b)$ be an $\mathcal{M}$-definable local linear order of $G$.

\begin{itemize}
\item Let $p_I(\bp),p_{II}(\bq)$ be formulas saying that $\bp$ and $\bq$ are partitions of $V_I$ and $V_{II}$, respectively.

\item Let $st_I(x,y,\bp)$ be a formula saying that $x\in V_I$, and for any $0 \leq i \leq d-1$, if $P_i(x)$ and the out-neighbours of $x$ ordered by $R_x$ are $y_0 < \dots < y_k$ (for $k \leq d-1$), then $y=y_i$. The formula $st_{II}(x,y,\bq)$ is defined similarly.

\item The following formula $reach(x,\bp,\bq,Y)$ says that playing from $x$ by the strategies $\bp,\bq$ leads to a vertex of the set $Y$:
\[\forall A\ \Big((x \in A) \land \] \[\big(\forall z \in (A \cap V_I)\ \forall y \ st_I(z,y,\bp) \to y \in A\big) \land \]
\[\big(\forall z \in (A \cap V_{II})  \ \forall y \ st_{II}(z,y,\bq) \to y \in A\big) \Big) \]\[\to (A \cap Y \neq \emptyset). \]

\item The formula $win_I(x,\bp,\bq)$ says that playing from $x$ by the strategies $\bp,\bq$ results in a win of player I:

There is an even color $i$ such that:
\begin{itemize}
\item For every $j<i$, there is $y$ satisfying $reach(x,\bp,\bq,\{y\}) \land \neg reach(y,\bp,\bq,\Pi_j)$.
\item For every $y$, if $reach(x,\bp,\bq,\{y\})$ then $reach(y,\bp,\bq,\Pi_i)$.
\end{itemize}
\item The formula $WinSt_I(x,\bp)$ is \[p_I(\bp) \land \big(\forall \bq\ p_{II}(\bq) \to win_I(x,\bp,\bq)\big),\] which is, $\bp$ codes a winning strategy of player I from $x$.

\item This leads to a definition $winReg_I(x)$ of the winning region of player I: $x$ is in the winning region of player I if there is $\bp$ such that $WinSt_I(x,\bp)$.

\item Finally, $uWinSt_I(\bp)$, the set of codes of uniform winning strategies for $I$, is \[\forall x\ winReg_I(x) \to WinSt_I(x,\bp).\]
\end{itemize}
Since $\mathcal{M} \models \exists \bp \ uWinSt_I(\bp)$, and $\mathcal{M}$ has the selection property, there is a formula $uWinSt^*_I(\bp)$ such that there is a unique $\bp$ satisfying $uWinSt_I^*(\bp)$, and this $\bp$ also satisfies $uWinSt_I(\bp)$. The formula \[\exists \bp \ uWinSt_I^*(\bp) \land st_I(x,y,\bp)\] defines a uniform memoryless winning strategy for player I.

The existence of a definable uniform memoryless winning strategy for player $II$ is proven similarly.
\end{proof}
Together with Lemma \ref{DefLocLinearOrder}:
\begin{cor} \label{boundedDegree}
If $\mathcal{M}$ has the \textbf{selection property}, and $G$ is a parity game definable in $\mathcal{M}$ such that $G$ has \textbf{bounded degree}, then there are $\mathcal{M}$\textbf{-definable uniform memoryless winning strategies} for $G$.

Moreover, if selection for $\mathcal{M}$ is solvable then the winning strategies are computable.
\end{cor}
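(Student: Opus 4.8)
The plan is to obtain this corollary as a direct combination of the two immediately preceding results, Lemma \ref{DefLocLinearOrder} and Theorem \ref{boundedOutDegree}. The key observation is that the bounded degree hypothesis is strictly stronger than the bounded out degree hypothesis of Theorem \ref{boundedOutDegree}, and that it simultaneously supplies the one extra ingredient—a definable local linear order—that the theorem takes as input.

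First I would note that if $G$ has bounded degree, say every vertex has at most $d$ neighbours, then in particular every vertex has at most $d$ out-neighbours; hence $G$ has bounded out degree. So the out-degree requirement of Theorem \ref{boundedOutDegree} is met automatically. Next, the only remaining hypothesis of that theorem is the existence of a definable local linear order of $G$, and this is exactly what Lemma \ref{DefLocLinearOrder} produces: since $G$ is a bounded degree graph definable in $\mathcal{M}$ and $\mathcal{M}$ has the selection property, there is an $\mathcal{M}$-definable uniform local linear order $R(a,b)$ that linearly orders the out-neighbours of every vertex. A uniform local linear order is a fortiori a local linear order (formally, one may set $R'(u,a,b) := R(a,b)$, ignoring the first argument), so $G$ has a definable local linear order.

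With both hypotheses of Theorem \ref{boundedOutDegree} now in hand—bounded out degree, a definable local linear order, and the selection property of $\mathcal{M}$—the theorem immediately yields $\mathcal{M}$-definable uniform memoryless winning strategies for $G$, which is the assertion to be proved. For the ``moreover'' clause I would simply track computability through the two steps: when selection for $\mathcal{M}$ is solvable, the solvability clause of Lemma \ref{DefLocLinearOrder} makes the local linear order computable, and then the solvability clause of Theorem \ref{boundedOutDegree} makes the resulting winning strategies computable.

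Because the statement is a straightforward application of the two prior results, there is no genuine obstacle here; the only point deserving a moment of care is the (routine) remark that a uniform local linear order can be read as a local linear order, so that the output of the lemma feeds correctly into the hypothesis of the theorem.
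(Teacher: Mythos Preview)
Your proposal is correct and matches the paper's approach exactly: the paper introduces the corollary with the phrase ``Together with Lemma \ref{DefLocLinearOrder}:'' and gives no further proof, so the intended argument is precisely the combination of Lemma \ref{DefLocLinearOrder} with Theorem \ref{boundedOutDegree} that you spell out.
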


\subsection{Applications} \label{SubsectionApplications}

\begin{exa}
Let $f: \mathbb{N} \to \mathbb{N}$ be the iterated factorial function: $f(n)=n!\dots !(n\ times)$. Let $P\subseteq \mathbb{N}$ be the range of $f$, and consider the structure $\mathcal{M}=(\mathbb{N},<,P)$.
By Elgot-Rabin \cite{ElgRabin}, MSO($\mathcal{M}$) is decidable and $\mathcal{M}$ has the selection property.

Our results show that a bounded out degree game $MSO$-definable in $\mathcal{M}$ has $\mathcal{M}$-definable uniform memoryless winning strategies. This does not follow from any previously known synthesis result.

Later we will see that the same applies for games of unbounded out degree.
\end{exa}

Corollary \ref{boundedDegree} may be applied on bounded out degree games definable in $\mathcal{T}_k$. An important example of games definable in $\mathcal{T}_k$ are pushdown parity games. These games have bounded degree. In fact, in these games the $\mathcal{T}_k$-distance between neighbours is bounded:
\begin{defi}
Let $d \in \omega$. A graph $G=(V,E)$, $V \subseteq T_k$, is $d$\textbf{-local} if for every $x,y \in T_k$, if $E(x,y)$ then $y$ is the result of at most $d$ operations of successor and predecessor on $x$.

$G=(V,E)$ is d-\textbf{locally definable} in $\mathcal{T}_k$ if it is $d$-local and definable in $\mathcal{T}_k$. $G$ is \textbf{locally definable} if it is d-\textbf{locally definable} for some $d\in \omega$.
\end{defi}

By Theorem \ref{boundedOutDegree}:
\begin{cor} \label{BddLocalInterp}
Let $G$ be a parity game locally definable in $\mathcal{T}_k$. Then $G$ has computable uniform memoryless winning strategies locally definable in $\mathcal{T}_k$.
\end{cor}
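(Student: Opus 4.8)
The plan is to verify that Corollary \ref{BddLocalInterp} follows from Theorem \ref{boundedOutDegree} by checking that a locally definable parity game in $\mathcal{T}_k$ satisfies all three hypotheses of that theorem: that $\mathcal{T}_k$ has the selection property, that the game has bounded out degree, and that it has a definable local linear order. The first hypothesis is immediate, since by Rabin's basis theorem $\mathcal{T}_k=(T_k,<,\bE)$ has the solvable selection property. So the real content is establishing bounded out degree and a definable local linear order, and then invoking the theorem with its computability clause.

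First I would argue bounded degree. If $G$ is $d$-local, then by definition $E(x,y)$ implies that $y$ is reachable from $x$ by at most $d$ successor/predecessor operations. In $\mathcal{T}_k$ each node has exactly $k$ successors and one predecessor, so the set of nodes reachable by at most $d$ such operations is bounded by a constant depending only on $d$ and $k$ (crudely, at most $(k+1)^d$). Hence every vertex has boundedly many neighbours, and in particular boundedly many out-neighbours; this gives both bounded degree and the out-degree bound needed by Theorem \ref{boundedOutDegree}.

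Next I would exhibit a definable local linear order. The cleanest route is to note that the whole ambient partial order $<$ of $\mathcal{T}_k$ is already definable, but it is only a partial order on out-neighbours, so it need not linearly order them. Instead I would use the lexicographic-type ordering available in $\mathcal{T}_k$: since the successors $\bE=(Succ_0,\dots,Succ_{k-1})$ are definable, one can compare two nodes $a,b$ by locating their largest common prefix $z$ (definable as the join in the tree order) and then comparing the indices $i,j$ such that $a\geq z\cdot i$ and $b \geq z\cdot j$, ordering $a$ before $b$ when $i<j$, with the convention that a prefix precedes its proper extensions. This yields an $\mathcal{M}$-definable \emph{uniform} local linear order on all of $T_k$, which in particular linearly orders the out-neighbours of every vertex. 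Alternatively, one may simply invoke Lemma \ref{DefLocLinearOrder}, since $G$ has bounded degree and $\mathcal{T}_k$ has the selection property; this directly produces a definable uniform local linear order and the lexicographic construction is merely a more explicit witness.

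With all three hypotheses in hand, Theorem \ref{boundedOutDegree} yields $\mathcal{T}_k$-definable uniform memoryless winning strategies, and its moreover-clause gives computability because selection for $\mathcal{T}_k$ is solvable. The only subtlety to get right is that "locally definable" controls the \emph{distance} between neighbours in the tree metric rather than directly bounding the out-degree, so the main step is the combinatorial observation that $d$-locality plus the finite branching of $\mathcal{T}_k$ forces a uniform bound on the number of out-neighbours. This is routine once stated, and no genuine obstacle arises; the corollary is essentially a bookkeeping instantiation of the preceding theorem for the specific structure $\mathcal{T}_k$.
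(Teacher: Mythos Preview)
Your approach is correct and matches the paper's own (which simply says ``By Theorem \ref{boundedOutDegree}''): you verify the hypotheses of Theorem \ref{boundedOutDegree} and invoke its computability clause. One small point is missing, however. The corollary asserts that the winning strategies are \emph{locally} definable in $\mathcal{T}_k$, not merely $\mathcal{T}_k$-definable, and Theorem \ref{boundedOutDegree} only delivers the latter. To close this, observe that a memoryless strategy is, as a binary relation, a subrelation of the edge relation $E$ (the chosen move $s(v)$ is always an out-neighbour of $v$); since $E$ is $d$-local by hypothesis, so is the strategy relation. Combined with the $\mathcal{T}_k$-definability you obtained, this yields local definability of the strategies. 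This is the routine bookkeeping step your last paragraph alludes to but does not actually carry out.
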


That may be applied to solve pushdown parity games:

\begin{restatable}[\cite{Wal01},\cite{CachatWaluk} Theorem 4.2]{thm}{WalukStatement} \label{WalukThm} A pushdown parity game has computable uniform winning pushdown strategies.
\end{restatable}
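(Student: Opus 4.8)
The plan is to realize the configuration graph of the pushdown parity game as a game that is \emph{locally} definable in $\mathcal{T}_k$ for a suitable $k$, to apply Corollary \ref{BddLocalInterp} in order to obtain a computable, locally definable uniform memoryless winning strategy, and finally to translate such a tree-definable strategy back into a pushdown transducer. The remark preceding this theorem, that in pushdown games the $\mathcal{T}_k$-distance between neighbours is bounded, is exactly the locality that drives the reduction.

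First I would fix the encoding. Let the pushdown system have control states $Q$, stack alphabet $\Gamma$, and a coloring of configurations induced by a map $Q \to \omega$. A configuration is a pair $(q,w)$ with $w=\gamma_1\cdots\gamma_n\in\Gamma^*$, and I encode it as the node $\gamma_1\cdots\gamma_n\cdot e_q$ of $\mathcal{T}_k$, $k=\max(|\Gamma|,|Q|)$: the stack is spelled out along the path and one extra direction $e_q$ encoding the control state is appended at the active end. A change of control state replaces the final direction (a move to the parent followed by a move to a sibling); a push first deletes the state direction, descends into the pushed symbol, and re-appends the new state direction; a pop deletes the state direction and the top symbol and re-appends the new state direction. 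In every case the target lies within at most three successor/predecessor operations of the source, and, crucially, no interior stack cell is ever rewritten, so the induced edge relation is $d$-local with $d=3$. The set of valid configuration nodes (all directions below the last being genuine stack symbols and the last encoding a state), the partition into $V_I,V_{II}$, and the coloring $\Pi_0,\dots,\Pi_{m-1}$ are all regular conditions on the path from the root, hence $\MSO$-definable in $\mathcal{T}_k$. Since $\mathcal{T}_k$ has the solvable selection property (Rabin's basis theorem), Corollary \ref{BddLocalInterp} applies and yields a computable uniform memoryless winning strategy $\sigma(x,y)$ that is locally definable in $\mathcal{T}_k$.

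The heart of the argument, and the step I expect to be the main obstacle, is converting this locally definable strategy into a pushdown transducer. The key observation is that, because $\mathcal{T}_k$ carries no extra parameters and all of its subtrees are isomorphic, every $\MSO$-definable property of a single node is determined solely by the type $tp^n([\epsilon,x),<)$ of the finite path from the root to that node: this is the specialization of the composition analysis of Theorem \ref{BinaryTkParam} to one variable, where the subtree and sibling types are constant and only the path type varies. The finitely many path types of quantifier depth $n$ form a finite monoid, so each definable node-property is a regular language of direction words, recognized by a finite automaton reading the stack from bottom to top. A pushdown transducer whose stack mirrors the current configuration can therefore store, inside each stack cell, the automaton state reached along the path up to that cell: a push updates the stored state in one step, a pop restores the previously stored state, and a state change is a bounded local update. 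Since $\sigma$ is $d$-local, the candidate successors of the current node are the boundedly many nodes within distance $d$, and $\sigma$ selects one of them according to a regular condition on the path, which the transducer evaluates from its stored automaton states. This produces a pushdown transducer implementing $\sigma$, i.e.\ a uniform winning pushdown strategy.

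Finally I would verify that every step is effective: the presentation of the pushdown game as a locally definable game in $\mathcal{T}_k$, the strategy delivered by Corollary \ref{BddLocalInterp} (computable because selection for $\mathcal{T}_k$ is solvable), the finite automata for the relevant regular path-languages (extracted from the defining formulas via the composition method), and the assembly of these ingredients into a pushdown transducer. The same reasoning applies verbatim to both players, yielding computable uniform winning pushdown strategies as claimed. The genuine difficulty lies entirely in the third paragraph: one must argue carefully that a memoryless, $\MSO$-definable, $d$-local strategy on the full tree in fact depends only on a regular property of the stack and that this property can be recomputed incrementally under push and pop, which is precisely what lets a finite-control pushdown machine realize a strategy whose logical definition may a priori have high descriptive complexity.
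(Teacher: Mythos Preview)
Your proposal is correct and follows essentially the same approach as the paper: encode the pushdown game as a bounded-degree game definable in the full tree, invoke selection to obtain a definable uniform memoryless winning strategy, and translate back to a pushdown transducer via the regularity of path types. The paper differs only cosmetically, using the copying $\mathcal{T}_{|\Gamma|} \times P$ (and hence Proposition~\ref{copyingSelection} together with Corollary~\ref{boundedDegree}) in place of your extra-direction embedding into $\mathcal{T}_k$ and Corollary~\ref{BddLocalInterp}, and it compresses your careful back-translation into the single sentence ``after the addition of some definable predicates, these strategies are given by simple formulas'' and ``may be translated into uniform pushdown strategies.''
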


\begin{proof}
A pushdown parity game with stack alphabet $\Gamma$ and states $P$ can be defined in the copying $\mathcal{T}_{|\Gamma|} \times P$. By Proposition \ref{copyingSelection} and Corollary \ref{boundedDegree},  $\mathcal{T}_{|\Gamma|} \times P$-definable uniform memoryless winning strategies can be computed. Moreover, after the addition of some definable predicates, these strategies are given by simple formulas. The strategies may be translated into uniform pushdown strategies.
\end{proof}

Note that this proof doesn't provide the complexity analysis provided by Walukiewicz.
 
\smallskip

Other applications of Corollary \ref{boundedDegree} are when $\mathcal{T}_k$ is expanded by parameters for which the selection property holds:

\begin{exa}[Level predicates]
Let $\mathcal{T}_k$ be expanded by $x\in \mathsf{Factorial}$ if $|x|$ is in $\{n!\ |\ n \in \omega\}$. Then bounded out degree games definable in the resulting structure have definable strategies.

In fact, whenever $\mathcal{T}_k$ is expanded by level predicates (see Example \ref{YesSelExample}), its definable bounded out degree games also have definable strategies.
\end{exa}

Higher-order pushdown games \cite{CarayolHMOS08} and iterated store pushdown games \cite{Fratani} are definable in expansions of $\mathcal{T}_k$ by parameters for which the selection property holds \cite{Fratani}. It follows that in these cases as well, definable strategies exist. (See also \cite{CarayolS08} for effective constructions of strategies.)

 \subsection{Games over an  Abstract Store} \label{SubsectionAbstractStore}
 In this subsection we consider games over an abstract store. Abstract store generalizes pushdown store and higher-order pushdown store.
 We obtain a generalization of Walukewicz's theorem as a consequence of the results of subsection \ref{SubsectionBoundedOutDegree}.
\begin{defi}[Abstract store]
An abstract store is a tuple ${\As}:=({\AStore},\bar{P}, OP%,s_0
)$, where $\AStore$ is a set, $ \bar{P}$ is a tuple of unary predicates on $\AStore$, and $OP $ is a finite set of abstract store operations which are partial functions
from $ \AStore$ to $\AStore$.
 \end{defi}

For example, for every finite  alphabet $\Gamma$,  the pushdown store over $\Gamma$ can be considered an instance of an abstract store,
 where ${\AStore}:=\Gamma^*$ is the set of finite strings over $\Gamma$, for every $\gamma\in \Gamma$ we have a predicate $P_\gamma$ which holds on the stores with top letter $\gamma$,
 for every $\gamma\in \Gamma$ the operation $push_\gamma$  pushes $\gamma$ into the stack, and if the stack $s$ is not empty the operation $pop$ removes its top letter.

Higher-order pushdown automata and register automata can also be viewed as instances of this definition.

 An abstract store ${\As}=({\AStore},\bar{P}, OP%,s_0
 )$ can be seen as a labelled graph.
 The set of nodes is $\AStore$. The node labelling is $\bar{P}$.
 There is an edge from $u$ to $v$ labelled by $f\in OP$ if $f(u)=v$.

 \begin{defi}[An automaton over an abstract store]  Let ${\As}=({\AStore},\bar{P}, OP%,s_0
 )$ be an abstract store,  such that $|\bar{P}|=l$. An automaton $\cA$ over $\As$ is a tuple
 ${\cA}:=(Q, \Delta%, q_0
 )$  where $Q$ ($=Q_{\cA}$) is a finite set of states and $ \Delta \subseteq Q \times \{0,1\}^l \times OP \times Q$ is a transition relation.%$q_0\in Q$ is the initial state.
 \end{defi}
 A \emph{configuration }is a pair $(q,s)\in Q\times \AStore$. We define a labelled transition relation  $\cA\times \As$ over the configurations:
 $(q,s)\rightarrow_f (q',s')$ if $f\in OP$, $s'=f(s)$, $\bar{P}(s)=\bar{b}$ and $ (q, \bar{b},f,q')\in \Delta$.

 \begin{defi}[An abstract store game]  Let $\As$ be an abstract store, let $\cA$ be an automaton over $\As$, and let the set $Q$ of states of $\cA$ be partitioned into $Q_I, Q_{II}$ - the sets of the first and the second player states, and into $Q_1, \dots, Q_d$ - the set of states with color $i:=1, \dots , d$. The abstract store parity  game over $\cA$ and $\As$ induced by these partitions
 is the game over the transition system $\cA\times \As$  where a configuration $(q,s)$ is a node of the first (respectively, the second) player if $q\in Q_I$ (respectively, $q\in Q_{II}$), and the color of $(q,s)$ is the same as the color of $q$.
  \end{defi}
 Note that the above game is definable by very simple formulas in $|Q_{\cA}|$-copying of $\As$.

A memoryless winning strategy for Player II maps every configuration of player II to a configuration of Player I.

By Proposition \ref{copyingSelection} and Corollary \ref{boundedDegree}:

  \begin{cor} \label{Abstr-game}
If an abstract store $\As$ has the selection property, and $G$ is a parity game  over $\As$,  then there are   definable (in a copying of $\As$) uniform memoryless winning strategies for $G$.
Moreover, if selection for $\As$ is solvable then the winning strategies are computable.
\end{cor}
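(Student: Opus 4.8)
The plan is to reduce Corollary \ref{Abstr-game} to Corollary \ref{boundedDegree} by exhibiting the abstract store parity game as a parity game definable in a copying of $\As$, and then to verify that the two hypotheses of Corollary \ref{boundedDegree} -- the selection property and bounded degree -- are satisfied by that copying. The game $G$ over $\cA$ and $\As$ has configurations $(q,s) \in Q_\cA \times \AStore$ as its vertices, so its natural ambient structure is the copying $\As \times |Q_\cA|$, whose domain is exactly $\AStore \times \{0,\dots,|Q_\cA|-1\}$, identifying a configuration $(q,s)$ with the copy of $s$ in the layer indexed by $q$. As already observed in the text immediately preceding the statement, the partition into player vertices $V_I,V_{II}$, the color classes $\Pi_0,\dots,\Pi_{d-1}$, and the edge relation $E$ are all definable by very simple formulas in this copying: membership of $(q,s)$ in $V_I$ or in a color class depends only on $q$, hence only on the layer predicate $P_q$ of the copying; and the edge relation follows the transition relation $\Delta$, where $(q,s)\rightarrow_f(q',s')$ holds precisely when $f(s)=s'$ (a move inside the store governed by one of the finitely many operations $f \in OP$), $q'$ is an admissible successor state, and the top predicates $\bar P(s)$ match the guard $\bar b$ of the chosen transition $(q,\bar b,f,q') \in \Delta$. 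All of these are first-order/$\MSO$-expressible using the predicates $\bar P$, the layer predicates $P_q$, the copying relation $\sim$, and the graphs of the operations in $OP$, so $G$ is $\MSO$-definable in $\As \times |Q_\cA|$.

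The first hypothesis, selection, transfers immediately: I would invoke Proposition \ref{copyingSelection}, which states that $\As$ has the (solvable) selection property if and only if $\As \times d$ does, with $d = |Q_\cA|$. Thus the assumption that $\As$ has the selection property hands us the selection property for the ambient copying.

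For the second hypothesis I must check that $G$ has bounded degree. The key point is that the set $OP$ of store operations is finite and each operation is a partial function from $\AStore$ to $\AStore$. Hence from a fixed configuration $(q,s)$ there are at most $|OP| \cdot |Q_\cA|$ outgoing edges -- bounded by a constant independent of the vertex -- since each of the finitely many transitions $(q,\bar b,f,q') \in \Delta$ that is enabled at $(q,s)$ leads to the single configuration $(q',f(s))$. A symmetric argument bounds the in-degree: because each $f \in OP$ is a (single-valued) function, the preimages contributing incoming edges are again controlled by the finite data $\Delta$ and $OP$, so every vertex has boundedly many in-neighbours as well. Therefore the arena has bounded degree in the sense of the definition.

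With both hypotheses verified, Corollary \ref{boundedDegree} applies to $G$ as a parity game definable in $\As \times |Q_\cA|$ (which has the selection property), yielding $\As \times |Q_\cA|$-definable uniform memoryless winning strategies; using the equivalence of Lemma \ref{copyingLem} these are exactly the strategies definable in a copying of $\As$, as claimed. The computability clause follows the same route: if selection for $\As$ is solvable, then by Proposition \ref{copyingSelection} so is selection for the copying, and the effectiveness statement in Corollary \ref{boundedDegree} makes the strategies computable. I expect the only genuinely non-routine step to be the careful verification that the edge relation is $\MSO$-definable in the copying and that the degree bound holds uniformly; everything else is a direct appeal to the already-established copying and bounded-degree machinery. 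The single-valuedness of the operations in $OP$ is what makes the degree bound work, and it is the hypothesis I would flag as doing the real work.
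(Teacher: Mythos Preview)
Your overall strategy is exactly the paper's: realise the game inside the copying $\As \times |Q_\cA|$, transfer selection via Proposition~\ref{copyingSelection}, and then invoke the bounded-degree machinery. That part is fine and matches the one-line proof in the paper.

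The gap is in your verification of bounded degree. You write that ``because each $f \in OP$ is a (single-valued) function, the preimages contributing incoming edges are again controlled by the finite data $\Delta$ and $OP$.'' This is false: single-valuedness of $f$ bounds the number of \emph{images} of a point, not the number of \emph{preimages}. Nothing in the definition of an abstract store requires the operations to be injective or to have finite fibres. A concrete counterexample: take $\AStore=\mathbb{N}$, $\bar P$ trivial, and $OP=\{f\}$ with $f(n)=0$ for all $n$. Then the node $0$ has infinite in-degree in the store graph, and the same blow-up occurs in any game built over this store. So Corollary~\ref{boundedDegree} does not apply as stated.

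The fix is to appeal to Theorem~\ref{boundedOutDegree} rather than Corollary~\ref{boundedDegree}. Bounded \emph{out}-degree does hold, by the argument you gave (at most $|\Delta|$ outgoing edges from any configuration). What remains is to exhibit a definable local linear order on the out-neighbours, and here the finiteness of $OP$ and $Q_\cA$ does the work: fix an enumeration of the transition tuples in $\Delta$, and order the out-neighbours of $(q,s)$ by the least index of a transition realising the move. This is definable in the copying because each operation $f\in OP$ is a relation of the structure and the state component is read off the layer predicate. With that local linear order in hand, Theorem~\ref{boundedOutDegree} gives the definable uniform memoryless winning strategies, and the computability clause follows as before. (The paper itself cites Corollary~\ref{boundedDegree}, which is arguably imprecise for the same reason; the clean route is the one through Theorem~\ref{boundedOutDegree}.)
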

This may be rephrased in terms of abstract store strategies, which is a special type of automata over abstract stores that compute a strategy:

\begin{defi}[Abstract store strategies] Let ${\As}:=({\AStore},\bar{P}, OP )$ be an abstract store, and $\mathcal{A}:=(Q,\Delta)$ an automaton over $\As$. Let $G$ be the abstract store game over $\mathcal{A}\times \As$.
\begin{enumerate}

\item An \textbf{abstract store strategy} of player $I$ for the game $G$ is $(Q',{\As}',\Pi)$, where:
\begin{itemize}
\item $Q'$ is a finite set of states.
\item ${\As}'=({\AStore}',\bar{P'}, OP')$ is an abstract store such that $|\bar{P'}|=l$.
\item $\Pi$ is a union of transition functions:  \[\Pi_{I}: Q' \times \{0,1\}^l \to OP' \times Q' \times \Delta_I \]\[\mbox{\Big(update state and store and choose a transition\Big)}\] \[\Pi_{II}: Q' \times \{0,1\}^l \times \Delta_{II} \to OP' \times Q' \]\[\mbox{\Big(read Player II transition and update state and store\Big)}\] where $\Delta_I$ are transition rules in $\Delta$ whose first coordinate is in $Q_I$, and $\Delta_{II}$ is defined similarly.
\end{itemize}
An \textbf{abstract store strategy configuration} is $(q',s') \in Q' \times \mathcal{S}'$.
\item Given a player $I$ node $(q,s)$ and a transition rule $\delta=(q,\bar{P}(s),f,q')\in \Delta_I$, we say that \textbf{player} $I$ \textbf{plays the transition rule} $\delta$ if the next game position is $(q',f(s))$. Similarly for player $II$.
\item A play of $\mathcal{G}$ from a configuration $(q,s) \in Q \times \mathcal{S}$ is \textbf{played by the abstract store strategy} of player $I$ $(Q',{\As}',\Pi)$ from configuration $(q'_0,s'_0)\in Q' \times \mathcal{S}'$ if:
\begin{itemize}
\item The initial abstract store strategy configuration is $(q'_0,s'_0)$.
\item Whenever the vertex is of player $I$, the abstract store strategy configuration is $(q',s')$, and $\Pi_{I}(q',\bar{P'}(s'))=(f',q'',\delta)$, player $I$ plays the transition rule $\delta$ and updates the abstract store strategy configuration to $(q'',f'(s'))$.
\item Whenever the vertex is of player $II$, the abstract store strategy configuration is $(q',s')$, player $II$ plays $\delta \in \Delta_{II}$, and $\Pi_{II}(q',\bar{P'}(s'),\delta)=(f',q'')$, the abstract store strategy configuration is updated to $(q'',f'(s'))$.
\end{itemize}
\item An abstract store strategy of player $I$ is a \textbf{uniform winning abstract store strategy} if for every vertex $(q,s)$ in the winning region of player $I$, there is an abstract store strategy configuration $(q',s')$ such that any play of $\mathcal{G}$ from $(q,s)$ played by the abstract store strategy from $(q',s')$ is winning for player $I$.
\item Strategies of player $II$ are defined in a similar fashion.
\end{enumerate}
\end{defi}

 \begin{cor}[Synthesis of Abstract Store Strategies]
If ${\As}=({\AStore},\bar{P}, OP)$ is an abstract store with the selection property, $\mathcal{A}=(Q,\Delta)$ is an automaton over $\As$, and $G$ is the parity game over $\mathcal{A}$ and $\As$, then both players have uniform winning abstract store strategies $(Q,{\As}_1,\Pi_1), (Q,{\As}_2,\Pi_2)$ (with the same states as $\mathcal{A}$) such that ${\As}_1,{\As}_2$ are definable in $\As$.
Moreover:
\begin{enumerate}
\item The abstract stores ${\As}_1,{\As}_2$ are expansions of $\As$ by definable unary predicates.
\item If selection for $\As$ is solvable then the uniform winning abstract store strategies are computable.
\end{enumerate}
\end{cor}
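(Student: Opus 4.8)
The plan is to invoke Corollary~\ref{Abstr-game} to obtain a definable memoryless strategy and then repackage it into the abstract store format. Recall that $G$ is played on the transition system $\cA \times \As$, whose vertices are the configurations $Q \times \AStore$, and that (as already noted) this game is definable by simple formulas in the $|Q|$-copying $\As \times |Q|$, identifying the $q$-th copy of $\As$ with the configurations whose state is $q$. Since $\As$ has the selection property, so does $\As \times |Q|$ by Proposition~\ref{copyingSelection}, so Corollary~\ref{Abstr-game} supplies an $\As \times |Q|$-definable uniform memoryless winning strategy $\sigma_I$ for player $I$, and symmetrically $\sigma_{II}$ for player $II$.

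The next step is to decompose $\sigma_I$ into finitely many $\As$-definable predicates on $\AStore$. Fix an enumeration $\delta_1,\dots,\delta_m$ of $\Delta_I$. For a rule $\delta_j=(q,\bar{b}_j,f_j,q'_j)$, let $U_{\delta_j}$ be the set of stores $s$ such that $\delta_j$ is applicable at $(q,s)$, that is, $\bar{P}(s)=\bar{b}_j$; the strategy satisfies $\sigma_I(q,s)=(q'_j,f_j(s))$; and $j$ is least with these two properties. Applicability, the transition $s\mapsto f_j(s)$, and $\sigma_I$ are all definable in $\As\times|Q|$, so by the copying Lemma~\ref{copyingLem} each $U_{\delta_j}$ is an $\As$-definable subset of $\AStore$. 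Because $\sigma_I$ is a total strategy selecting a genuine edge at each player-$I$ vertex, for every $q\in Q_I$ the sets $U_{\delta_j}$ ranging over rules from $q$ are pairwise disjoint (by minimality) and cover $\AStore$; thus at each configuration the rule that $\sigma_I$ effectively plays is recovered from which single $U_{\delta_j}$ the store satisfies.

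Now I would build the abstract store strategy. Put $Q':=Q$ and let $\As_1$ be $\As$ expanded by the finitely many definable unary predicates $U_{\delta_1},\dots,U_{\delta_m}$, so $\AStore'=\AStore$ and $OP'=OP$; this already gives the first ``moreover'' clause, that $\As_1$ is an expansion of $\As$ by definable unary predicates. The strategy keeps its store equal to the game store by always applying the same operation the game applies. Explicitly, $\Pi_{I}(q,\bar{P'}(s))$ reads from the bits of $\bar{P'}(s)$ the unique rule $\delta_j$ from $q$ with $s\in U_{\delta_j}$ and returns $(f_j,q'_j,\delta_j)$, while $\Pi_{II}(q,\bar{P'}(s),\delta)$ simply echoes player $II$'s chosen rule $\delta=(q,\bar{b},f,q'')$, returning $(f,q'')$. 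Running this strategy from the configuration $(q,s)$ itself, a straightforward induction on play length shows the abstract store strategy configuration always coincides with the current game configuration; hence the produced play is exactly the $\sigma_I$-play from $(q,s)$ and is winning whenever $(q,s)$ lies in player $I$'s winning region, so the abstract store strategy is uniform winning. Player $II$ is handled symmetrically using $\sigma_{II}$.

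The only real content beyond Corollary~\ref{Abstr-game} is this synchronisation: one must check that a memoryless strategy, a priori merely a choice of successor vertex, can be presented as a choice of transition rule read off by an automaton from the current state together with the finitely many store predicates, and that feeding the game store back through the same operations keeps the strategy store identical to it. The minimality convention makes the rule choice well defined even when several rules realise the same edge. Finally, every step is effective: if selection for $\As$ is solvable then $\sigma_I$ is computable by Corollary~\ref{Abstr-game}, Lemma~\ref{copyingLem} is effective, and there are only finitely many predicates and finite tables $\Pi_I,\Pi_{II}$ to produce, which yields the second ``moreover'' clause.
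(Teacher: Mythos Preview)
Your proof is correct and follows the route the paper intends. The paper does not give an explicit proof of this corollary: it introduces the definition of abstract store strategies with the sentence ``This may be rephrased in terms of abstract store strategies'' and then states the corollary bare, treating it as a repackaging of Corollary~\ref{Abstr-game}. Your argument supplies exactly the details this rephrasing requires: extract the $\As\times|Q|$-definable memoryless strategy, pull it back via Lemma~\ref{copyingLem} to finitely many $\As$-definable unary predicates indexed by transition rules, and build the abstract store strategy that shadows the game store and reads off the chosen rule from those predicates. The synchronisation invariant and the minimality tiebreak are the right way to make the construction go through, and the effectiveness claim is handled correctly.
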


\section{Unbounded degree games}\label{SectionUnbounded}

The following section continues the investigation of the following problem:
\begin{prob}
Let $\mathcal{M}$ be a structure, and $G$ a parity game definable in $\mathcal{M}$. Are there $\mathcal{M}$-definable uniform memoryless winning strategies?
\end{prob}
In section \ref{SectionBoundedOutDegree}, we've seen one case in which the answer is positive - when $\mathcal{M}$ has the selection property and $G$ has bounded degree (Corollary \ref{boundedDegree}).

In the following section we consider unbounded degree games, and present a general scheme for finding definable strategies in $\mathcal{M}$-definable unbounded degree games:
\begin{enumerate}
\item Represent the game by a bounded degree game defined in a copying of $\mathcal{M}$.
\item Find definable uniform memoryless winning strategies for the representing game.
\item Transfer these strategies into winning strategies for the original game.
\end{enumerate}
As long as $\mathcal{M}$ has the selection property, section \ref{SectionBoundedOutDegree} guarantees that step (2) may be performed, whereas step (3) does not require any new techniques. In the following section we will explain how to perform step (1).

In order to construct the required representation, we introduce the notion of definability by regular expressions. We use this notion to represent by bounded out degree games all unbounded degree games defined in $(\alpha,<,\bp)$ for $\alpha$ an ordinal below $\omega^\omega$, and also to represent games defined in $\mathcal{T}_k$ expanded by parameters. Our main technical result follows (Theorem \ref{MainDefStrategy}).

The structure of the section is as follows:

In subsection \ref{representationSub} the notion of representation is defined, and in particular the notion of copy representation, in which an $\mathcal{M}$-definable game is represented in a copying of $\mathcal{M}$. Variants of this  notion appear implicitly in the literature (see e.g. \cite{BrutschT22,CachatWaluk}). Although we're only interested in representing games, the challenging part is representing their graphs.

We develop some simple technical lemmas that construct a game representation out of a graph representation.
This will allow us later to concentrate on the graph
representation only.

In subsection \ref{MainSchemeSub}, the correctness of the general scheme is proven.

In subsection \ref{RegExpSub}, the notion of definability by regular expressions is introduced. It is shown that definability by regular expressions is an instance of $\MSO$ definability, and that if a graph is definable by regular expressions in a bounded degree graph then it is copy represented by a bounded degree graph. For the expansions of $\mathcal{T}_k$ and $\alpha<\omega^\omega$ by parameters, we show in subsection \ref{subsectMsovs-reg} that $\MSO$ definability is the same as definability by regular expressions in a bounded degree graph. That is used to represent graphs defined in these structures, and the main technical result -- Theorem \ref{MainDefStrategy} -- follows.

\subsection{Representation of graphs} \label{representationSub}
We begin with a motivating example that demonstrates in what sense a game may represent another game:
\begin{exa}
Let $G=(Even, Odd, E, \Pi_0, \Pi_1)$ be a parity game on $\omega$, where $\Pi_0,\Pi_1$ is some coloring, and $E(x,y)$ if $y>x$ and $y-x$ is either $1$ or it is a multiple of $3$.
\begin{itemize}
\item $G$ has unbounded out degree.
\item Let $\tilde{G}$ be a game on $\omega \cup \big(\omega \times \{I,II\}\big)$ defined as follows:
\begin{itemize}
\item The vertices of player $I$ are $Even \cup \big(\omega \times \{I\}\big)$; all the rest are of player $II$.
\item Colors: $\omega \times \{I\}$ is colored by $1$; $\omega \times \{II\}$ is colored by $0$; the rest of vertices get their previous color + 2.
\item The edges $\tilde{E}$ are divided into two types:
\begin{enumerate}
\item Short edges: For every $x \in \omega$, $(x,x+1) \in \tilde{E}$.
\item Long edges:
\begin{itemize}
\item If $x$ is even then $(x,(x+3,I)) \in \tilde{E}$, and if $x$ is odd $(x,(x+3,II)) \in \tilde{E}$.
\item For every $x\in \omega$, $(x,I)$ is connected to $x$ and to $(x+3,I)$. Similarly for player $II$.
\end{itemize}
\end{enumerate}
\end{itemize}
\item The game $\tilde{G}$ has bounded out degree.
\item For $x,y \in \omega$, there is an edge in $G$ from $x$ to $y$ if and only if there is a path in $\tilde{G}$ from $x$ to $y$ whose only $G$-vertices are $x$ and $y$.
\item Given $x \in \omega$, a player has a winning strategy from $x$ in $G$ if and only if she has a winning strategy from $x$ in $\tilde{G}$.
\end{itemize}
\end{exa}
To generalize this example, we introduce the following definition, that is inspired by \cite{BrutschT22} section 3.4:

\begin{defi} \label{RepDefGraph}
A graph $G=(V,E)$ is \textbf{represented by a graph} $\tilde{G}=(\tilde{V},\tilde{E})$ if:
\begin{enumerate}
\item $\tilde{V} \supseteq V$. The vertices of $\tilde{G}$ that are not in $G$ are called ``proper $\tilde{G}$-vertices,''  whereas the ones of $G$ are $G$-vertices.
\item There is an edge between $v_1$ and $v_2$ in $G$ if and only if there is a path between $v_1$ and $v_2$ in $\tilde{G}$ whose only $G$-vertices are $v_1$ and $v_2$.
\end{enumerate}
\end{defi}

\begin{restatable}{prop}{RepresentedIsWeakInterpStat} \label{RepresentedIsWeakInterp} If $G=(V,E)$ is represented by $\tilde{G}=(\tilde{V},\tilde{E})$, then $G$ is both $\MSO$ and weak-$\MSO$ definable in $(\tilde{G},V)$.
\end{restatable}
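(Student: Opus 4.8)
The plan is to observe that defining $G=(V,E)$ in $(\tilde{G},V)$ amounts to defining only its edge relation $E$, since its vertex set $V$ is available as the distinguished unary predicate. By clause (2) of Definition \ref{RepDefGraph}, for $x,y\in V$ we have $E(x,y)$ exactly when there is a directed $\tilde{E}$-path in $\tilde{G}$ from $x$ to $y$ whose internal vertices are all proper $\tilde{G}$-vertices, i.e. lie outside $V$. Thus $E$ is a restricted reachability relation, and the whole task reduces to expressing ``there is an $\tilde{E}$-path from $x$ to $y$ meeting $V$ only at its endpoints,'' first in full $\MSO$ and then in weak-$\MSO$.

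For full $\MSO$ I would use the standard intersection-of-closed-sets encoding of reachability, localized to proper vertices. Writing $\rho(z,y)$ for ``a proper vertex $z$ reaches $y$ through proper intermediate vertices,'' set
\[
\rho(z,y)\;\equiv\;\forall A\,\Big[\big(z\in A \wedge \forall u\,\forall v\,((u\in A \wedge \neg V(u) \wedge \tilde{E}(u,v) \wedge \neg V(v))\rightarrow v\in A)\big)\;\rightarrow\;\exists w\,(w\in A \wedge \tilde{E}(w,y))\Big],
\]
and then define
\[
E(x,y)\;\equiv\; V(x)\wedge V(y)\wedge \exists z\,\big(\tilde{E}(x,z)\wedge(z=y \vee (\neg V(z)\wedge \rho(z,y)))\big).
\]
The inner universal quantifier pins down the minimal set of proper vertices reachable from $z$ along proper paths, so the final edge $\tilde{E}(w,y)$ is available precisely when $y$ is reached; the disjunct $z=y$ handles a direct edge (path of length one), and the remaining disjunct handles paths with at least one proper internal vertex.

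The harder part will be the weak-$\MSO$ definition, and this is the main obstacle. Over finite sets the closed-set formula above is unsound: a finite $A$ containing $z$ and closed under proper-to-proper edges need not exist when the reachable proper region is infinite, so $\forall A[\dots]$ would collapse to being vacuously true. The remedy is to replace ``membership in every closed set'' by ``existence of a finite witnessing path,'' which is legitimate precisely because every path is finite by definition. Concretely I would assert the existence of a finite set $A$ of proper vertices forming the vertex set of a simple $\tilde{E}$-path from $z$ to some $w$ with $\tilde{E}(w,y)$: require $z,w\in A$, that every element of $A$ other than $z$ has an $\tilde{E}$-predecessor in $A$, and that $A$ is acyclic, the last condition expressed by
\[
\forall C\,\big((C\subseteq A \wedge \exists t\,(t\in C))\rightarrow \exists u\,(u\in C \wedge \forall v\,(v\in C\rightarrow \neg \tilde{E}(v,u)))\big),
\]
i.e. every nonempty subset of $A$ has a source. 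Acyclicity forces the backward trace from $w$ to terminate, and the predecessor condition forces it to terminate at $z$, so $A$ carries a genuine $z$-to-$w$ path; conversely any such path supplies a finite $A$.

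Finally I would verify that both formulas define the same relation on $V\times V$, namely restricted reachability, which by Definition \ref{RepDefGraph} is exactly $E$. Together with the trivial definition of $V$ as the distinguished predicate, this shows $G$ is simultaneously $\MSO$- and weak-$\MSO$-definable in $(\tilde{G},V)$, establishing the claim.
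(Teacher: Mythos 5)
Your overall route coincides with the paper's: reduce $E(x,y)$ to ``an $\tilde{E}$-path from $x$ to $y$ whose internal vertices avoid $V$,'' i.e.\ reachability through proper vertices followed by one last $\tilde{E}$-edge, and define that reachability by the closed-set trick in full $\MSO$ and by a finite witness in weak-$\MSO$ (the paper phrases this as taking the reflexive--transitive closure of $R(x,y):=\tilde{E}(x,y)\wedge y\notin V$, and asserts its weak-$\MSO$ definability without detail). Your full-$\MSO$ half is correct.

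The weak-$\MSO$ half has a genuine gap. Your witnessing set $A$ is required to induce an \emph{acyclic} subgraph (every nonempty $C\subseteq A$ has a source), but this acyclicity is imposed on \emph{all} $\tilde{E}$-edges inside $A$, not only on the edges of the intended path; the vertex set of a simple directed path can induce cycles via back-edges, so the completeness direction fails. Concretely: let the proper vertices reachable from $z$ be $z,a,w$ with edges $z\to a$, $a\to w$, $w\to a$, and let $w$ be the only proper vertex with an edge into $y$. Then $y$ is reachable from $z$ through proper vertices, so $E(x,y)$ should hold; but any admissible $A$ containing $w$ must contain $a$ (its only predecessor), and then $C=\{a,w\}$ has no source, so no admissible $A$ exists and your formula is false. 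Dropping acyclicity is not an option either, since ``every element of $A$ other than $z$ has a predecessor in $A$'' alone is satisfied by sets containing a cycle not reachable from $z$. The standard repair is the cut condition: there is a finite $A$ with $z,w\in A$ such that for every $B\subseteq A$ with $z\in B$ and $w\notin B$ there are $u\in B$ and $v\in A\setminus B$ with $\tilde{E}(u,v)$ (and all of $A$ proper). A simple path from $z$ to $w$ witnesses this, and conversely the set of elements of $A$ reachable from $z$ inside $A$ cannot be a violating $B$, so it must contain $w$. With that replacement the rest of your argument goes through.
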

\begin{proof}
In $\tilde{G}$, let $R(x,y)$ if $\tilde{E}(x,y)$ and $y \notin V$. Let $R'(x,y)$ be the transitive and reflexive closure of $R$, which is both $\MSO$ and weak-$\MSO$ definable in $(\tilde{G},V)$. 
Since $G$ is represented by $\tilde{G}$, for every $x,y \in V$, $E(x,y)$ if and only if either $\tilde{E}(x,y)$, or there is $z$ such that $R'(x,z) \land \tilde{E}(z,y)$.
\end{proof}
\smallskip

\begin{defi} \label{RepDef}
A parity game $G$ is said to be \textbf{represented by a parity game} $\tilde{G}$ if:
\begin{enumerate}
\item $G$ is represented by $\tilde{G}$ as a graph.
\item The partition of the $G$-vertices between the 2 players is as in $G$.
\item A proper $\tilde{G}$-vertex is not reachable by proper $\tilde{G}$-vertices from both a $G$-vertex of player $I$ and a $G$-vertex of player $II$.
\item A proper $\tilde{G}$-vertex belongs to player $I$ if and only if it is reachable by proper $\tilde{G}$-vertices from a $G$-vertex of player $I$. Similarly for player $II$.
\item The colors of the $G$-vertices are the original colors + 2.
\item The proper $\tilde{G}$-vertices that belong to player $I$ have color $1$, and those of player $II$ have color $0$.
\end{enumerate}

\end{defi}

Note that if player $I$ stays from some point on her proper $\tilde{G}$-vertices, she loses, and similarly for player $II$.

The next proposition is immediate:
\begin{prop}
If a parity game $G$ is represented by a parity game $\tilde{G}$, then:
\begin{enumerate}
\item Given $x \in G$, a player has a winning strategy from $x$ in the game $G$ if and only if she has a winning strategy from $x$ in the game $\tilde{G}$. In other words, $winReg^{\tilde{G}}_I \cap G = winReg^{G}_I$, and similarly for player $II$.
\item Given $A \subseteq G$, a player has a uniform memoryless winning strategy from $A$ in the game $G$ if and only if she has a uniform memoryless winning strategy from $A$ in the game $\tilde{G}$.
\end{enumerate}
\end{prop}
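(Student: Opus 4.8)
The plan is to reduce the whole proposition to a single implication and then exploit determinacy. Concretely, I would first establish the one-directional claim $(\ast)$: for each player $P$ and each vertex $x\in V$, if $P$ wins from $x$ in $G$ then $P$ wins from $x$ in $\tilde{G}$. Granting $(\ast)$ for both players, part~(1) is immediate: since $G$ is determined (by the basic facts about parity games recalled above), every $x\in V$ lies in exactly one of $winReg^{G}_I, winReg^{G}_{II}$; applying $(\ast)$ to the relevant player puts $x$ into the corresponding winning region of $\tilde{G}$, and because $\tilde{G}$ is also determined it cannot simultaneously lie in the other. Hence $winReg^{\tilde{G}}_I\cap V=winReg^{G}_I$ and likewise for $II$, so all the content of part~(1) sits in $(\ast)$.

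To prove $(\ast)$ for player $I$, I would take a memoryless winning strategy $\sigma$ for $I$ on $winReg^{G}_I$ and lift it to $\tilde{G}$ as follows. At a $G$-vertex $v\in V_I$, player $I$ traverses a fixed finite path of proper $\tilde{G}$-vertices realizing the $G$-edge $v\to\sigma(v)$; such a path exists by Definition~\ref{RepDefGraph}(2), and by Definition~\ref{RepDef}(4) all its intermediate vertices belong to $I$, so $I$ genuinely controls the traversal. At each proper $I$-vertex she simply continues the current traversal toward its target, never looping. The decisive property of this lifted strategy $\tilde{\sigma}$ is that $I$ never remains forever among her own proper vertices.

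The correctness of $\tilde{\sigma}$ rests on a case analysis of an arbitrary $\tilde{\sigma}$-play $\pi$ beginning in $winReg^{G}_I$. If $\pi$ visits infinitely many $G$-vertices, then erasing the proper vertices yields an infinite $G$-play: the intervening proper segments realize genuine $G$-edges (Definition~\ref{RepDefGraph}(2)), player $I$'s segments execute $\sigma$, and the remaining transitions are legal choices of $II$; thus the projected play is $\sigma$-consistent, stays in $winReg^{G}_I$, and is won by $I$. By the coloring prescribed in Definition~\ref{RepDef}(5,6), the proper vertices cannot change the winner of a play that visits infinitely many $G$-vertices, so $\pi$ too is won by $I$. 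If instead $\pi$ eventually remains among proper vertices, then by Definition~\ref{RepDef}(3,4) its tail lies wholly in the proper vertices of a single player; since $\tilde{\sigma}$ forbids $I$ from dwelling in her own proper vertices, that player must be $II$, and by the note following Definition~\ref{RepDef} player $II$ loses such a play. Either way $I$ wins $\pi$, which gives $(\ast)$ for $I$; the argument for $II$ is symmetric, with the two proper colors and the even/odd roles interchanged. This settles part~(1).

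For part~(2) I would note that the lift above is positional whenever $\sigma$ is and the realizing paths are chosen canonically per edge: the move out of a $G$-vertex depends only on $\sigma$, and the move out of a proper vertex depends only on that vertex and its target, so $\tilde{\sigma}$ is memoryless and its winning region meets $V$ exactly in that of $\sigma$. Conversely, from a memoryless $\tilde{\sigma}$ winning from $A\subseteq V$ in $\tilde{G}$ I would define a memoryless $\sigma$ on $G$ by letting $\sigma(v)$ be the first $G$-vertex reached by $I$'s positional navigation out of $v$ (which must terminate, else $I$ dwells in her proper vertices and loses, contradicting that $\tilde{\sigma}$ wins); the same play-correspondence and coloring analysis show $\sigma$ wins from $A$. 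Thus memoryless winning strategies transfer both ways with the winning region preserved, so uniformity is preserved as well. I expect the genuine obstacle to be precisely the coloring bookkeeping inside the first case: verifying that the proper-vertex colors are irrelevant to the winner of any play visiting infinitely many $G$-vertices, so that the parity condition of $\tilde{G}$ projects exactly onto that of $G$ — this, together with its dual (a player confined to her own proper vertices loses), is exactly what Definition~\ref{RepDef}(5,6) and the subsequent note are engineered to deliver, while finiteness of each navigation segment and $\sigma$-consistency of the projection are routine once the play-correspondence is in place.
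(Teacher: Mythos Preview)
The paper offers no proof here --- it simply declares the proposition ``immediate'' after the definition --- so your detailed argument already goes further than the paper does, and its architecture (lift a positional strategy along realizing paths, split an arbitrary $\tilde G$-play according to whether it meets $V$ infinitely often, invoke determinacy for the converse) is the right one.

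The gap is exactly where you yourself flag the crux. You assert that ``the proper vertices cannot change the winner of a play that visits infinitely many $G$-vertices,'' but under the paper's \emph{min}-parity convention together with Definition~\ref{RepDef}(5,6) this is simply false. The $G$-vertex colors are shifted up by $2$ while proper vertices receive colors $0$ and $1$; hence in any play that visits proper vertices infinitely often the minimal recurring color is $0$ or $1$, completely overriding the $G$-colors. Concretely: take $G$ with two player-$II$ vertices $a,b$ of color $1$ and edges $a\to b\to a$, and represent each edge by a length-two path through a single proper $II$-vertex. This $\tilde G$ satisfies all six clauses of Definition~\ref{RepDef}; in $G$ player $II$ wins from $a$ (the unique play has minimal recurring color $1$), yet in $\tilde G$ the unique play from $a$ has minimal recurring color $0$ and player $I$ wins. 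So the step you identify as ``exactly what Definition~\ref{RepDef}(5,6) \ldots\ are engineered to deliver'' is not in fact delivered. This is really a slip in the paper's coloring rather than in your plan: the intended effect holds under a max-parity convention, or equivalently under min-parity with the $G$-colors left unshifted and the proper-vertex colors placed \emph{above} all $G$-colors (player $I$'s proper vertices at a large odd color, player $II$'s at a large even one). With that correction your argument goes through verbatim; without it, neither your proof nor the proposition as stated is correct.
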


Proposition \ref{WinRepresentInduceWin} plays a major role and reduces winning strategies of a game to the winning strategies of its representation:

\begin{restatable}{prop}{WinRepresentInduceWinStat} \label{WinRepresentInduceWin}
If $G$ is represented by $\tilde{G}$ and $win^{\tilde{G}}_I(x,y)$, $win^{\tilde{G}}_{II}(x,y)$ are the uniform memoryless winning strategies of $\tilde{G}$, then the uniform memoryless strategies of $G$ are definable in $(\tilde{G},V,win^{\tilde{G}}_I, win^{\tilde{G}}_{II})$.
\end{restatable}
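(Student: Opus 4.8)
The plan is to read off each player's $G$-strategy from the corresponding $\tilde{G}$-strategy by ``fast-forwarding'' through the proper $\tilde{G}$-vertices. Concretely, for a $G$-vertex $x \in V_I$ I would set $s_I(x)=y$, where $y \in V$ is the first $G$-vertex met when one starts at $x$ and repeatedly applies the functional relation $win^{\tilde{G}}_I$. By Definition \ref{RepDef}(4) every proper vertex reachable through proper vertices from $x$ belongs to player $I$, so the whole initial segment of this walk is governed by $win^{\tilde{G}}_I$ alone; hence the walk is deterministic and $y$ is unique. The strategy $s_{II}$ is defined symmetrically from $win^{\tilde{G}}_{II}$.

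First I would check that the prescription is total on the winning region, i.e.\ that the walk actually reaches a $G$-vertex. If it did not, it would stay forever among player $I$'s proper vertices; but those have color $1$, so by the remark following Definition \ref{RepDef} such a play is losing for player $I$, contradicting that $win^{\tilde{G}}_I$ wins from $winReg^{\tilde{G}}_I$ (and $x \in winReg^{\tilde{G}}_I$ whenever $x \in winReg^G_I$, by the preceding proposition). Thus the walk terminates at a $G$-vertex in finitely many steps and $s_I$ is a well-defined memoryless strategy.

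Next I would argue definability, which is the point of the statement. Working in $(\tilde{G},V,win^{\tilde{G}}_I,win^{\tilde{G}}_{II})$, let $\mathrm{PC}(z,w)$ be the reflexive-transitive closure of the relation ``$win^{\tilde{G}}_I(a,b)$ with $a,b \notin V$''; as in the proof of Proposition \ref{RepresentedIsWeakInterp} this closure is both $\MSO$- and weak-$\MSO$-definable. Then
\[
s_I(x,y)\ \equiv\ V_I(x) \wedge V(y) \wedge \exists z\,\Big( win^{\tilde{G}}_I(x,z) \wedge \big( z=y \vee \exists w\,(\neg V(z) \wedge \mathrm{PC}(z,w) \wedge win^{\tilde{G}}_I(w,y))\big)\Big),
\]
and symmetrically for $s_{II}$ via $win^{\tilde{G}}_{II}$. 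Because the walk is finite, a finite set witnesses the closure, so this also defines $s_I$ in weak-$\MSO$.

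Finally I would verify that these definable strategies are winning and uniform. Given a play of $G$ from $x \in winReg^G_I$ consistent with $s_I$, I would expand each move into the corresponding $\tilde{G}$-path through proper vertices: player $I$'s moves follow $win^{\tilde{G}}_I$ by construction, and player $II$'s moves are realized by paths through her own proper vertices, on which $win^{\tilde{G}}_I$ imposes nothing (Definition \ref{RepDef}(3)--(4)). The expanded play is therefore consistent with $win^{\tilde{G}}_I$ from $x \in winReg^{\tilde{G}}_I$, hence winning for player $I$, and the winner-correspondence recorded in the preceding proposition transfers the win back to the $G$-play; since $winReg^G_I = winReg^{\tilde{G}}_I \cap V$, the strategy $s_I$ is uniform, and $s_{II}$ is handled identically. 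The step I expect to require the most care is this last one -- preserving the winner under expansion/projection -- because the low colors $0,1$ on proper vertices interact with the $+2$-shifted $G$-colors; I would lean on the winner-correspondence of the preceding proposition rather than redo the parity bookkeeping here.
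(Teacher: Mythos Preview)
Your proposal is correct and follows the same idea as the paper: read off each $G$-strategy by fast-forwarding the corresponding $\tilde{G}$-strategy through proper vertices, using the reflexive--transitive closure argument of Proposition~\ref{RepresentedIsWeakInterp} for definability. The paper packages this slightly more compactly by merging the two strategies into a single edge relation $W$ (taking $win^{\tilde G}_I$ on $winReg^{\tilde G}_I$ and $win^{\tilde G}_{II}$ on $winReg^{\tilde G}_{II}$), observing that $(\tilde V,W)$ is itself a graph representation of $G$ equipped with its induced strategies, and then invoking Proposition~\ref{RepresentedIsWeakInterp} once; your per-player treatment with the explicit formula for $s_I$ unfolds exactly the same closure computation.
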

\begin{proof}
Let $winReg^{\tilde{G}}_I(x), winReg^{\tilde{G}}_{II}(x)$ be formulas for the winning regions of the players. Consider the edge relation $W(x,y)$ on $\tilde{G}$ given by the winning strategies, which is, $win^{\tilde{G}}_I$ on $winReg^{\tilde{G}}_I$ and $win^{\tilde{G}}_{II}$ on $winReg^{\tilde{G}}_{II}$. The graph $(\tilde{G},W)$ represents $G$ with its uniform memoryless winning strategies. The same arguments as in the proof of
  Proposition \ref{RepresentedIsWeakInterp} show that  these strategies are definable in $(\tilde{G},V,win^{\tilde{G}}_I, win^{\tilde{G}}_{II})$.
\end{proof}
\subsubsection{From graph representation to game representation} \hfill\\
We first set the following terminology for definable structures and graphs:
\begin{defi}
Let $\mathcal{M} \subseteq \mathcal{N}$ be definable in $\mathcal{N}$.
\begin{enumerate}
\item Let $G$ be a \textbf{graph} definable in $\mathcal{M}$. We say that $G$ is \textbf{represented in} $\mathcal{N}$ \textbf{by} $\tilde{G}$ if $\tilde{G}$ is a graph definable in $\mathcal{N}$ that represents $G$.
\item Let $G$ be a \textbf{parity game} definable in $\mathcal{M}$. We say that $G$ is \textbf{represented in} $\mathcal{N}$ \textbf{by} $\tilde{G}$ if $\tilde{G}$ is a parity game definable in $\mathcal{N}$ that represents $G$.

\end{enumerate}
\end{defi}

Given a parity game $G$, we will say it is represented as a graph if the graph structure of $G$ is represented, and represented as a game if the game $G$ is represented. If $G$ is represented as a graph by a graph $\tilde{G}$, it might be impossible to extend this representation to a game representation, as a path in $\tilde{G}$ that represents an edge of player $I$ might intersect another path that represents an edge of player $II$ (this will violate condition (4) of definition \ref{RepDef}). However, by taking two copies of $\tilde{G}$ we can make these paths disjoint. That is the content of the following definition and lemma:
\begin{defi} 
Let $G=(V_1,V_2,E,C_0,\dots,C_{d-1})$ be a parity game. Let $\pi_0,\pi_1$ be the isomorphic embeddings $G \to G\times\{0\}, G \to G\times \{1\}$, respectively, and let $\pi$ be the embedding $\pi_0 \restriction {V_1} \cup \ \pi_1 \restriction {V_2}$.

The \textbf{splitted-arena game} $G_2$ is a game with arena $G \times \{0,1\}$ in which the vertices of player $I$ are $\pi_0(V_1)$ and the vertices of player $II$ are $\pi_1(V_2)$. In $G_2$, there is an edge from $x$ to  $y$   if there is an edge from $\pi^{-1}(x)$ to $\pi^{-1}(y)$   in $G$. Similarly, in $G_2$, $x$ has color $i$ if $\pi^{-1}(x)$ has color $i$.
\end{defi}

\begin{restatable}{lem}{GraphRepToGameRepStat} \label{GraphRepToGameRep} Let  $G$ be an $\mathcal{N}$-definable parity game. If $G$ is represented as a graph in $\mathcal{N}$ by a graph $\tilde{G}$, then there is an $\mathcal{N} \times 2$-definable parity game $\tilde{G_2}$ such that the splitted-arena game $G_2$ is represented as a game in $\mathcal{N} \times 2$ by $\tilde{G_2}$, and:
\begin{enumerate}
\item If $\tilde{G}$ has bounded (out) degree, then so does $\tilde{G_2}$.
\item If $\tilde{G}$ has an $\mathcal{N}$-definable local linear order, then $\tilde{G_2}$ has an $\mathcal{N}\times 2$-definable local linear order.
\item If $\tilde{G_2}$ has $\mathcal{N} \times 2$-definable uniform memoryless winning strategies, then $G$ has $\mathcal{N}$-definable uniform memoryless winning strategies.

Moreover, the $\mathcal{N}$-definable  uniform memoryless winning strategies are computable from the $\mathcal{N} \times 2$-definable uniform memoryless winning strategies.
\end{enumerate}
\end{restatable}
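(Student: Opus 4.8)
The plan is to build $\tilde{G_2}$ by placing two copies of the graph $\tilde{G}$ inside $\mathcal{N}\times 2$ and routing each representing path according to the player who owns its source vertex, so that the paths issuing from player $I$ and from player $II$ become vertex-disjoint. Concretely, I keep the $G_2$-vertices $\pi_0(V_1)\cup\pi_1(V_2)$, duplicate every proper $\tilde{G}$-vertex into both copies, and define the edges of $\tilde{G_2}$ from those of $\tilde{G}$ by cases on the endpoints: a $\tilde{G}$-edge leaving a player $I$ $G$-vertex is sent into copy $0$ and one leaving a player $II$ $G$-vertex into copy $1$; a $\tilde{G}$-edge between two proper vertices is duplicated inside each copy; a $\tilde{G}$-edge from a proper vertex into a $G$-vertex $b$ is routed, from either copy, to the single $G_2$-location $\pi(b)$ (which may sit in copy $0$ or copy $1$ according to the owner of $b$); and a direct $\tilde{G}$-edge between two $G$-vertices $a,b$ becomes the edge $\pi(a)\to\pi(b)$. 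Each clause refers only to $V_1,V_2$ (which are $\mathcal{M}$-, hence $\mathcal{N}$-definable), to $\tilde{E}$ lifted to $\mathcal{N}\times 2$ via Lemma \ref{copyingLem}, and to the copy predicates $P_0,P_1,\sim$ and the map $\pi$, all available in $\mathcal{N}\times 2$; so $\tilde{G_2}$ is $\mathcal{N}\times 2$-definable.

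The heart of the argument is checking that $\tilde{G_2}$ represents $G_2$ as a game in the sense of Definition \ref{RepDef}. By construction, a copy-$0$ proper vertex can be reached through proper vertices only from a copy-$0$ entry edge, that is, only from a player $I$ $G_2$-vertex, and symmetrically for copy $1$; since every proper $\tilde{G_2}$-vertex lies in exactly one copy, conditions (3)--(4) hold with copy-$0$ proper vertices assigned to player $I$ and copy-$1$ ones to player $II$, and the colorings of (5)--(6) are set accordingly. For the graph part (condition (2) of Definition \ref{RepDefGraph}), the path clauses are designed so that a path through proper $\tilde{G_2}$-vertices from $\pi(v_1)$ to $\pi(v_2)$ projects to a path through proper $\tilde{G}$-vertices from $v_1$ to $v_2$, and conversely; hence the $\tilde{G_2}$-paths realize exactly the edges of $G_2$, which are the edges of $G$ transported by the isomorphism $\pi$. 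I expect this to be the main obstacle, since it is precisely where the two-copy device earns its keep: in a bare graph representation a proper vertex may lie on paths issuing from both players, and the routing by source-player is what restores the disjointness that Definition \ref{RepDef} demands.

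The three listed properties then follow without difficulty. For (1) and (2), the out-neighbours of a proper vertex $(a,c)$, and of a $G_2$-vertex $\pi(v)$, are in a definable bijection with the out-neighbours of $a$, respectively $v$, in $\tilde{G}$, so the out-degree is preserved exactly and the full degree at most doubles; transporting a given local linear order $R(u,x,y)$ of $\tilde{G}$ along this bijection yields an $\mathcal{N}\times 2$-definable local linear order of $\tilde{G_2}$. For (3), suppose $\tilde{G_2}$ has $\mathcal{N}\times 2$-definable uniform memoryless winning strategies. By Proposition \ref{WinRepresentInduceWin} the uniform memoryless winning strategies of $G_2$ are definable in $(\tilde{G_2},V',win^{\tilde{G_2}}_I,win^{\tilde{G_2}}_{II})$, where $V'$ denotes the predicate for the $G_2$-vertices, hence they are $\mathcal{N}\times 2$-definable. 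Pulling these back along the isomorphism $\pi$ gives uniform memoryless winning strategies of $G$; these are binary relations on $V\subseteq M$, defined by an $\mathcal{N}\times 2$-formula whose singleton arguments are the $\pi$-images of $G$-vertices. Placing the arguments in copy $0$ and invoking the $\phi\mapsto\phi^*$ direction of Lemma \ref{copyingLem} (with the copy-$1$ components set empty) converts this into an $\mathcal{N}$-definition of the strategy. Since every transformation used --- the lift of $\tilde{E}$, the formulas of Proposition \ref{WinRepresentInduceWin}, and the copying translation --- is effective, the resulting $\mathcal{N}$-definable strategy is computable from the given $\mathcal{N}\times 2$-definable one, which yields the ``moreover'' clause.
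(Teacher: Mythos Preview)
Your construction is the same as the paper's: the paper defines the vertices of $\tilde{G_2}$ as $(V_1\times\{0\})\cup(V_2\times\{1\})\cup((\tilde V\setminus V)\times\{0,1\})$, assigns players by the second coordinate, and declares $((m_1,j_1),(m_2,j_2))$ an edge iff $(m_1,m_2)\in\tilde E$ and either $j_1=j_2$ or $m_2\in V$; your four case clauses are exactly the unpacking of this single condition, and your justification of items (1)--(3) via the obvious out-neighbour bijection, Proposition~\ref{WinRepresentInduceWin}, and Lemma~\ref{copyingLem} is what the paper invokes as well. One small imprecision: when you translate the $G_2$-strategy back to $\mathcal N$ you write ``placing the arguments in copy~$0$ \ldots\ with the copy-$1$ components set empty'', but the second argument $\pi(y)$ lands in copy~$1$ whenever $y\in V_2$, so the $\mathcal N$-formula you obtain from Lemma~\ref{copyingLem} must be the disjunction over the two possible placements of $y$ (governed by $V_1(y)$ versus $V_2(y)$); this is a one-line fix and does not affect the argument.
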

\begin{proof}
Let $G=(V_1,V_2,\Pi_0,\dots,\Pi_{d-1},E)$ be a parity game that is  represented as a graph by the $\mathcal{N}$-definable graph $\tilde{G}$. We define a game $\tilde{G_2}$ in $\mathcal{N} \times 2$ that represents $G_2$.

The vertices of the game $\tilde{G_2}$ are \[\big(V_1 \times \{0\}\big) \cup \big(V_2 \times \{1\}\big) \cup \big( (\tilde{V} \setminus V) \times \{0,1\} \big).\] The partition between the two players is according to the last coordinate.

A pair of vertices $((m_1,j_1), (m_2,j_2))$ is an edge of $\tilde{G_2}$ if $(m_1,m_2)$ is an edge of $\tilde{G}$ and either $j_1=j_2$, or $m_2 \in V$.

The colors are as expected of a parity game representation.

It is straightforward to show that $\tilde{G_2}$ represents $G_2$. $(1)$ is immediate, and $(2)$ is straightforward as well. $(3)$ follows from Lemma \ref{copyingLem} and Proposition \ref{WinRepresentInduceWin}.
\end{proof}

\smallskip

\subsubsection{Copy representations} \hfill\\
We can now introduce the main form of representation we will use in this paper - representation in a copying of a structure within a class of graphs / games.

Given a copying of $\mathcal{M}$, $\mathcal{M} \times k$, and $0 \leq i \leq k-1$, let $\pi_i: \mathcal{M} \to \mathcal{M} \times k$ be the embedding given by $\pi_i(m)=(m,i)$.

In most cases, when representing a graph $G$ by another graph $\tilde G$, we will be interested both in the structure $\mathcal M$ in which $G$ is defined, and in specific properties of $\tilde G$, such as having a bounded degree. This is why below definition captures both the defining structure and the class of the representing graph:

\begin{defi} Let $G$ be a graph / game definable in $\mathcal{M}$, and $\mathcal{C}$ a class of graphs / games.

$G$ is \textbf{copy represented in} $\mathcal{M}$ \textbf{within} $\mathcal{C}$ if for some $k$ and $i$, there is an $\mathcal{M}\times k$-definable graph / game $\tilde{G} \in \mathcal{C}$ such that $\pi_i(G)$ is represented in $\mathcal{M} \times k$ by $\tilde{G}$ .

When $G$ is copy represented in $\mathcal{M}$ within $\mathcal{C}$, we may assume that for some $k$, $\pi_0(G)$ is represented in $\mathcal{M} \times k$ by $\tilde{G} \in \mathcal{C}$.
\end{defi}
For example, if $\mathcal{C}$ is the class of bounded degree graphs,
then $G$ is copy represented in $\mathcal{M}$ within $\mathcal{C}$ if there is an $\mathcal{M} \times k$-definable bounded degree graph $\tilde{G}$ that represents $G$, or more precisely, represents $\pi_0(G)$.

\subsection{Definable strategies in unbounded degree games} \label{MainSchemeSub}

We can now prove the correctness of our scheme for finding definable strategies in unbounded degree games:
\begin{restatable}{thm}{copyingDefStrategyStat} \label{copyingDefStrategy}
If $\mathcal{M}$ has the \textbf{selection property}, and $G$ is an $\mathcal{M}$-definable parity game that is copy represented in $\mathcal{M}$ as a graph by a graph that has \textbf{bounded out degree} and a \textbf{definable local linear order}, then:
\begin{enumerate}
\item There are $\mathcal{M}$\textbf{-definable uniform memoryless winning strategies} for $G$.
\item If selection for $\mathcal{M}$ is solvable, then the winning strategies are computable.
\end{enumerate}
\end{restatable}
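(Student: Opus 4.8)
The plan is to chain together the representation machinery developed above with the bounded-degree result (Corollary \ref{boundedDegree}) and the copying lemma. First I would unpack the hypothesis: by definition of copy representation, there is some $k$ such that $\pi_0(G)$ is represented as a graph in $\mathcal{M} \times k$ by an $\mathcal{M} \times k$-definable graph $\tilde{G}$ which has bounded out degree and an $\mathcal{M} \times k$-definable local linear order. The goal is to descend from a definable strategy for (a representation of) $\pi_0(G)$ back to an $\mathcal{M}$-definable strategy for $G$ itself.

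The key steps, in order, are as follows. I would first apply Lemma \ref{GraphRepToGameRep} with $\mathcal{N} = \mathcal{M} \times k$ and the game $\pi_0(G)$, which is definable in $\mathcal{M} \times k$. This produces an $(\mathcal{M} \times k) \times 2$-definable parity game $\tilde{G_2}$ that represents the doubled game $(\pi_0(G))_2$ as a \emph{game}, and moreover $\tilde{G_2}$ inherits bounded out degree (part (1) of the lemma) and an $(\mathcal{M} \times k)$-definable local linear order (part (2) of the lemma). Second, I would observe that $(\mathcal{M} \times k) \times 2$ is itself a copying of $\mathcal{M}$ (up to renaming the index set $\{0,\dots,k-1\}\times\{0,1\}$ with $\{0,\dots,2k-1\}$), so by Proposition \ref{copyingSelection} it has the selection property (and solvable selection when $\mathcal{M}$ does), since $\mathcal{M}$ does by hypothesis. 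Third, I would invoke Corollary \ref{boundedDegree} on $\tilde{G_2}$ over the structure $(\mathcal{M} \times k) \times 2$: since this structure has the selection property and $\tilde{G_2}$ is a bounded out degree game with a definable local linear order, it has $(\mathcal{M} \times k) \times 2$-definable uniform memoryless winning strategies — and these are computable when selection for $\mathcal{M}$ is solvable, using Proposition \ref{copyingSelection} again to carry solvability through the copying. Fourth, part (3) of Lemma \ref{GraphRepToGameRep} converts these back into $(\mathcal{M} \times k)$-definable uniform memoryless winning strategies for $\pi_0(G)$, computably. Finally, since $\pi_0$ is an isomorphic embedding of $\mathcal{M}$ onto the zeroth copy, a strategy definable in $\mathcal{M} \times k$ for $\pi_0(G)$ pulls back via Lemma \ref{copyingLem} to an $\mathcal{M}$-definable strategy for $G$: a binary relation definable in $\mathcal{M} \times k$ on the image of $\pi_0$ corresponds, by the copying lemma, to a binary relation definable in $\mathcal{M}$.

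The one genuine subtlety — and what I expect to be the main point to get right rather than a deep obstacle — is the careful bookkeeping of \emph{which structure each object is definable in}, since we pass through two successive copyings. The cleanest way to handle this is to note that an iterated copying $(\mathcal{M} \times k) \times 2$ is a single copying $\mathcal{M} \times 2k$, so all of Lemma \ref{copyingLem}, Corollary \ref{copyingRecursive} and Proposition \ref{copyingSelection} apply directly to it relative to $\mathcal{M}$, and definability and solvable selection transfer in both directions. Everything else is a mechanical concatenation of results already established: the representation theory guarantees the strategy transfer (via Proposition \ref{WinRepresentInduceWin}, which underlies part (3) of Lemma \ref{GraphRepToGameRep}), and selection plus bounded degree guarantees the strategy exists on the representing game. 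The computability claim of part (2) follows by tracking the "computable/solvable" clauses through each of Lemma \ref{GraphRepToGameRep}, Corollary \ref{boundedDegree}, and Proposition \ref{copyingSelection}.
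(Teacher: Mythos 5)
Your proposal matches the paper's proof essentially step for step: Lemma \ref{GraphRepToGameRep} to pass to the game $\tilde{G_2}$ definable in $\mathcal{M}\times k\times 2$, Proposition \ref{copyingSelection} to transfer (solvable) selection through the iterated copying, the bounded-out-degree result to obtain definable uniform memoryless winning strategies there, and then Lemma \ref{GraphRepToGameRep}(3) together with Lemma \ref{copyingLem} to descend back to an $\mathcal{M}$-definable strategy for $G$. The only slip is a label: the result you invoke in your third step (bounded out degree plus a definable local linear order) is Theorem \ref{boundedOutDegree}, not Corollary \ref{boundedDegree} (which assumes bounded degree); the substance of your argument already uses the correct hypotheses, so this is purely a citation fix.
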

\begin{proof} Let $k,i \in \omega$ and $\tilde{G}$ an $\mathcal{M} \times k$-definable graph that has bounded out degree and a definable local linear order such that $\pi_i(G)$ is represented in $\mathcal{M}\times k$ by $\tilde{G}$. By Lemma \ref{GraphRepToGameRep}, $\big(\pi_i(G)\big)_2$ is represented as a game in $\mathcal{M} \times k \times 2$ by a game $\tilde{G}_2$ that has bounded out degree and a definable local linear order.

By Lemma \ref{copyingSelection}, $\mathcal{M} \times k \times 2$ has the selection property as well, and has solvable selection if $\mathcal{M}$ has solvable selection. It then follows from Theorem \ref{boundedOutDegree} that there are $\mathcal{M} \times k \times 2$-definable uniform memoryless winning strategies for $\tilde{G}_2$. Using Lemma \ref{GraphRepToGameRep} again, there are $\mathcal{M} \times k$-definable uniform memoryless winning strategies for $\pi_i(G)$. Lemma \ref{copyingLem} concludes the proof.
\end{proof}

\smallskip

Due to Lemma \ref{DefLocLinearOrder}:
\begin{cor} \label{copyingDefStrategyBdd}
If $\mathcal{M}$ has the \textbf{selection property}, and $G$ is an $\mathcal{M}$-definable parity game that is copy represented in $\mathcal{M}$ as a graph by a graph that has \textbf{bounded degree}, then there are $\mathcal{M}$\textbf{-definable uniform memoryless winning strategies} for $G$.

Moreover, if selection for $\mathcal{M}$ is solvable, then the winning strategies are computable.
\end{cor}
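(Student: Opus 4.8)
The plan is to combine Theorem \ref{copyingDefStrategy} with Lemma \ref{DefLocLinearOrder}. The hypothesis here, namely that $G$ is copy represented by a graph of \emph{bounded degree}, is weaker than the hypothesis of Theorem \ref{copyingDefStrategy}, which asks the representing graph to have bounded \emph{out} degree together with a \emph{definable local linear order}. So the entire task reduces to manufacturing these two extra ingredients out of bounded degree alone, after which the corollary is immediate from Theorem \ref{copyingDefStrategy}.

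First I would unwind the definition of copy representation: there is some $k$ and an $\mathcal{M}\times k$-definable bounded degree graph $\tilde{G}$ such that (using the remark that lets us take $i=0$) $\pi_0(G)$ is represented in $\mathcal{M}\times k$ by $\tilde{G}$. Bounded degree trivially yields bounded out degree, since the out degree of any vertex is at most its total degree. To obtain the local linear order I would first apply Proposition \ref{copyingSelection} to transfer the selection property from $\mathcal{M}$ to $\mathcal{M}\times k$, and then apply Lemma \ref{DefLocLinearOrder} to the bounded degree graph $\tilde{G}$, which is definable in $\mathcal{M}\times k$. This gives an $\mathcal{M}\times k$-definable uniform local linear order on $\tilde{G}$, which is in particular a local linear order in the sense required by Theorem \ref{copyingDefStrategy}.

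With bounded out degree and a definable local linear order for $\tilde{G}$ now in hand, the hypotheses of Theorem \ref{copyingDefStrategy} are met, and its first conclusion delivers $\mathcal{M}$-definable uniform memoryless winning strategies for $G$. For the computability clause, if selection for $\mathcal{M}$ is solvable then Proposition \ref{copyingSelection} makes selection for $\mathcal{M}\times k$ solvable, the ``moreover'' part of Lemma \ref{DefLocLinearOrder} makes the local linear order computable, and part (2) of Theorem \ref{copyingDefStrategy} then makes the strategies computable. Since every step is a direct appeal to a previously established result, I do not expect a genuine obstacle; the only point requiring care is to keep track of the ambient structure, applying the selection property and the local linear order lemma over $\mathcal{M}\times k$ rather than over $\mathcal{M}$ itself.
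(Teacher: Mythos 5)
Your proposal is correct and matches the paper's intended argument: the paper derives this corollary directly from Theorem \ref{copyingDefStrategy} "due to Lemma \ref{DefLocLinearOrder}", exactly as you do, using Proposition \ref{copyingSelection} to transfer selection to $\mathcal{M}\times k$ and the observation that bounded degree implies bounded out degree. Your care in applying the local linear order lemma over $\mathcal{M}\times k$ rather than $\mathcal{M}$ is precisely the right bookkeeping.
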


\subsection{Definability  by regular expressions and representations} \label{RegExpSub}

When a graph $G$ represents a graph $H$, paths of $G$ correspond to edges of $H$. Thus, a standard way of describing a set of paths of $G$ helps in constructing a graph represented by $G$. Whenever $G$ is a labelled graph, regular expressions over its labels are a standard way of describing sets of paths. If in addition $G$ has bounded degree, the represented graphs are copy represented in $G$ by a bounded degree graph, and that might be the first step in finding a definable strategy. That motivates the following definitions:

Let $\Sigma$ be a finite alphabet $\Sigma = \{\sigma_1,\dots,\sigma_k\}$.
\begin{defi} \label{DefLabelledGraphWithLanguage}
\begin{enumerate}
\item A $\Sigma$\textbf{-labelled graph} is a graph whose vertices and edges have labels in $\Sigma$. As a structure, it has monadic predicates $V_1,\dots,V_k$ for the vertices labelled by $\sigma_1,\dots,\sigma_k$, respectively, and binary predicates $E_1,\dots,E_k$ for the edges labelled by $\sigma_1,\dots,\sigma_k$, respectively.
\item Given a path in a labelled graph, a \textbf{label of the path} is a concatenation of labels of the vertices and edges that form the path. More precisely, if $(v_1,\dots,v_n)$ is a path, and for every $1 \leq i \leq n-1$, $e_i$ is the edge $(v_i,v_{i+1})$, then a label of the path is a concatenation of labels of the sequence $(v_1,e_1,\dots,v_{n-1},e_{n-1},v_n)$.

In particular, the label of a single vertex path is a label of that vertex.

The label of the empty path is the empty string,
\item Let $L \subseteq \Sigma^*$ be a language over $\Sigma$, and $G$ a $\Sigma$-labelled graph. $L$ defines an edge relation $E_L$ on $G$ by: $(u,v)\in E_L$ if there is a path from $u$ to $v$ and a label of that path in $L$.
\end{enumerate}
\end{defi}
\begin{rem}
The label doesn't have to be unique - a vertex or an edge may have multiple labels, in which case a path has multiple labels. It is enough for the language $L$ to accept some label of a path in order for that path to induce an edge of $E_L$.

\end{rem}

\begin{defi}
A graph $H=(V,E)$ is \textbf{definable by regular expressions} in a $\Sigma$-labelled graph $G$ if $V$ is $\MSO$-definable in $G$, and there is a regular language $L$ over $\Sigma$ such that $E=E_L$.

\end{defi}

Our notion of definability by regular expressions is closely related to
Caucal’s notion of rational inverse mapping, which plays a central role
in his work on the pushdown hierarchy \cite{Caucal03}

The following example closely follows the notation of Definition \ref{DefLabelledGraphWithLanguage}:

\begin{exa}
Let $\Sigma = \{\sigma_1,\sigma_2,\sigma_3\}$ where $\sigma_1=0$, $\sigma_2=1$, $\sigma_3=-1$. Let $G=(V_1,V_2,V_3,E_1,E_2,E_3)$ be the following $\Sigma$-labelled graph on $\omega$:

\indent $V_1=\omega. $ \\
\indent $ E_2 = (n,n+1).$ \\ 
\indent $ E_3 = (n,n-1).$ \\\indent $ V_2=V_3=E_1 = \emptyset.$ \\
Let $H$ be a graph on $\omega$ whose edges are \\
\indent $\{(n,n+3\cdot m)\ |\ n,m\in\omega\}$ \\
and \\
\indent $\{(n,n-3\cdot m)\ |\ n,m\in\omega;\ n-3\cdot m \in \omega\}$.

\begin{itemize}
\item $H$ is definable by regular expressions in $G$: the regular expression \[\Big(\big((0\cdot 1)^3\big)^* + \big((0\cdot -1)^3\big)^*\Big)\cdot 0\] defines the edge relation.
\item $H$ is MSO-definable in $G$.
\item $G$ has bounded out degree, and $H$ has unbounded out degree.
\end{itemize}
We now define in a copying of $\omega$ a graph $\tilde{H}$ that represents $H$. Consider \[\tilde{V} = \omega \cup (\omega \times \{1,-1\} \times \{0,1,2\}).\] We denote $(x,y) \in \tilde{E}$ by $x\to y$:
\begin{enumerate}
\item $n \to (n+1,1,1)$ and $n \to (n-1,-1,2)$.
\item $(n,1,m) \to (n+1,1,m+1(mod\ 3))$, and $(n,-1,m)\to (n-1,-1,m-1(mod\ 3))$.
\item $(n,x,0) \to n$.
\end{enumerate}
The resulting graph $\tilde{H}$ demonstrates that $H$ is copy represented in $\omega$ by a bounded out degree graph.
\end{exa}
\smallskip

The following is a well-known folklore. For completeness, we give here the proof:
\begin{restatable}{prop}{RegExpWeakDefStatement} \label{RegExpWeakDef} Let $G$ be a $\Sigma$-labelled graph, and let $H=(V,E)$ be a graph definable by regular expressions in $G$. Then $H$ is both $\MSO$ and weak-$\MSO$ definable in $(G,V)$.
\end{restatable}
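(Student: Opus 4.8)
The plan is to reduce the edge relation $E = E_L$ to a reachability relation in a product of $G$ with a finite automaton, and then to invoke the (weak-)$\MSO$ definability of reflexive–transitive closure that is already used in the proof of Proposition \ref{RepresentedIsWeakInterp}.

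First I would fix a finite automaton $\mathcal{A}=(S,s_0,\delta,F)$ recognizing $L$ and build a configuration graph on (vertices of $G$)$\times\, S$. The intended meaning of a configuration $(x,s)$ is that we sit at a vertex $x$ of $G$ with $\mathcal{A}$ in state $s$, about to read the label of $x$. I put a configuration edge from $(x,s)$ to $(y,s')$ precisely when there is a $G$-edge from $x$ to $y$ together with a choice of a label $a$ of $x$, an intermediate state $s_1$, and a label $b$ of the edge $(x,y)$ with $(s,a,s_1),(s_1,b,s')\in\delta$. Since $S$, $\delta$ and $\Sigma$ are finite and fixed, this configuration-edge relation is a quantifier-free formula (a finite disjunction) in the labelling predicates $V^{\sigma},E^{\sigma}$ of $G$. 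The point of alternating the vertex-label and edge-label reads is that a label of a path $v_1 e_1 \cdots v_n$ is exactly a run of $\mathcal{A}$ threaded through this graph. Consequently $(u,v)\in E_L$ holds iff some accepting-at-$v$ configuration $(v,s)$ — one for which a label $a$ of $v$ satisfies $(s,a,s_f)\in\delta$ for some $s_f\in F$ — is reachable, reflexively and transitively, from $(u,s_0)$; the reflexive case $u=v$ covers single-vertex paths.

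Next I would express this reachability over $(G,V)$ itself. Because $S$ is finite I can fold the state component into finitely many set variables: a subset of the configuration graph is coded by an $|S|$-tuple $\bar{A}=(A_s)_{s\in S}$ of subsets of the vertices of $G$, with $(x,s)$ present iff $x\in A_s$. The configuration-edge relation, the predicate ``closed under configuration edges'', and ``contains an accepting-at-$v$ configuration'' then become finite Boolean combinations over $s,s'\in S$ of base-signature formulas, so the whole reachability statement is an $\MSO$ formula over $(G,V)$. (Equivalently, one may run the argument in the copying $(G,V)\times|S|$ and pull it back by Lemma \ref{copyingLem}.) Intersecting with $V(u)\wedge V(v)$ gives the desired definition of $E$, and $V$ is $\MSO$-definable in $G$ by hypothesis, so $H$ is $\MSO$-definable in $(G,V)$.

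Finally, for weak-$\MSO$ I would use that the witness is always finite: if $(u,v)\in E_L$ then there is a finite path, hence a finite accepting run, hence a finite reachable set in the configuration graph, and one may take the vertex set of a simple such path. Thus the reflexive–transitive closure of the configuration-edge relation admits the finite-witness (existential) encoding and is weak-$\MSO$ definable — exactly the fact already invoked for the closure $R'$ in Proposition \ref{RepresentedIsWeakInterp}, now applied to a definable relation carried on $|S|$ tracks rather than one. Propagating the $|S|$-tuple coding through this existential, finite-set formulation yields a weak-$\MSO$ formula for $E$ over $(G,V)$. The main obstacle is precisely this last point: the universal ``every closed set containing the source meets the target'' description of reachability is not available in weak-$\MSO$ over an infinite $G$, since the reachable set may be infinite, so the argument must go through the existential finite-path witness, and I would lean on the standard weak-$\MSO$ definability of reflexive–transitive closure that the paper already treats as known.
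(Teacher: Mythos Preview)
Your proposal is correct and follows essentially the same approach as the paper: encode the automaton states as a tuple of monadic predicates over the vertex set of $G$ and reduce $E_L$ to a reachability statement in this product. The paper also takes a (deterministic) automaton $\mathcal{A}$, introduces predicates $Q_1,\dots,Q_{n-1}$ with the intended meaning ``reachable from $v_0$ by a labelled path ending in state $q_i$'', and writes down the closure rules exactly as you do for the configuration edges.

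The one presentational difference worth noting is that the paper gives a single existential formulation that works uniformly for both $\MSO$ and weak-$\MSO$: it introduces an auxiliary set variable $X$ (the carrier of the path), requires $\bq\subseteq X$ to be the \emph{least} tuple closed under the transition rules restricted to $X$, and then asserts $\bigvee_{q_i\text{ accepting}}Q_i(v)$. This avoids your split into a universal formulation for $\MSO$ and a separate finite-witness formulation for weak-$\MSO$; with the bounding set $X$ in place, the same formula is correct under either semantics (take $X$ finite for weak-$\MSO$). Your two-pronged route is equally valid, but the paper's packaging is slightly more economical.
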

\begin{proof}
Let $\mathcal{A}$ be a deterministic automaton over $\Sigma$ such that $E_{\mathcal{L}(\mathcal{A})}=E_H$, and let $Q^\mathcal{A}=\{q_0,\dots,q_{n-1}\}$, where $q_0$ is the initial state. We may assume $q_0$ has no incoming edges.

The idea is as follows: for $v_0 \in X \subseteq G$, we will add predicates $\bq(v_0,X)=Q_1,\dots,Q_{n-1}$ for any non-initial state of $Q^\mathcal{A}$. For $v\in X$, $Q_i(v)$ will mean that there is a path in $X$ from $v_0$ to $v$ with a label on which the run of $\mathcal{A}$ ends in the state $q_i$.

Let then $\psi(v_0,X,\bq)$ be a formula saying that $v_0 \in X$ and $\bq \subseteq X$ is minimal such that $\bq$ is closed under:
\begin{itemize}
\item If $l$ is a label of $v_0$ and the run of $\mathcal{A}$ on $l$ ends in $q_i$, then $Q_i(v_0)$.
\item If:
\begin{itemize}
\item $v',v\in X$, $Q_i(v')$, $E(v',v)$, and $l_1,l_2$ are some labels of the edge from $v'$ to $v$, and of $v$, respectively.
\item The run of $\mathcal{A}$ from $q_i$ on $l_1\cdot l_2$ ends in $q_j$.
\end{itemize}
then $Q_j(v)$.
\end{itemize}

The following is a (weak-)$\MSO$ definition of $E(v_0,v)$: there exist (finite) $X,\bq$ such that \[\psi(v_0,X,\bq) \land \bigvee_{q_i\ accepting}Q_i(v).\]
\end{proof}

\smallskip
When we use representation for the purpose of finding definable strategies, we would want to represent a game of unbounded degree by a game of bounded degree. We've seen that an unbounded degree game $H$ might be definable by regular expressions in a graph $G$ of bounded degree. Whenever that happens, we can find a bounded degree representing graph that is definable in $G \times k$:

\begin{restatable}{thm}{RegExpToReprStat} \label{RegExpToRepr}
Let $G=(V,E)$ be a bounded (out) degree $\Sigma$-labelled graph, and let $H$ be a graph definable by regular expressions in $G$. Then $H$ is effectively copy represented in $G$ by a bounded (out) degree graph.

Moreover, if $G$ has definable local linear order, then so does the representing graph.
\end{restatable}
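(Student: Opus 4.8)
The plan is to thread a finite automaton recognising $L$ through $G$, devoting one copy of $G$ to each automaton state so that a representing path records the progress of a run along a $G$-walk. Fix a deterministic finite automaton $\mathcal{A}=(Q,q_0,\delta,F)$ with $Q=\{q_0,\dots,q_{m-1}\}$ and $\mathcal{L}(\mathcal{A})=L$, and work inside the copying $G\times(m+1)$. I reserve copy $0$ for the vertices $V_H$ of $H$; these become the only $G$-vertices of the representation, namely $\pi_0(V_H)$. Copies $1,\dots,m$ carry proper \emph{tracking} vertices: for a vertex $g$ of $G$ and $1\le j\le m$, the vertex $(g,j)$ stands for ``the run has reached graph position $g$ in state $q_{j-1}$, having just consumed the label of $g$.'' I take all $(g,j)$ with $g$ a vertex of $G$ and $1\le j\le m$ as proper vertices; superfluous ones are harmless.

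The edges of the representing graph $\tilde H$ come in three kinds, the crucial observation being that a path-label in $G$ has the alternating form vertex-label, edge-label, vertex-label, and so on, ending in a vertex-label; hence one tracking step must consume one edge-label together with the following vertex-label. I put a \emph{start} edge $\pi_0(u)\to(u,j)$ whenever $u\in V_H$ and $q_{j-1}=\delta(q_0,\sigma)$ for some vertex-label $\sigma$ of $u$; a \emph{tracking} edge $(g,i)\to(g',j)$ whenever $g'$ is an out-neighbour of $g$ in $G$ with some edge-label $\tau$ and some vertex-label $\rho$ of $g'$ for which $q_{j-1}=\delta(\delta(q_{i-1},\tau),\rho)$; and an \emph{accept} edge $(v,i)\to\pi_0(v)$ whenever $v\in V_H$ and $q_{i-1}\in F$. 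As the $\Sigma$-labelling predicates, the edge relation, the twin relation $\sim$, the copy predicates $P_0,\dots,P_m$ and the defining formula of $V_H$ are all available in $G\times(m+1)$, each of these three relations is a finite Boolean combination of atomic formulas, so $\tilde H$ is $\MSO$-definable there by formulas computable from $\mathcal{A}$ (itself computable from $L$); this gives effectivity, and by Lemma~\ref{copyingLem} the definition reduces to $\MSO$ over $G$.

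Next I verify the representation condition of Definition~\ref{RepDefGraph}. Since start edges are the only edges leaving copy $0$ and accept edges the only ones entering it, every $\tilde H$-path from $\pi_0(u)$ to $\pi_0(v)$ meeting no further $G$-vertex must be a start edge followed by tracking edges and closed by an accept edge; reading off its graph positions and consumed labels produces a $G$-walk from $u$ to $v$ carrying an accepting run of $\mathcal{A}$, whence $(u,v)\in E_L$. Conversely an accepting run of $\mathcal{A}$ on a label of a $G$-walk from $u$ to $v$ translates state by state into exactly such a path, so the internal-proper paths between $\pi_0(V_H)$-vertices correspond precisely to $E_L=E$, the edge relation of $H$. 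For the degree bound, the out-neighbours of $(g,i)$ are the $(g',j)$ with $g'$ a $G$-out-neighbour of $g$ — at most $d\cdot m$ of them, $d$ the out-degree bound of $G$ — plus at most one accept target, while a copy-$0$ vertex has at most $|\Sigma|$ out-neighbours; symmetric counting bounds the in-degrees once $G$ has bounded degree, covering both the ``out degree'' and the ``degree'' versions.

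Finally, for the ``moreover'' clause I lift a definable local linear order $R(u,a,b)$ of $G$ to $\tilde H$ by ordering the out-neighbours of each vertex lexicographically: first by copy index, using $0<1<\dots<m$, and, among targets in a common copy $j\ge1$, by the order $R_g$ on their necessarily distinct graph positions. This is plainly $\MSO$-definable in $G\times(m+1)$ from $R$ and the copy predicates. I expect the one genuinely delicate point to be the bookkeeping inside the tracking edges — aligning the alternation of vertex- and edge-labels so that $\tilde H$-paths match accepting runs exactly, and confirming that no shortcut path re-enters copy $0$ midway; the latter is secured precisely by the asymmetry of the start and accept edges noted above.
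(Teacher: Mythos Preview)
Your proposal is correct and follows essentially the same approach as the paper: both thread a deterministic finite automaton for $L$ through $G$ by allotting one copy of $G$ per automaton state, with start edges consuming the first vertex label, tracking edges consuming an edge label followed by a vertex label, and accept edges returning to the distinguished copy. The only cosmetic difference is that you use $m+1$ copies with a dedicated copy~$0$ for $V_H$, whereas the paper uses $m$ copies by first arranging that $q_0$ has no incoming transitions and then letting the $q_0$-copy itself play the role of the $H$-vertices; your version is slightly less economical but arguably cleaner, and you give more detail on the local linear order than the paper does.
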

\begin{proof} Let $\mathcal{A}$ be a deterministic automaton over $\Sigma$ such that $E_{\mathcal{L}(\mathcal{A})}=E_H$, and let $q_0 \in Q^\mathcal{A}$ be the initial state. We may assume $q_0$ has no incoming edges.

Let $V_{\tilde{H}}$ be the set $V \times Q^\mathcal{A}$. As to the edges:
\begin{itemize}
\item For every $v \in V$ and $q \in Q^\mathcal{A}$, there is an edge from $(v,q_0)$ to $(v,q)$ if in $\mathcal{A}$, a label of the vertex $v$ leads from $q_0$ to $q$.
\item For every $v_1,v_2 \in V$ and $q_1,q_2 \neq q_0$, there is an edge from $(v_1,q_1)$ to $(v_2,q_2)$ if $E_G(v_1,v_2)$, and a label of that edge followed by a label of $v_2$ leads in $\mathcal{A}$ from $q_1$ to $q_2$.
\item For every $v \in V$ and $q \in Q^\mathcal{A}$, there is an edge from $(v,q)$ to $(v,q_0)$ if $q$ is accepting.
\end{itemize}

It is left to convince ourselves that $\tilde{H}$ has bounded (out) degree, and that it represents $\pi_{q_0}(H)$.

As to the 'moreover' part, a local linear order on $G$ may easily be extended to $H$.
\end{proof}

The main result of this subsection follows:
\begin{cor} \label{MainCorRegExp}
Let $G=(V,E)$ be a $\Sigma$-labelled graph that has the \textbf{selection property}, and let $H$ be a game whose graph is \textbf{definable by regular expressions} in $G$. If either:
\begin{itemize}
    \item $G$ has \textbf{bounded degree},
    
    or
    
    \item $G$ has \textbf{bounded out degree} and a \textbf{definable local linear order},
\end{itemize}then $H$ has $G$\textbf{-definable uniform memoryless winning strategies}.

Moreover, if selection for $G$ is solvable, then the winning strategies are computable.
    
\end{cor}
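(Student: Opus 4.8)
The plan is to obtain Corollary \ref{MainCorRegExp} as a bookkeeping assembly of two earlier results: the representation theorem for regular-expression-definable graphs (Theorem \ref{RegExpToRepr}) and the strategy-transfer theorem for copy representations (Theorem \ref{copyingDefStrategy}, together with its bounded-degree specialization Corollary \ref{copyingDefStrategyBdd}). First I would check that $H$ is genuinely a parity game $\MSO$-definable in $G$, so that the copying theorems apply to it. By hypothesis the vertex set $V$ of $H$ is $\MSO$-definable in $G$, and Proposition \ref{RegExpWeakDef} shows that the edge relation $E=E_L$ is $\MSO$-definable in $(G,V)$, hence in $G$; the player partition and the coloring are part of the supplied game data and are $G$-definable. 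Thus $H$ is a $G$-definable parity game.

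Next I would feed the graph of $H$ into Theorem \ref{RegExpToRepr}. In the first case $G$ has bounded degree, so the theorem effectively copy represents the graph of $H$ in $G$ by a bounded degree graph. In the second case $G$ has bounded out degree and a definable local linear order, so the theorem copy represents the graph of $H$ in $G$ by a bounded out degree graph, and its \emph{moreover} clause equips the representing graph with a definable local linear order.

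Finally I would invoke the copying strategy theorems with $\mathcal{M}:=G$, which has the selection property by hypothesis. In the bounded degree case Corollary \ref{copyingDefStrategyBdd} immediately yields $G$-definable uniform memoryless winning strategies for $H$. In the bounded out degree case Theorem \ref{copyingDefStrategy} does the same, consuming the definable local linear order produced in the previous step. The computability clauses chain as well: when selection for $G$ is solvable, the representation of Theorem \ref{RegExpToRepr} is effective and the strategies delivered by the copying theorems are computable.

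I do not expect a genuine obstacle, since this is a corollary in the strict sense: all of the technical content has already been discharged in Theorems \ref{RegExpToRepr} and \ref{copyingDefStrategy}. The one point that deserves explicit verification — and the only place where the argument is more than a one-line citation — is that ``game definable by regular expressions in $G$'' supplies exactly the hypotheses required downstream, namely a $G$-definable game whose graph is copy represented by a bounded (out) degree graph, carrying a definable local linear order in the out-degree case, and that the copying factor produced by Theorem \ref{RegExpToRepr} is precisely the one those theorems consume.
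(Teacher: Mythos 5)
Your proposal is correct and follows exactly the route the paper takes: the paper's proof is the one-line citation ``corollary of Theorem \ref{RegExpToRepr} together with either Theorem \ref{copyingDefStrategy} or Corollary \ref{copyingDefStrategyBdd}.'' Your additional check that $H$ is genuinely a $G$-definable parity game (via Proposition \ref{RegExpWeakDef}) is a sensible explicit verification of a step the paper leaves implicit.
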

\begin{proof}
This is a corollary of Theorem \ref{RegExpToRepr} together with either Theorem \ref{copyingDefStrategy} or Corollary \ref{copyingDefStrategyBdd}.
\end{proof}
\begin{rem}[Prefix-recognizable graphs]\label{rem:cachat}
In \cite{CachatWaluk}, Cachat considered prefix-recognizable graphs and games. It is easy to see
that these graphs are definable by  regular expressions in an expansion of the full binary tree $\mathcal{T}_2$ by MSO-definable
unary predicates (see also \cite{Caucal03} in which it is shown that prefix-recognizable graphs are exactly the graphs obtained
by applying inverse rational mappings to the full binary tree). By Corollary \ref{MainCorRegExp}, and
since $\mathcal{T}_2$ has the selection property,
prefix recognizable games have  uniform memoryless winning strategies MSO-definable
in $\mathcal{T}_2$. This implies the main result of \cite{CachatWaluk}.

\end{rem}
\subsection{Definability by regular expressions vs MSO-definability}\label{subsectMsovs-reg}
Proposition \ref{RegExpWeakDef} states that definability by   regular expressions implies
MSO-definability.
Theorems \ref{MainTheorem} and \ref{MainThmOrd} are the main technical  results of this subsection, and they
describe interesting cases when the reverse implication holds.

\begin{restatable}[$\MSO$-definability implies regular expressions definability]{thm}{MainThmNatStatement} \label{MainTheorem}
Let $\mathcal{M}$ be one of the following structures:
\begin{enumerate}
\item $(\omega,<,\bar{P})$.
\item $(T_k,\bar{P})$.
\end{enumerate}
Let $H$ be a graph $\MSO$-definable in $\mathcal{M}$. Then there exist computable $\Sigma$ and a \textbf{bounded degree} $\Sigma$-labelled graph $G$ $\MSO$-definable in $\mathcal{M}$ such that $H$ is effectively definable by regular expressions in $G$.

Moreover, if $M$ is the universe of $\mathcal M$, then the original structure $\mathcal{M}$ and the new structure $(M,G)$ are mutually definable, hence $(M,G)$ has the selection property if and only if $\mathcal{M}$ has the selection property.
\end{restatable}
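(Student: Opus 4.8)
The plan is to turn the composition-method descriptions of definable binary relations---Theorem~\ref{BinaryOrdinalParam} in the linear case and Theorem~\ref{BinaryTkParam} in the tree case---into a finite automaton. Both theorems say that $E(x,y)$ is \emph{effectively determined} by a bounded amount of \emph{type data} attached to a few intervals, paths and subtrees associated with $x$ and $y$. I will build $G$ on the same domain as $\mathcal{M}$, of bounded degree, so that the \emph{canonical} path from $x$ to $y$ traverses exactly the relevant intervals; a finite automaton $\mathcal{A}$ reading the label of this path recovers the type data---partly from attached vertex labels, partly by accumulation along the path---and accepts precisely when the data witnesses $E$. Taking $L=\mathcal{L}(\mathcal{A})$ then gives $E=E_L$. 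Since all the relevant types are MSO-definable functions of a vertex and there are only finitely many of them, the alphabet $\Sigma$ is finite and computable, and $\mathcal{A}$ is computable because the entailments between Hintikka formulas are decidable.

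For $\mathcal{M}=(\omega,<,\bar{P})$ let $G$ carry a forward edge $(x,x{+}1)$ and a backward edge $(x,x{-}1)$, so $G$ has bounded degree. By Theorem~\ref{BinaryOrdinalParam} there is $n$ so that $E(x,y)$ is determined by whether $x<y$, $x=y$ or $x>y$, and by the three types $tp^n([-\infty,\min),<,\bar{P})$, $tp^n([\min,\max],<,\bar{P})$, $tp^n([\max,\infty),<,\bar{P})$. I label each vertex $x$ by its point type, by the prefix type $tp^n([-\infty,x),<,\bar{P})$ and suffix type $tp^n([x,\infty),<,\bar{P})$ (each an MSO-definable function of $x$, obtained by relativizing the relevant Hintikka sentence to $\{z:z<x\}$, resp.\ $\{z:z>x\}$), and by a flag marking membership in the vertex set of $H$. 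The canonical path from $x$ to $y$ is the monotone (possibly empty) run of forward or of backward edges; reading its label, $\mathcal{A}$ takes the prefix type at the first vertex and the suffix type at the last, and forms the middle type as the finite-monoid sum of the point types it passes (Definition~\ref{AdditionOfTypes}). I let $\mathcal{A}$ accept a monotone path between two $H$-vertices exactly when this triple, with the direction of travel, witnesses $E$; non-monotone paths are rejected by their edge-direction pattern, so $E=E_L$.

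For $\mathcal{M}=(T_k,\bar{P})$ let $G$ be the tree with its $k$ child-edges and one parent-edge per node, parent-edges labelled by the child direction; this $G$ has bounded degree, and the canonical path from $x$ to $y$ ascends from $x$ to the meet $z$ and descends to $y$. By Theorem~\ref{BinaryTkParam} there are computable $n,d$ so that $E(x,y)$ is determined by $tp^n([\epsilon,z),<,\bp,\bq^{d}_{\epsilon,z,\bp})$, the expanded-path types $tp^n([z,x),\dots)$ and $tp^n([z,y),\dots)$, the subtree types $tp^n(T_{\geq x},<,\bp)$ and $tp^n(T_{\geq y},<,\bp)$, and the multiset of off-meet subtree types. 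I attach to every node $t$ its point type, its subtree type $tp^n(T_{\geq t},<,\bp)$, the tuple $(tp^n(T_{\geq t\cdot i},<,\bp))_{i<k}$ of child-subtree types, the root-prefix type $tp^n([\epsilon,t),<,\bp,\bq^{d}_{\epsilon,t,\bp})$, and the $H$-membership flag; all are MSO-definable in $t$. Reading $tp^n(T_{\geq x},<,\bp)$ at $x$ and then climbing, at each node $w$ the automaton knows the on-path child from the parent-edge label it used, computes the off-path forest type $tp^n(T_{>w\perp[z,x)},<,\bp)$ from the multiset of the remaining child-subtree types, and thereby the expansion-predicate value fixing $w$'s contribution to $tp^n([z,x),\dots)$, which it accumulates. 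At the turn it reads the root-prefix type, and from the two (distinct) directions towards $x$ and towards $y$ it reads the off-meet multiset off the child-subtree tuple; it then descends, accumulating $tp^n([z,y),\dots)$ symmetrically, and finally reads $tp^n(T_{\geq y},<,\bp)$. Acceptance is defined to match the verdict of Theorem~\ref{BinaryTkParam}.

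The main obstacle is to force $E=E_L$ in the tree case, i.e.\ to make $\mathcal{A}$ accept a path between $x$ and $y$ only when it is the canonical one through the meet: between fixed $x,y$ the tree carries one ascend-then-descend path through \emph{every} common ancestor, not only through $z$. I resolve this by accepting only paths whose edge-directions form a single up-phase followed by a single down-phase and, when both phases are nonempty, whose arrival and departure directions at the turn differ; the last condition forces the turning vertex to be exactly the lowest common ancestor, so the accepted path is unique and canonical, the cases $x\le y$, $y\le x$, $x=y$ being absorbed by an empty phase. The remaining points---the half-open bookkeeping of which node feeds which interval type, and the effectiveness of all labels and of the acceptance test---are routine given the Hintikka apparatus. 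Finally, for the ``moreover'': $G$ is MSO-definable in $\mathcal{M}$ by construction, while in $(M,G)$ the order $<$ is the transitive closure of the forward/child edges and $\bar{P}$ (together with the successors $\bE$, when present) is read back from the vertex and edge labels; thus $\mathcal{M}$ and $(M,G)$ are mutually definable, and mutually definable structures have the same definable relations, so one has the selection property exactly when the other does.
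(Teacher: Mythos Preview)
Your proposal is correct and follows essentially the same approach as the paper: both cases invoke the appropriate composition theorem (Theorem~\ref{BinaryOrdinalParam} for $\omega$, Theorem~\ref{BinaryTkParam} for $T_k$), build the bounded-degree graph $G$ from successor/predecessor (resp.\ child/parent) edges, decorate vertices with prefix, suffix and (in the tree case) child-subtree types, and let the regular language accumulate the middle-segment type along the canonical path while reading the endpoint types off the vertex labels. Your explicit treatment of the turn constraint (requiring distinct arrival and departure directions when both phases are nonempty) makes precise a point the paper leaves largely implicit, but the underlying construction is the same.
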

\smallskip

\begin{proof} Let $H$ be a graph definable in $\mathcal{M}$ by formulas $V(x), E(x,y)$. We will find $\Sigma$ and an $\mathcal{M}$-definable bounded degree $\Sigma$-labelled graph $G$ such that $H$ is definable by regular expressions in $G$.

The proof is divided into cases according to the structure $\mathcal{M}$. In both cases the proof is based on the composition method (see subsections \ref{TypesSub} and \ref{CompSub}).

\begin{enumerate}

\item $\mathcal{M}=(\omega,<,\bp)$:

The construction of $G$ is based on Lemma \ref{BinaryOrdinalParam}. Let $n \in \omega$ be such that types of rank $n$ determine $E(x,y)$, as in the lemma, and let $l = |\bp|$. Let $\mathcal{T}^n_l$ be the set of types of rank $n$ over $l$ free variables.

We define the following labelled graph $G$ on $\omega$:
\begin{itemize}
\item The vertices are $\omega$. The label of a vertex $m \in \omega$ is a pair of types from $\mathcal{T}^n_l$ - a prefix label \[tp^n([0,m),<,\bp)\] and a suffix label \[tp^n([m,\infty),<,\bp).\]
\item The edges have both a sign label in $\{0,+1,-1\}$ and a type label in $\mathcal{T}^n_l$. For every $m$, $(m,m+1)$ is an edge with sign label $+1$, and $(m,m)$ is an edge with sign label $0$. For every $m>0$, $(m,m-1)$ is an edge with sign label $-1$. All of these edges have the same type label: $tp^n(\{m\},<,\bp)$; this type $tp^n(\{m\},<,\bp)$
is determined by the tuple of booleans $m\in P_i$.
\end{itemize}

The labelled graph $G$ is definable in $(\omega,<,\bp)$, and has a bounded degree. Also, $(\omega,<,\bp)$ is definable in $G$, hence so does $H$. It is left to find a regular language $L_H$ such that $E_{L_H} = E_H$.

\begin{clm} \label{RegSumTypes}
Let $\tau \in \mathcal{T}^n_l$. The language over $\mathcal{T}^n_l$ of types whose total sum is $\tau$ is regular
\end{clm}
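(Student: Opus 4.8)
The plan is to recognize the set $\mathcal{T}^n_l$ of $n$-types as the carrier of a finite semigroup under the addition operation $+$ of Definition \ref{AdditionOfTypes}, and then to identify the language in the statement as the preimage of $\tau$ under the evaluation homomorphism into this semigroup.

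First I would check that $(\mathcal{T}^n_l,+)$ is a well-defined finite semigroup. Finiteness is immediate, since up to equivalence there are only finitely many formulas of fixed quantifier depth and number of free variables, hence finitely many $n$-types with $l$ variables. The operation $+$ is well-defined precisely because $\equiv_n$ is a congruence with respect to concatenation of labelled linear orders, and it is associative because concatenation of labelled linear orders is associative: for three labelled linear orders, the two groupings of their concatenation yield isomorphic structures, so $(t_1+t_2)+t_3 = t_1+(t_2+t_3)$ at the level of types. This identifies $+$ with a genuine associative operation on the finite set $\mathcal{T}^n_l$.

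Next, the evaluation map $\mathrm{eval}\colon (\mathcal{T}^n_l)^+ \to \mathcal{T}^n_l$ sending a word $\sigma_1\cdots\sigma_m$ to the total sum $\sigma_1+\cdots+\sigma_m$ is a semigroup homomorphism, and the language described in the claim is exactly $\mathrm{eval}^{-1}(\tau)$. To see this set is regular I would exhibit a deterministic finite automaton over the alphabet $\mathcal{T}^n_l$ whose states are $\mathcal{T}^n_l \cup \{q_0\}$, with $q_0$ a fresh initial state recording ``no letters read yet''. From $q_0$ a letter $\sigma$ leads to the state $\sigma$, and from a state $t \in \mathcal{T}^n_l$ a letter $\sigma$ leads to the state $t+\sigma$; the single accepting state is $\tau$. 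A straightforward induction on word length shows that after reading $\sigma_1\cdots\sigma_m$ the automaton is in state $\sigma_1+\cdots+\sigma_m$, so it accepts precisely the words whose total sum is $\tau$.

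There is essentially no serious obstacle here; the only points requiring care are the well-definedness and associativity of $+$, both of which follow directly from the composition machinery of subsection \ref{TypesSub}, and the minor bookkeeping of the empty word, whose inclusion or exclusion does not affect regularity. The construction is moreover effective, since addition of types is computable, so the automaton---and hence a regular expression for the language---can be produced algorithmically, as needed for the effectiveness claims of the surrounding theorem.
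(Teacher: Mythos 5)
Your proof is correct and takes the same approach as the paper, which simply observes that addition of types is a well-defined operation on a finite set and leaves the standard automaton construction (states tracking the accumulated sum) implicit; you have merely spelled out the details.
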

\begin{proof}
The addition of types in $\mathcal{T}^n_l$, as defined in Definition \ref{AdditionOfTypes}, is a well defined operation on a finite set.
\end{proof}
We use this claim to find regular languages $L_{\self}, L_{\forward},L_{\backward}$ whose union $L_H$ satisfies $E_{L_H} = E_H$.

The language $L_{\self}$ accepts a label of a single edge path from a vertex to itself if the prefix and suffix labels of the vertex determine a self $E$-edge.

The language $L_{\forward}$ accepts a label of a path if there are types $t_1,t_2,t_3$ such that:
\begin{enumerate}
\item The following conditions imply $E(x,y)$:
\begin{itemize}
\item $x < y.$
\item $tp^n([0,x),<,\bp)=t_1.$
\item $tp^n([x,y),<,\bp)=t_2.$
\item $tp^n([y,\infty),<,\bp)=t_3.$
\end{itemize}
\item The path begins in a vertex with prefix label $t_1$, followed by positive edges whose sum of type labels is $t_2$, and ends in a vertex of suffix label $t_3$.
\end{enumerate}

The language $L_{\backward}$ accepts a label of a path if there are types $t_1,t_2,t_3$ such that:
\begin{enumerate}
\item The following conditions imply $E(x,y)$:
\begin{itemize}
\item $y<x.$
\item $tp^n([0,y),<,\bp)=t_1.$
\item $tp^n([y,x),<,\bp)=t_2.$
\item $tp^n([x,\infty),<,\bp)=t_3.$
\end{itemize}
\item The path begins in a vertex with suffix label $t_3$, followed by negative edges whose sum of type labels is $t_2$, and ends in a vertex of prefix label $t_1$.\footnote{It is interesting to note that when $E$ is symmetric, $L_{backward}$ is the mirror language of $L_{forward}$.}
\end{enumerate}

By Claim \ref{RegSumTypes}, $L_{forward}$ and $L_{backward}$ are regular, whereas $L_{self}$ is evidently regular. By their definitions, $E_{L_{self}}$ are the self edges, $E_{L_{forward}}$ are the forward edges, and $E_{L_{backward}}$ are the backward edges. Therefore for $L_H=L_{self}\cup L_{forward} \cup L_{backward}$, $E_{L_H}=E_H$.

\item $\mathcal{M}=(T_k,\bp)$:

The following lemma is used in the proof:
\begin{lem}[$T_k$ composition theorem for successors, Theorem 1.12 \cite{LifschesS98}] \label{TkCompSucc}  For any $d,l \in \omega$ there is a computable $n=n(d,l)$ such that for every $l$-tuple $\bp$ the multi-set of \[tp^n(T_{\geq 0},<,\bp),\dots, tp^n(T_{\geq k-1},<,\bp)\] effectively determines \[tp^d(\cup T_{\geq i},<,\bp).\]
\end{lem}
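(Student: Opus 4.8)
The plan is to prove Lemma \ref{TkCompSucc} as a direct application of the composition method over trees, exploiting the fact that $\cup_i T_{\geq i}$ is exactly the full tree $T_k$ minus its root $\epsilon$, decomposed into the $k$ disjoint subtrees $T_{\geq 0}, \dots, T_{\geq k-1}$. The target type $tp^d(\cup T_{\geq i},<,\bp)$ is therefore a type of a disjoint union of $k$ labelled structures, and the whole content of the lemma is a congruence statement: the $d$-type of the disjoint union should be determined by the tuple of $n$-types of the components, for a suitable $n$ computable from $d$ and $l$. I would set $n := d$ at first pass and check whether that suffices; if the disjoint-union operation on trees requires a higher rank to remain a congruence (which for plain disjoint union it typically does not, since disjoint union of structures is compatible with $\equiv_d$ at the same depth $d$), I would take $n = d$ directly.

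First I would establish the congruence property: if $(T_{\geq i},<,\bp_i) \equiv_d (T'_{\geq i},<,\bp'_i)$ for each $0 \leq i \leq k-1$, then the disjoint unions satisfy $\big(\biguplus_i T_{\geq i}\big) \equiv_d \big(\biguplus_i T'_{\geq i}\big)$. This is the standard Feferman--Vaught/Ehrenfeucht--Fra\"iss\'e argument for disjoint sums: Duplicator plays the $k$ component games in parallel, and a winning strategy in each depth-$d$ game on the components assembles into a winning strategy in the depth-$d$ game on the union, because no atomic relation of the signature $\{<\}$ holds across distinct subtrees $T_{\geq i}$ and $T_{\geq j}$ for $i \neq j$ (the least common ancestor of points in different subtrees is the root $\epsilon$, which lies outside $\cup_i T_{\geq i}$, so the restriction of $<$ to the union is exactly the disjoint sum of the restrictions). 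Once this congruence is in hand, the map sending the tuple $\big(tp^d(T_{\geq 0},<,\bp),\dots,tp^d(T_{\geq k-1},<,\bp)\big)$ to $tp^d(\cup T_{\geq i},<,\bp)$ is well defined, exactly as in Definition \ref{AdditionOfTypes}. Since the latter value only depends on the former tuple through the multi-set of component types — the disjoint union operation is symmetric in its arguments — the multi-set of the $tp^n(T_{\geq i},<,\bp)$ indeed determines the target type.

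For computability I would invoke that there are, up to equivalence, only finitely many types in $\mathcal{T}^d_l$, and that the disjoint-union operation on types is effectively computable: given $k$ Hintikka formulas, one can decide (by the decidability of the monadic theory of $\mathcal{T}_k$, Theorem \ref{AutLogEqTrees}, or by a direct syntactic product construction on Hintikka formulas) which Hintikka formula the disjoint union satisfies. This yields an algorithm that, taking the multi-set of component $n$-types as input, outputs $tp^d(\cup T_{\geq i},<,\bp)$, which is precisely the effective determination asserted.

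The main obstacle I anticipate is purely bookkeeping rather than conceptual: making sure the rank $n$ is genuinely sufficient and that the subtrees, which all carry the \emph{same} parameter tuple $\bp$ restricted to each $T_{\geq i}$, are handled with a uniform signature so that the congruence applies verbatim. In particular one must confirm that restricting the global parameters $\bp$ to each subtree $T_{\geq i}$ and then taking types of these restrictions loses no information needed to reconstruct the type of the union — which holds because, again, the union's structure is the disjoint sum and every formula of depth $d$ over the union decomposes into a boolean combination of depth-$d$ statements about the components. Given that, setting $n = n(d,l) = d$ works, and the only subtlety worth double-checking is whether the intended reading of the lemma expects $\cup T_{\geq i}$ to carry the root-relative order (it does not, since the root is excluded), which keeps us safely in the disjoint-union regime.
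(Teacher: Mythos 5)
Your proof is correct: the paper states Lemma~\ref{TkCompSucc} without proof, and your argument --- observing that $(\cup_i T_{\geq i},<,\bp)$ is the disjoint union of the $(T_{\geq i},<,\bp)$ because the prefix order never relates points of distinct subtrees once the root is removed, invoking the standard Ehrenfeucht--Fra\"{\i}ss\'e congruence of $\equiv_d$ under disjoint union so that $n=d$ suffices, noting symmetry to pass to the multi-set, and using finiteness of $\mathcal{T}^d_l$ together with decidability to make the type-sum effective --- is exactly the standard justification the authors implicitly rely on. The only point worth tightening is the effectiveness step: rather than appealing loosely to decidability of $\mathcal{T}_k$, it is cleanest to compute the sum of Hintikka formulas by induction on $d$ (the $0$-types of the union are computable from those of the components, and the set of $(d-1)$-types of one-subset expansions of the union is the set of sums of component expansions), which avoids having to argue realizability of type tuples.
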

The construction is based on Theorem \ref{BinaryTkParam}. Let $n,d \in \omega$ be such that types of rank $n$ expanded by types of rank $d$ determine $E(x,y)$, as in the theorem.
Below, we also use the notations from Definition \ref{def:expansion-path}.

Let $G$ be the following labelled graph defined in $(T_k,\bp)$:
\begin{itemize}
\item The vertices are $T_k$. Every $z \in T_k$ has a prefix label \[tp^n([\epsilon,z),<,\bp,\bq^{d}_{\epsilon,z,\bp}),\] a self label \[tp^n(\{z\},<,\bp),\] a suffix label \[tp^n(T_{\geq z},<,\bp),\] and suffix labels of successors \[tp^n(T_{\geq z\cdot 0},<,\bp),\] \qquad \qquad\qquad\qquad \qquad\qquad\qquad\qquad\vdots \[tp^n(T_{\geq z\cdot (k-1)},<,\bp).\]
\item The edges of $G$ are either $(z, z \cdot i)$ with a sign label $+$ and a type label $i$, or $(z \cdot i, z)$ with a sign label $-$ and a type label $i$.
\end{itemize}

By Lemma \ref{TkCompSucc}, we may assume that for every $z$ and $i$, $tp^d(T_{>z\perp [z,z\cdot i)})$ is determined by the suffix labels of successors. It follows that from the label of a forward path on a path $[z,y]$, an automaton may calculate $tp^n([z,y),<,\bp,\bq^{d}_{z,y,\bp})$, and similarly for a backward path. Given a type $t$, let $L_t^{\forward}$ be the regular language of labels of a forward path whose accumulative $d$-expanded type is $t$. The regular language $L_t^{\backward}$ is defined in a similar way.

The labelled graph $G$ is definable in $(T_k,\bp)$, and has a bounded degree. Since the self label of $z$ determines $\bp(z)$, $(T_k,\bp)$ is also definable in $G$, and so it is left to find a regular language $L$ that defines $E_H$ by regular expressions.

This language $L$ accepts a path if there are types $t_1,t_2,t_3,t_4,t_5,s_1,\dots,s_{k-2}$ such that:
\begin{enumerate}
\item Whenever $z$ is the maximal common prefix of $x$ and $y$, the following conditions imply $E(x,y)$:
\begin{itemize}
\item $tp^n([\epsilon,z),<,\bp,\bq^{d}_{\epsilon,z,\bp})=t_1$.

\item $tp^n([z,x),<,\bp,\bq^{d}_{z,x,\bp})=t_2.$
\item $tp^n(T_{\geq x},<,\bar{P})=t_3.$
\item $tp^n([z,y),<,\bp,\bq^{d}_{z,y,\bp})=t_4.$
\item $tp^n(T_{\geq y},<,\bar{P})=t_5.$
\item The multi-set of \[\{tp^n(T_{\geq z \cdot j},<,\bar{P})\ |\ (x \not \geq z \cdot j) \land (y \not \geq z \cdot j)\}\] is the multi-set of $\{s_1,\dots,s_{k-2}\}$.

\end{itemize}

\item The path begins in a vertex with suffix label $t_3$, followed by a backward path whose label is in $L^{\backward}_{t_2}$, then a vertex with prefix label $t_1$ and suffix labels of successors $s_1,\dots,s_{k-2}$, followed by a forward path whose label is in $L^{\forward}_{t_4}$, and ends in a vertex with suffix label $t_5$.
\end{enumerate}
We claim that $L$ is as sought. Indeed, since $n,d\in\omega$ were chosen such that the types determine $E(x,y)$ (as in Theorem \ref{BinaryTkParam}), and the language $L$ accepts exactly the paths coding these types, the conclusion follows.

\end{enumerate}
\end{proof}

Note that the case of $\omega$ without parameters has an elementary proof that is based on the automaton representing the edge relation of the graph. %The case of $\mathcal{T}_k$ without parameters is treated in \cite{CachatWaluk}.
The case of $\mathcal{T}_2$ without parameters can  be
extracted from L\"{a}uchli  and  Savioz  \cite{LS87}.

\smallskip

In order to extend this result to any ordinal below $\omega^\omega$, we first make the following observation:
\begin{lem} \label{BddDegAlpha}
Let $\alpha < \omega^\omega$. There is a binary relation $E_\alpha$ on $\alpha$ such that $(\alpha,E_\alpha)$ is a bounded out degree graph, and $(\alpha,<)$, $(\alpha,E_\alpha)$ are mutually definable.
\end{lem}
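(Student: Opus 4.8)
The plan is to realise $\alpha$ concretely through Cantor normal form. Fix $k$ with $\alpha \le \omega^k$, and for each $i < k$ consider the partial map $s_i \colon \beta \mapsto \beta + \omega^i$ on $\alpha$ (defined whenever $\beta + \omega^i < \alpha$). I would set $E_\alpha := \{(\beta, s_i(\beta)) : i < k\}$, a single binary relation that forgets the index $i$. Since every vertex $\beta$ has at most the $k$ out-neighbours $\beta + \omega^0, \ldots, \beta + \omega^{k-1}$, the graph $(\alpha, E_\alpha)$ has out degree at most $k$, so it is a bounded out degree graph. It then remains to prove that $(\alpha,<)$ and $(\alpha,E_\alpha)$ are mutually definable.

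For the direction defining $E_\alpha$ in $(\alpha,<)$, the key is to express, for each fixed $i$, the predicate $D_i(\beta)$ meaning ``$\beta$ is divisible by $\omega^i$''. I would define these by induction on $i$: $D_0$ holds everywhere, and $D_{i+1}(\gamma)$ holds iff $\gamma = 0$ or $\gamma$ is a limit point of $D_i$, i.e.\ every $\delta < \gamma$ admits some element of $D_i$ in the interval $(\delta,\gamma)$. This is first-order over $(\alpha,<,D_i)$, and since $i$ is fixed it unwinds into an $\MSO$ formula over $(\alpha,<)$. A short ordinal-arithmetic computation, using absorption of the lower terms so that $\beta + \omega^i$ equals the least multiple of $\omega^i$ strictly above $\beta$ regardless of whether $\beta$ is itself such a multiple, shows that $s_i(\beta)$ is the $<$-least $\gamma$ with $D_i(\gamma)$ and $\gamma > \beta$. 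Hence each $s_i$, and thus $E_\alpha$, is $\MSO$-definable from $<$.

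For the converse direction I would exploit that every edge of $E_\alpha$ strictly increases the ordinal, so the reflexive--transitive closure $E_\alpha^\ast$, which is $\MSO$-definable as reachability, is contained in $\le$. The content is the reverse inclusion: for $\beta < \gamma$ there is a directed $E_\alpha$-path from $\beta$ to $\gamma$. I would prove this greedily, repeatedly replacing $\beta$ by $\beta + \omega^{i}$ for the largest $i$ with $\beta + \omega^i \le \gamma$; comparing Cantor normal forms from the most significant differing coordinate downwards shows that this process stays below $\gamma$, never overshoots, and terminates at $\gamma$ after finitely many steps, since each coordinate is a natural number incremented finitely often and there are finitely many coordinates. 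Consequently $\beta < \gamma \iff \beta \ne \gamma \land E_\alpha^\ast(\beta,\gamma)$, which gives an $\MSO$ definition of $<$ from $E_\alpha$. The main obstacle is precisely this completeness claim, that greedy power-addition realises the whole order as reachability, whereas the divisibility predicates and the out-degree bound are routine. Together the two directions establish that $(\alpha,<)$ and $(\alpha,E_\alpha)$ are mutually definable.
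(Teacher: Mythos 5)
Your proposal is correct and follows essentially the same route as the paper: the paper defines exactly the same edge relation $(\delta,\delta+\omega^i)$ for the finitely many powers $\omega^i$ below $\alpha$ and asserts mutual definability, which you verify in detail via the divisibility predicates and the greedy reachability argument. Both of those verifications are sound (the absorption identity $\rho+\omega^i=\omega^i$ for $\rho<\omega^i$ and left cancellation of ordinal addition give exactly what you need), so your write-up is a correct, more detailed version of the paper's proof.
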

\begin{proof}
Let \[\alpha = \omega^{m-1} \cdot k_{m-1} + \omega^{m-2} \cdot k_{m-2} + \dots + \omega \cdot k_1 + k_0\] be the Cantor normal form of $\alpha$.

The $E_\alpha$ edges are defined as follows: for every $\delta < \alpha$ and every $\beta \in \{1,\omega,\omega^2,\dots,\omega^{m-1}\}$, if $\delta+\beta<\alpha$ then there is an $E_\alpha$ edge $(\delta,\delta+\beta)$. This relation is definable in $(\alpha,<)$, and $<$ is definable in $(\alpha,E_\alpha)$.
\end{proof}
Note that the graph $(\alpha,E_\alpha)$ has
infinite in-degree -- for every limit node $\beta$
there are infinitely many edges that enter $\beta$.

\begin{restatable}{thm}{MainThmOrdStatement} \label{MainThmOrd}
Let $\alpha < \omega^\omega$ and $\bp \subseteq \alpha$. Let $H$ be a graph definable in $(\alpha,<,\bp)$. Then there exist computable $\Sigma$ and a \textbf{bounded out degree} $\Sigma$-labelled graph $G$ definable in $(\alpha,<,\bp)$ such that $H$ is effectively definable by regular expressions in $G$.

Moreover, $(\alpha,<,\bp)$ and $(\alpha,G)$ are mutually definable, hence $(\alpha,G)$ has the selection property.
%In particular, $H$ is effectively copy represented in $\mathcal{M}$ by a bounded out degree graph.
\end{restatable}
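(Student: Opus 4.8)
The key observation is Lemma \ref{BddDegAlpha}: the graph $(\alpha,E_\alpha)$ is a bounded out degree graph that is mutually definable with $(\alpha,<)$. Since $\alpha<\omega^\omega$, we can write $\alpha$ in Cantor normal form and view its elements as finite sequences of coordinates, which embeds $\alpha$ into a finite-degree tree-like structure. Thus $(\alpha,<,\bp)$ is closely related to a parameterized $k$-ary tree, and one expects to transport the regular-expression definability from the tree to the ordinal. The first thing I would do is make this connection precise: exhibit an $\MSO$-interpretation between $(\alpha,<,\bp)$ (equivalently $(\alpha,E_\alpha,\bp)$) and a suitable parameterized tree so that the graph $H$ definable in $(\alpha,<,\bp)$ corresponds to a graph definable in the tree.

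\textbf{Second}, rather than routing through Theorem \ref{MainTheorem}, the cleaner route may be a direct composition-method argument paralleling the $(\omega,<,\bp)$ case in the proof of Theorem \ref{MainTheorem}, but carried out along the Cantor normal form decomposition. Concretely, I would use Theorem \ref{BinaryOrdinalParam}: there is $n$ such that any definable binary relation $E(x,y)$ on the labelled linear order $(\alpha,<,\bp)$ is determined by whether $x=y$, whether $x<y$, and the three interval types $tp^n([-\infty,\min\{x,y\}),<,\bp)$, $tp^n([\min\{x,y\},\max\{x,y\}),<,\bp)$, and $tp^n([\max\{x,y\},\infty),<,\bp)$. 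I would then take as the labelled graph $G$ the graph $(\alpha,E_\alpha)$ with vertex labels recording the prefix and suffix interval types and edge labels recording the local type contribution of each $E_\alpha$-step (one of the jumps by $\beta\in\{1,\omega,\dots,\omega^{m-1}\}$). The edges of $G$ have bounded out degree by construction of $E_\alpha$, and $G$ is definable in $(\alpha,<,\bp)$; conversely the vertex labels recover $\bp$, giving mutual definability and hence the selection property transfer via Theorem \ref{unifoAlpha} and the class-selection framework of subsection \ref{sect:select}.

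\textbf{Third}, to build the regular language $L_H$ I would again split into the self, forward, and backward cases as in Theorem \ref{MainTheorem}. The new feature here is that a forward $E_\alpha$-path may take jumps of different magnitudes $\omega^i$, so the accumulated middle-interval type $tp^n([\min,\max),<,\bp)$ is the sum of the edge-contribution types along the path. Since type addition is a well-defined associative operation on the finite set $\mathcal{T}^n_l$ (Definition \ref{AdditionOfTypes}), the set of edge-label sequences summing to a fixed target type $t_2$ is recognized by a finite automaton tracking the running type sum, exactly as in Claim \ref{RegSumTypes}; the analogous statement holds for backward paths along $E_\alpha$. The language $L_H$ then accepts a path precisely when its endpoint labels $t_1,t_3$ and its accumulated middle type $t_2$ form a type triple for which Theorem \ref{BinaryOrdinalParam} asserts $E(x,y)$ holds, and this is checkable by the algorithm of the Hintikka-formula proposition.

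\textbf{The main obstacle} I anticipate is handling limit ordinals: a jump $\delta\mapsto\delta+\omega^i$ lands on a limit point, and the interval $[\delta,\delta+\omega^i)$ has order type $\omega^i$ rather than a single successor step, so a single $E_\alpha$-edge already ``skips over'' an infinite chunk of the order whose type must be correctly pre-computed and attached as the edge's type label. One must verify that this infinite-type contribution is itself definable in $(\alpha,<,\bp)$ and depends only on $\delta$ (so it can serve as a legitimate edge label), and that the automaton computing the running sum correctly composes these limit-block types in the right order; the infinite in-degree of $(\alpha,E_\alpha)$ noted after Lemma \ref{BddDegAlpha} means backward paths into a limit point require care, but since the theorem only demands \emph{bounded out degree} this does not break the argument. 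Once the edge type labels are shown to be definable and finitely many, the regularity of $L_H$ and the mutual definability both follow routinely, completing the proof.
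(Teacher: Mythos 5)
Your overall skeleton (drop the tree detour, argue directly via Theorem \ref{BinaryOrdinalParam}, label vertices with prefix and suffix types and edges with interval types, split $L_H$ into self, forward and backward parts, and recognize type sums by a finite automaton as in Claim \ref{RegSumTypes}) matches the paper's proof for the self and forward cases. But the backward case, which you correctly flag as the main obstacle, is not actually resolved by your construction, and this is precisely where the paper's proof does its real work. The problem is that an $E$-edge $(\delta_1,\delta_2)$ with $\delta_2<\delta_1$ must be witnessed by a \emph{directed path of $G$ from $\delta_1$ to $\delta_2$}, so $G$ needs downward-oriented edges. You only put $E_\alpha$ (upward jumps) into $G$; ``backward paths along $E_\alpha$'' are not paths of $G$. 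And you cannot simply add the reversals of $E_\alpha$: a limit ordinal $\lambda$ satisfies $\delta+\omega^i=\lambda$ for infinitely many $\delta$, so the reversed relation has infinite out-degree at every limit point, destroying exactly the property the theorem is about. The remark after Lemma \ref{BddDegAlpha} about infinite in-degree is not something you can wave away by saying only out-degree matters --- it becomes an out-degree problem the moment you orient those edges downward.

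The paper's fix is twofold, and neither part appears in your proposal. First, the downward edges are restricted to $(\omega^l\cdot(k+1),\omega^l\cdot k)$, which keeps out-degree bounded but means a descent from $\delta_1$ generally overshoots the target $\delta_2$, landing at a multiple of some $\omega^l$ below $\delta_2$ and then climbing back up with positive edges; the admissible ordinal-label patterns are of the form $(-1)^*(-\omega)^*\cdots(-\omega^{d-1})^+(+\omega^{d-2})^*\cdots(+1)^*$. Second, because the interval between the target $\delta_2$ and the landing point of the descent is not traversed by the negative edges whose types are being summed, its type cannot be accumulated from edge labels alone; the paper therefore equips each vertex $\delta$ with additional \emph{intermediate labels} $tp^n([\delta,\delta+\omega^j),<,\bp)$ for each $j<m$, and the automaton for $L_{\backward}$ reads the appropriate intermediate label off the terminal vertex and adds it to the sum of the types of the negative edges excluding the last one. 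Your vertex labels record only the prefix and suffix types, which is not enough information to determine $tp^n([\delta_2,\delta_1),<,\bp)$ in the mixed case. So the proposal as written needs both a different edge set and a richer vertex labelling before the backward language can be constructed.
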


\begin{proof} The construction is based on Theorem \ref{BinaryOrdinalParam}.

Let $H$ be a graph defined in $\alpha$, and let \[\alpha = \omega^{m-1} \cdot k_{m-1} + \omega^{m-2} \cdot k_{m-2} + \dots + \omega \cdot k_1 + k_0\] be the Cantor normal form of $\alpha$.

Let $n \in \omega$ be such that types of rank $n$ determine $E(x,y)$, as in Theorem \ref{BinaryOrdinalParam}. Let $l=|\bp|$, and let $\mathcal{T}^n_l$ be the set of types over $\mathcal{L}=\{<\}$ of rank $n$ with $l$ free variables.

Let $G$ be the following labelled graph defined on $\alpha$:
\begin{itemize}
\item The label of a vertex $\delta \in \alpha$ is a tuple of types from $\mathcal{T}^n_l$: a prefix label \[tp^n([0,\delta),<,\bp),\] a suffix label \[tp^n([\delta,\infty),<,\bp),\] and intermediate labels \[tp^n([\delta,\delta+\omega),<,\bp),\] \[tp^n([\delta,\delta+\omega^2),<,\bp),\]  \qquad\qquad\qquad\qquad\qquad \qquad\qquad\qquad\qquad\vdots \[tp^n([\delta,\delta+\omega^{m-1}),<,\bp).\]
\item The edges of positive label are $E_\alpha$ of Lemma \ref{BddDegAlpha}.

Note that for every $\delta_1<\delta_2$ there is a path of positive edges from $\delta_1$ to $\delta_2$.
\item As to the negative edges, for every $k\in\omega$ and $0 \leq l < m$, if $\omega^l \cdot (k+1) < \alpha$ then there is an edge \[(\omega^l\cdot (k+1) , \omega^l \cdot k).\] In particular, every successor ordinal is connected to its predecessor.

Note that for every $\delta_1<\delta_2$ there is a path from $\delta_2$ to $\delta_1$ composed of consecutive negative edges followed by consecutive positive edges.
\item The type label of an edge $(\delta_1,\delta_2)$ is \[tp^n([min(\delta_1,\delta_2),max(\delta_1,\delta_2)),<,\bp).\]
\item For every $\delta<\alpha$, there is a self edge of sign label $0$ (and no type label).
\end{itemize}

It is straightforward to check that $G$ is definable in $(\alpha,<,\bp)$ and has bounded out degree. Since $<$ is definable from $E_\alpha$ and $\bp$ is definable from the suffix labels, $H$ is definable in $G$. To complete the proof, we will find regular languages $L_{\self}, L_{\forward},L_{\backward}$ whose union $L_H$ satisfies $E_{L_H} = E_H$.

The language $L_{\self}$ accepts a label of a single edge path from a vertex to itself if the prefix and suffix labels of the vertex determine a self $E$-edge.

The language $L_{\forward}$ accepts a label of a path if there are types $t_1,t_2,t_3$ such that:

\begin{enumerate}
\item The following conditions imply $E(\delta_1,\delta_2)$:
\begin{itemize}
\item $\delta_1 < \delta_2.$
\item $tp^n([0,\delta_1),<,\bp)=t_1.$
\item $tp^n([\delta_1,\delta_2),<,\bp)=t_2.$
\item $tp^n([\delta_2,\infty),<,\bp)=t_3.$
\end{itemize}
\item The path begins in a vertex with prefix label $t_1$, followed by positive edges whose sum of type labels is $t_2$, and ends in a vertex of suffix label $t_3$.
\end{enumerate}

To construct the language $L_{\backward}$, first note that the type label of an edge determines the order type of the edge, so we may add to each edge an ordinal label that is given by its order type.

\medskip
Let us first take advantage of the ordinal labels to examine a specific example:

\begin{exa}
Let \[\delta_1=\omega^2\cdot 3 + \omega \cdot 4 +5\] and \[\delta_2 = \omega^2 + \omega \cdot 2 + 6.\] The edges of the path from $\delta_1$ to $\delta_2$ have the following ordinal and sign labels: \[(-1)^5 \cdot (-\omega)^4 \cdot (-\omega^2)^2 \cdot (+\omega)^2 \cdot (+1)^6.\] Whether there is an edge $E(\delta_1,\delta_2)$ is determined by:
\begin{enumerate}
\item The prefix label $tp^n([0,\delta_2),<,\bp)$.
\item The intermediate label $tp^n([\delta_2,\delta_2+\omega^2),<,\bp)=tp^n([\delta_2,\omega^2 \cdot 2),<,\bp)$.
\item The type $tp^n([\omega^2 \cdot 2,\delta_1),<,\bp)$, given by the  sum of type labels of the path edges that correspond to \[(-1)^5 \cdot (-\omega)^4 \cdot (-\omega^2).\]
\item The suffix label $tp^n([\delta_1,\infty),<,\bp)$.
\end{enumerate}

Indeed, the sum of $2)$ and $3)$ is equal to $tp^n([\delta_2,\delta_1),<,\bp)$; together with $1)$ and $4)$ this determines whether there is an edge in $G$ from $\delta_1$ to $\delta_2$.

Note that the type labels of the positive path edges were not explicitly taken into account in this computation, nor did the type label of the last negative path edge.

If, as a different example, $\delta_2$ will be replaced by ${\delta_2'}=\omega^2$, the path will only have negative edges. In that case, the intermediate label will no longer be needed, and instead the sum of type labels will be of all of the path edges.
\end{exa}
\medskip
Back to the construction of $L_{\backward}$, let $\tilde{L}_{\backward}$ be a language over edge labels that accepts words of the ordinal label form \[(-1)^* \cdot (-\omega)^* \cdots (-\omega^{d-1})^+ \cdot (+\omega^{d-2})^* \cdots (+\omega)^* (+1)^*,\] which is, words that begin with negative edges and continue with positive edges in such a way that the maximum power of the negative ordinal labels is strictly greater than the maximum power of the positive ordinal labels.

Let $t \in \mathcal{T}^n_l$. Let $L_t^{\backward}$ be the words of $\tilde{L}_{\backward}$ in which \textbf{there are no positive edges}, and the sum of the type labels of all edges is $t$. Let $L_t^{mixed}$ be the words of $\tilde{L}_{\backward}$ in which \textbf{there are positive edges}, and the sum of the type labels of the \textbf{negative edges not including the last negative edge} is $t$.

The language $L_{\backward}$ then accepts a label of a path if one of the following is satisfied:
\begin{enumerate}

\item There are types $t_1,t_2,t_3$ such that:

\begin{enumerate}
\item The following conditions imply $E(\delta_1,\delta_2)$:
\begin{itemize}
\item $\delta_2 < \delta_1.$
\item $tp^n([0,\delta_2),<,\bp)=t_1.$
\item $tp^n([\delta_2,\delta_1),<,\bp)=t_2.$
\item $tp^n([\delta_1,\infty),<,\bp)=t_3.$
\end{itemize}
\item The path begins in a vertex with suffix label $t_3$, followed by a path whose edges have a label in $L_{t_2}^{\backward}$, and ends in a vertex of prefix label $t_1$.
\end{enumerate}

\item There are types $t_1,t_2,t_3,t_4$ such that:

\begin{enumerate}
\item The following conditions imply $E(\delta_1,\delta_2)$:
\begin{itemize}
\item $\delta_2 < \delta < \delta_1.$
\item $tp^n([0,\delta_2),<,\bp)=t_1.$
\item $tp^n([\delta_2,\delta),<,\bp)=t_2.$
\item $tp^n([\delta,\delta_1),<,\bp)=t_3.$
\item $tp^n([\delta_1,\infty),<,\bp)=t_4.$
\end{itemize}
\item The path begins in a vertex with suffix label $t_4$, followed by a path whose edges have a label in $L_{t_3}^{mixed}$, and ends in a vertex of prefix label $t_1$ and corresponding intermediate label $t_2$. By corresponding intermediate label we refer to the label corresponding to the last backward edge of the path. %Here we're using again the fact that final segments of powers of $\omega$ are isomorphic to it.
\end{enumerate}

\end{enumerate}

It is left to show that $L_H$ is as needed.

\end{proof}

\smallskip

The main result of this section now follows out of Corollary \ref{MainCorRegExp}:
\begin{thm} \label{MainDefStrategy}
Let $\mathcal{M}$ be one of the following structures:
\begin{enumerate}
\item $(\omega,<,\bar{P})$.
\item $(\mathcal{T}_k,\bar{P})$ that has the selection property.
\item $(\alpha,<,\bp)$ for $\alpha < \omega^\omega$.
\end{enumerate}

A parity game definable in $\mathcal{M}$ has $\mathcal{M}$-definable uniform memoryless winning strategies.

Moreover, if selection for $\mathcal{M}$ is solvable then the winning strategies are computable.
\end{thm}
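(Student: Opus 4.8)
The plan is to assemble Theorem~\ref{MainDefStrategy} directly from the pieces already developed, treating each of the three structures as a verification that the hypotheses of Corollary~\ref{MainCorRegExp} are met. First I would recall what Corollary~\ref{MainCorRegExp} demands: a $\Sigma$-labelled graph $G$ with the selection property, in which the given game $H$ has its graph \textbf{definable by regular expressions}, together with either bounded degree, or bounded out degree plus a definable local linear order. The strategy is therefore to produce, for each $\mathcal{M}$ in the list, such a background graph $G$, and then invoke the corollary to conclude that $H$ has $\mathcal{M}$-definable uniform memoryless winning strategies. The key observation making this work is the ``mutual definability'' clauses in Theorems~\ref{MainTheorem} and~\ref{MainThmOrd}: since $\mathcal{M}$ and $(M,G)$ are mutually definable, a strategy definable in $(M,G)$ is automatically definable in $\mathcal{M}$, and the selection property transfers between the two.

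Concretely, for $\mathcal{M}=(\omega,<,\bar{P})$ or $\mathcal{M}=(\mathcal{T}_k,\bar{P})$, I would apply Theorem~\ref{MainTheorem} to the graph underlying the game $H$: this yields a computable alphabet $\Sigma$ and a \textbf{bounded degree} $\Sigma$-labelled graph $G$, definable in $\mathcal{M}$, such that $H$ is effectively definable by regular expressions in $G$, with $\mathcal{M}$ and $(M,G)$ mutually definable. For the case $(\mathcal{T}_k,\bar{P})$ I must assume the selection property as a hypothesis, whereas for $(\omega,<,\bar{P})$ selection is automatic (it follows from decidability of the monadic theory and the known selection results for $\omega$-expansions). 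In both sub-cases $G$ has bounded degree, so I take the first branch of Corollary~\ref{MainCorRegExp}, which gives $(M,G)$-definable --- hence, by mutual definability, $\mathcal{M}$-definable --- uniform memoryless winning strategies for $H$. For $\mathcal{M}=(\alpha,<,\bar{P})$ with $\alpha<\omega^\omega$, I would instead invoke Theorem~\ref{MainThmOrd}, which produces a \textbf{bounded out degree} graph $G$ that is mutually definable with $\mathcal{M}$; here I use the second branch of Corollary~\ref{MainCorRegExp}, noting that $G$ carries a definable local linear order induced by the ordinal order $<$ (which is definable from $E_\alpha$), and that $(\alpha,<,\bar{P})$ has the selection property by Theorem~\ref{unifoAlpha}.

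The ``moreover'' computability clause I would handle uniformly: each of the invoked results is effective. Theorems~\ref{MainTheorem} and~\ref{MainThmOrd} compute $\Sigma$ and $G$ and the regular language witnessing definability by regular expressions; Corollary~\ref{MainCorRegExp} states that whenever selection for $G$ is solvable the strategies are computable. Solvability of selection for $G$ follows from solvability of selection for $\mathcal{M}$ together with mutual definability (a selector for $\mathcal{M}$ translates effectively into one for $(M,G)$ and conversely), and solvable selection holds for $(\omega,<,\bar{P})$ and for $(\alpha,<,\bar{P})$, $\alpha<\omega^\omega$, by the cited results. Thus the winning strategies are computable in all three cases.

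I do not expect a genuine obstacle here, since this theorem is explicitly positioned as the corollary that ties together the whole section; the real technical content lives in Theorems~\ref{MainTheorem} and~\ref{MainThmOrd}, whose proofs (via the composition method) have already been carried out. The only points requiring care are bookkeeping: confirming that a \emph{game} --- not merely a graph --- is handled, which is legitimate because Corollary~\ref{MainCorRegExp} is stated for a game $H$ whose \emph{graph} is definable by regular expressions, and confirming the selection and solvable-selection transfers along mutual definability. The mildly delicate case is $(\mathcal{T}_k,\bar{P})$, where selection is a hypothesis rather than a theorem and must simply be carried through; and the ordinal case, where one must be sure that bounded \emph{out} degree (rather than full bounded degree) suffices, which it does precisely because a definable local linear order is available, putting us in the second branch of the corollary.
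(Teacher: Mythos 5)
Your proposal is correct and matches the paper's intended argument exactly: the paper derives Theorem~\ref{MainDefStrategy} by combining Theorems~\ref{MainTheorem} and~\ref{MainThmOrd} with Corollary~\ref{MainCorRegExp}, using mutual definability to transfer selection and definability between $\mathcal{M}$ and $(M,G)$, precisely as you describe. One tiny imprecision: selection for $(\omega,<,\bar{P})$ follows from the uniformization property of $(\omega,<)$ (Theorem~\ref{unifoAlpha}), not from decidability of the monadic theory (which can fail for arbitrary $\bar{P}$ and is in any case irrelevant to the definability claim).
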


\medskip

Let $P \subseteq T_2$. We've seen that if $(\mathcal{T}_2,P)$ has the selection property, then games definable in $(\mathcal{T}_2,P)$ have definable uniform memoryless strategies. However, when the selection property fails for $(\mathcal{T}_2,P)$, definable uniform memoryless strategies may fail to exist:

\begin{prop}[\cite{carayol2010choice}] \label{FailureT2P}
There is $P \subseteq T_2$ and a parity game definable in $\MSO(\mathcal{T}_2,P)$ such that there is no definable uniform memoryless winning strategy for player $I$.
\end{prop}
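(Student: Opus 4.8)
The plan is to derive this from the failure of the selection property recorded in Proposition~\ref{NoSelExample}. Fix a subset $P \subseteq T_2$ and a satisfiable formula $\varphi(\bX)$ over $(\mathcal{T}_2,P)$ which, by Proposition~\ref{NoSelExample}, admits no $\MSO(\mathcal{T}_2,P)$-definable witness. The whole idea is to build a parity game whose uniform memoryless winning strategies for player $I$ are in an $\MSO$-definable correspondence with the tuples $\bX$ satisfying $\varphi$; a definable strategy would then yield a definable witness of $\varphi$, contradicting Proposition~\ref{NoSelExample}. In the language of the proof of Theorem~\ref{boundedOutDegree}, I am arranging that the ``uniform winning strategy'' formula of the game is, up to a definable recoding, a formula for which selection fails.

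To turn the global condition ``$\varphi(\bX)$ holds'' into a parity winning condition I would use Rabin's theorem (Theorem~\ref{AutLogEqTrees}) to obtain a parity tree automaton $\mathcal{A}$ recognising exactly the $(P,\bX)$-labelled trees satisfying $\varphi$. Keeping the $P$-labelling fixed as input, I form the associated acceptance (synthesis) game $G$: at each node player $I$ commits to a value of the $\bX$-label and to a transition of $\mathcal{A}$ compatible with the current state, while player $II$ chooses one of the two children to descend into; the colours are inherited from the parity condition of $\mathcal{A}$, so player $I$ wins a branch iff it is accepting. The arena is $T_2$ times the finite state set of $\mathcal{A}$ and a turn bit, so it has bounded out degree and a definable local linear order. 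The finite data are absorbed into a single parameter: replace $P$ by a $P'$ that places $c$ disjoint isomorphic copies of $(\mathcal{T}_2,P)$ under a fixed antichain of subtrees, so that $G$ becomes $\MSO(\mathcal{T}_2,P')$-definable and, since one copy is a definable isomorphic image of $(\mathcal{T}_2,P)$, the selection failure of $\varphi$ transfers to $(\mathcal{T}_2,P')$ (alternatively one argues in the copying $(\mathcal{T}_2,P)\times c$ and transports everything by Propositions~\ref{copyingSelection} and Lemma~\ref{copyingLem}). Because $\varphi$ is satisfiable, any witness $\bX_0$ yields an accepting run of $\mathcal{A}$, hence a winning strategy for player $I$ from the root, so the initial position lies in player $I$'s winning region.

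It then remains to extract a definable witness from a definable strategy. A uniform memoryless winning strategy $s$ for player $I$ is in particular winning from the root, and since player $II$ can steer the play to every node, $s$ determines a unique reachable position at each node $t$; reading off the $\bX$-label that $s$ commits to there yields a labelling $\bX_s\colon T_2 \to 2^{|\bX|}$. Since every branch of every play consistent with $s$ is accepting, the induced run is accepting, whence $(\mathcal{T}_2,P,\bX_s)\models\varphi$. The map $s \mapsto \bX_s$ is an $\MSO$ definition, because ``the position reachable under the memoryless strategy $s$ at node $t$'' is a reachability predicate, hence $\MSO$-definable from $s$ and $P$. Consequently a definable uniform memoryless winning strategy for player $I$ would produce a definable witness of $\varphi$, which is impossible; this forces the desired conclusion.

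The main obstacle is faithfully turning the arbitrary $\MSO$ condition $\varphi$ into a parity game whose strategies definably encode its witnesses, and then cashing out that encoding: one must verify both that the acceptance game genuinely captures ``$\exists\bX\,\varphi(\bX)$'' via its parity condition, and that the extraction $s \mapsto \bX_s$ is uniformly $\MSO$, so that definability of the strategy really transfers to definability of the witness. The surrounding bookkeeping -- absorbing the finite automaton state and the turn bit into a single parameter over $\mathcal{T}_2$ while preserving the selection/definability dichotomy -- is technical but is handled by the copying machinery; the conceptual heart is the reachability argument that a memoryless winning strategy canonically and definably names a model of $\varphi$.
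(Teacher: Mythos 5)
The paper does not prove this proposition at all: it is imported verbatim from \cite{carayol2010choice}, just like Proposition~\ref{NoSelExample}, so there is no in-paper argument to compare against. Your plan --- reduce to the failure of selection by building the acceptance game of the Rabin automaton for $\varphi(\bX)$, in which player~$I$ supplies the $\bX$-labels and transitions and player~$II$ picks directions, and then read a definable witness $\bX_s$ off a definable uniform memoryless winning strategy $s$ via the reachability predicate --- is sound and is essentially the argument of the cited reference. The two key verifications you flag do go through: the root is winning for player~$I$ iff $\exists\bX\,\varphi(\bX)$, and under a memoryless $s$ each node $t$ carries a unique reachable position, so $s\mapsto\bX_s$ is an $\MSO$-definable map and a definable $s$ would yield a selector for $\varphi$, contradicting Proposition~\ref{NoSelExample}.

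The one genuine gap is the step you dismiss as bookkeeping: getting the product arena $T_2\times(\text{finite})$ back into a structure of the literal form $(\mathcal{T}_2,P')$ with a single unary parameter, as the statement requires. Your primary device --- placing $c$ disjoint copies of $(\mathcal{T}_2,P)$ under an antichain of subtrees $T_{\geq u_1},\dots,T_{\geq u_c}$ --- does not work: the game's edges connect the positions over the \emph{same} tree node in different copies, i.e.\ pairs $(u_i x, u_j x)$, and the ``same suffix below incomparable roots'' relation is not $\MSO$-definable in $(\mathcal{T}_2,P')$ (by the composition Theorem~\ref{BinaryTkParam}, a definable binary relation is determined by types of the separating segments and side subtrees, which cannot compare the letter sequences of the two suffixes); a unary parameter cannot encode this correspondence. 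Your fallback of staying in the copying $(\mathcal{T}_2,P)\times c$ and invoking Lemma~\ref{copyingLem} and Proposition~\ref{copyingSelection} also does not suffice as stated, since $\mathcal{M}\times c$ is not of the form $(\mathcal{T}_2,P')$. The fix is to exploit that every edge of the acceptance game is \emph{local} (same node or parent--child in the tree component): embed $(t,i)$ as, say, $t\cdot 0\cdot 1^i$, so that all arena relations become $d$-local in the sense of Corollary~\ref{BddLocalInterp} and hence definable, take $P'$ to be the image of $P$ in copy $0$, and note that $(\mathcal{T}_2,P)$ and $(\mathcal{T}_2,P')$ are then mutually interpretable, so both the selection failure and the definable witness transfer. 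With that replacement the proof is complete.
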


Intuitively, the obstruction comes from the fact that a choice function is not
$\MSO$-definable in the full binary tree \cite{ShelahGurevich}. Indeed, one can define in $(T_2,P)$ a one-player game in
which the player wins precisely by reaching a node in $P$. Any winning memoryless strategy
must therefore choose, at each position from which a win is possible, a successor leading
towards $P$. From such a strategy one can easily define a choice function for $P$.

\section{Generalized Church Problem}\label{GCPSection}

Let $\Sigma_{in},\Sigma_{out}$ be sets of alphabets (not necessarily finite), and let $R \subseteq {\Sigma_{in}}^\omega \times {\Sigma_{out}}^\omega$. In the following section we investigate the Generalized Church Problem for $R$ given by a definable infinite state automaton, and implementations given by definable infinite state transducers.

The framework of this section captures several known extensions of Church synthesis.
In particular, it subsumes the synthesis results of \cite{Rabinovich07,Rabinovich12a} over finite alphabets,
as well as the extension to infinite alphabets studied in \cite{BRUTSCHLT17, BrutschT22}.

\begin{defi}
\begin{enumerate}
\item An operator $f:{\Sigma_{in}}^\omega \to {\Sigma_{out}}^\omega$ \textbf{implements} $R$ if for every $x\in{\Sigma_{in}}^\omega$, $(x,f(x)) \in R$.
\item An operator $f:{\Sigma_{in}}^\omega \to {\Sigma_{out}}^\omega$ is \textbf{causal} if for every $x \in {\Sigma_{in}}^\omega$ and $n \in \omega$, $f(x)(n)$ depends only on $x(0),\dots,x(n)$.
\end{enumerate}
\end{defi}

\begin{defi}  \label{DADef}
\begin{enumerate}
\item A \textbf{deterministic (parity) acceptor} (DA) over an alphabet $\Sigma$ is \[\mathcal{D} = (V,\Sigma,v_0,\delta_\mathcal{D},\Pi_0,\dots,\Pi_{d-1})\] where $V$ is a set of nodes, $\Sigma$ is an alphabet, $v_0 \in V$ is an initial node, $\delta_\mathcal{D}:V \times \Sigma \to V$ is a transition function (which is, a deterministic and complete transition relation), and the partition $\Pi_0,\dots,\Pi_{d-1}$ are colors of the nodes.
\item Let $\mathcal{D}$ be a DA and $x \in \Sigma^\omega$ an $\omega$-word over $\Sigma$. The \textbf{run} of $\mathcal{D}$ on $x$ is a sequence of nodes $(w_0,w_1,\dots)$ such that $w_0=v_0$ and for every $n$, \[w_{n+1}=\delta_{\mathcal{D}}(w_n,x(n)).\]
The run is \textbf{accepting} if the minimal node color that repeats infinitely many times is even. The $\omega$-word $x$ is accepted by $\mathcal{D}$ if the run of $\mathcal{D}$ on $x$ is accepting. The \textbf{language} of $\mathcal{D}$, $\mathcal{L}(\mathcal{D})$, is the set of $\omega$-words accepted by $\mathcal{D}$.
\item Let $\mathcal{D}$ be a DA over alphabet $\Sigmain \times \Sigmaout$. The \textbf{input-output relation defined by} $\mathcal{D}$,  $R_\mathcal{D}$, is the set of pairs of sequences $(x,y) \in {\Sigmain}^\omega \times {\Sigmaout}^\omega$ such that the $\omega$-word over $\Sigmain \times \Sigmaout$ whose every $n$'th letter is $(x(n),y(n))$ is an element of $\mathcal{L}(\mathcal{D})$.

\item \label{TransducerDef}A \textbf{transducer} over $\Sigmain$ and $\Sigmaout$ is \[\mathcal{T} = (V,\Sigmain,\Sigmaout,v_0,\delta_\mathcal{T})\] where $V$ is a set of nodes, $\Sigmain, \Sigmaout$ are input and output alphabets, $v_0 \in V$ is an initial node, and $\delta_\mathcal{T}:V \times \Sigmain \to V \times \Sigmaout$ is a transition and output function.

\item Let $\mathcal{T}$ be a transducer and $x \in {\Sigmain}^\omega$ an $\omega$-word over $\Sigmain$. The \textbf{run} and \textbf{output} of $\mathcal{T}$ on $x$ are a sequence of nodes $(w_0,w_1,\dots)$ and an $\omega$-word $(\tau_0,\tau_1,\dots)$ over $\Sigmaout$, respectively, such that $w_0=v_0$, and for every $n$, \[(w_{n+1},\tau_n)=\delta_{\mathcal{T}}(w_n,x(n)).\]

\item \textbf{The operator induced by} $\mathcal{T}$ is the causal operator $f_\mathcal{T}: {\Sigma_{in}}^\omega \to {\Sigma_{out}}^\omega$ that takes each $x \in {\Sigmain}^\omega$ to its output.

Given a relation $R \subseteq {\Sigma_{in}}^\omega \times {\Sigma_{out}}^\omega$, we say that $\mathcal{T}$ \textbf{implements} $R$ if $f_\mathcal{T}$ implements $R$.
\end{enumerate}
\end{defi}

Note that in the above definition there are no assumptions of finiteness, which is, neither the set of nodes nor the alphabet are assumed to be finite.

One can similarly define nondeterministic automata in this setting. However, in general such automata need not admit definable determinization. Determinism is essential for the corresponding game-theoretic applications, since strategies are induced by the unique runs of the automaton.

\subsection{Generalized Church Problem for deterministic acceptors over finite alphabets}
\begin{defi}
Let $\mathcal{M}$ be a structure. A deterministic acceptor / transducer over finite alphabets is \textbf{definable in }$\mathcal{M}$ if:
\begin{itemize}
\item $V$, $v_0$, $\Pi_0,\dots,\Pi_{d-1}$ are definable subsets of $\mathcal{M}$.
\item For every $\sigma \in \Sigma_{in}$, the relation $v'=\delta_{\mathcal{D}}(v,\sigma)$ is definable in $\mathcal{M}$.
\item For every $\sigma \in \Sigma_{in}, \tau \in \Sigma_{out}$, the relation $(v',\tau)=\delta_{\mathcal{T}}(v,\sigma)$ is definable in $\mathcal{M}$.
\end{itemize}

\end{defi}
\smallskip
Let $\mathcal{M}$ be a structure, and let $\Sigmain, \Sigmaout$ be finite alphabets.
\begin{prob}[Generalized Church Problem for $\mathcal{M}$-definable deterministic acceptors over finite alphabets]
Let $\mathcal{D}$ be an $\mathcal{M}$-definable deterministic acceptor over $\Sigmain \times \Sigmaout$.

Is there a causal operator $f:{\Sigma_{in}}^\omega \to {\Sigma_{out}}^\omega$ that implements $R_\mathcal{D}$?

\end{prob}

\begin{thm} \label{gcpFinite} Let $\mathcal{D}$ be an $\mathcal{M}$-definable deterministic acceptor over $\Sigmain \times \Sigmaout$.
\begin{enumerate}
\item (Decidability) It is recursive in $MTh(\mathcal{M})$ whether there exists a causal operator that implements $R_\mathcal{D}$.
\item (Definability) If $\mathcal{M}$ has the selection property, and there is a causal operator that implements $R_\mathcal{D}$, then there is an $\mathcal{M}$-definable transducer over $\Sigmain$ and $\Sigmaout$ that implements $R_\mathcal{D}$.

Moreover, the vertices of the transducer are the vertices of $\mathcal{D}$, and if $\delta_\mathcal{T}(v,\sigma_{in})=(v',\sigma_{out})$ then $\delta_\mathcal{D}(v,\sigma_{in},\sigma_{out})=v'$.
\item (Computability) If selection for $\mathcal{M}$ is solvable, and there is a causal operator that implements $R_\mathcal{D}$, then it is computable to find a definition of a transducer that implements $R_\mathcal{D}$.
\end{enumerate}

\end{thm}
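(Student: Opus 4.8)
The plan is to reduce the synthesis problem to the game synthesis problem over a parity game built from $\mathcal{D}$, and then to invoke the results of Section \ref{SectionBoundedOutDegree}.

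\textbf{Building the game.} First I would build a parity game $G_\mathcal{D}$ modelling the interaction of the two players, with player Output taken to be player $I$ and player Input player $II$. The positions of player $II$ are the nodes $V$ of $\mathcal{D}$; from $v \in V$ player Input chooses a letter $\sigma \in \Sigmain$ and moves to an intermediate position $(v,\sigma)$. The positions of player $I$ are the intermediate positions $V \times \Sigmain$; from $(v,\sigma)$ player Output chooses $\tau \in \Sigmaout$ and moves to $\delta_\mathcal{D}(v,(\sigma,\tau))$. Each $v \in V$ keeps its color in $\mathcal{D}$, while every intermediate position receives a color strictly larger than all colors of $\mathcal{D}$; since the $V$-positions recur along every infinite play, the minimal color occurring infinitely often is always a $\mathcal{D}$-color, so player Output wins a play iff the sequence of $V$-nodes it traverses, i.e.\ the run of $\mathcal{D}$ on the induced joint word, is accepting. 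Because $\Sigmain,\Sigmaout$ are finite, the layer $V\times\Sigmain$ is a finite disjoint union of copies of $V$, so $G_\mathcal{D}$ is definable in a copying $\mathcal{M} \times k$ for suitable $k$; it has bounded out degree, and fixing arbitrary linear orders on $\Sigmain$ and $\Sigmaout$ yields an $\mathcal{M}\times k$-definable local linear order of its out-neighbours.

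\textbf{Decidability.} The key correspondence is that a causal operator implementing $R_\mathcal{D}$ exists if and only if player Output wins $G_\mathcal{D}$ from (the copy of) $v_0$. Indeed, a causal operator yields an a priori history-dependent strategy for Output all of whose plays are accepting; conversely a winning strategy for Output, which by memoryless determinacy of parity games may be taken positional, responds to each input letter at a position $(v,\sigma)$ with an output letter and thus induces a transducer, hence a causal operator. Since the winning region of player $I$ is $\MSO$-definable in the arena, and the arena is $\MSO$-definable in $\mathcal{M}\times k$, membership of $v_0$ in Output's winning region is expressible by an $\MSO$ sentence over $\mathcal{M}\times k$; by Corollary \ref{copyingRecursive} its truth is recursive in $MTh(\mathcal{M})$, proving (1).

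\textbf{Definability and computability.} Assume now that $\mathcal{M}$ has the selection property and that Output wins from $v_0$. By Proposition \ref{copyingSelection}, $\mathcal{M}\times k$ also has the selection property, so Theorem \ref{boundedOutDegree} applies to the bounded out degree game $G_\mathcal{D}$ with its definable local linear order, yielding an $\mathcal{M}\times k$-definable uniform memoryless winning strategy $s$ for Output. Since Output's winning region is closed under the moves of player Input, $v_0$ being in it forces every $(v_0,\sigma)$ into it as well, so $s$ is defined along every play started at $v_0$ and keeps it winning. I then read off the transducer: its node set is $V$, and for $v \in V,\ \sigma \in \Sigmain$ I set $\delta_\mathcal{T}(v,\sigma)=(v',\tau)$ where $v'=s((v,\sigma))$ and $\tau$ is the least letter (in the fixed order) with $\delta_\mathcal{D}(v,(\sigma,\tau))=v'$; this choice is harmless because acceptance depends only on the sequence of $V$-nodes. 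By construction $\delta_\mathcal{T}(v,\sigma)=(v',\tau)$ implies $\delta_\mathcal{D}(v,(\sigma,\tau))=v'$, and the induced operator implements $R_\mathcal{D}$, giving the ``moreover'' clause. The transition relation of $\mathcal{T}$ is definable in $\mathcal{M}\times k$, hence in $\mathcal{M}$ by Lemma \ref{copyingLem}, proving (2); when selection for $\mathcal{M}$ is solvable, the computable versions of Proposition \ref{copyingSelection} and Theorem \ref{boundedOutDegree} make $s$, and therefore $\mathcal{T}$, computable, proving (3).

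\textbf{Main obstacle.} The conceptual content is packaged in the reduction, so the proof is mostly routine once $G_\mathcal{D}$ is set up. The one point requiring genuine care is the correspondence between causal operators and positional winning strategies -- in particular the appeal to memoryless determinacy to pass from an arbitrary implementing operator to a positional strategy and back -- together with the color bookkeeping guaranteeing that the parity condition of $G_\mathcal{D}$ faithfully mirrors acceptance by $\mathcal{D}$.
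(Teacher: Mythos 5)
Your proposal is correct and follows essentially the same route as the paper: the same reduction to a parity game on $V \cup (V\times\Sigmain)$ definable in a copying $\mathcal{M}\times(|\Sigmain|+1)$, decidability via Corollary \ref{copyingRecursive}, and a transducer extracted from the uniform memoryless strategy supplied by Proposition \ref{copyingSelection} and Theorem \ref{boundedOutDegree}, with the output letter chosen as the least one compatible with the strategy. The only (harmless) deviation is the coloring of the intermediate positions, where you assign a fresh large color while the paper reuses the color of the underlying $\mathcal{D}$-node; both make the parity condition mirror acceptance by $\mathcal{D}$.
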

\begin{proof}
The proof follows similar steps as the ones taken by B\"uchi and Landweber \cite{BL69} in their solution of the Church problem. In the first step, the $DA$ is transformed into a parity game. In the second step, this game is solved, in the sense that a definable memoryless winning strategy is found. In the third step, this strategy is transformed into an implementing transducer. On each move of player output, the transducer will pick up a letter $\sigma_{out}\in\Sigmaout$ that is compatible with the defined strategy.
\begin{enumerate}
\item Let $G$ be the following parity game on the vertices \[(V \times \{orig\}) \cup (V \times \Sigmain):\] player input plays on $V \times \{orig\}$, moving from $(v,orig)$ to $(v,\sigma_{in})$ for $\sigma_{in} \in \Sigmain$, and player output plays on $V \times \Sigmain$, moving from $(v,\sigma_{in})$ to $(\delta_\mathcal{D}(v,\sigma_{in},\sigma_{out}),orig)$ for $\sigma_{out} \in \Sigmaout$. The colors are according to $v \in V$. This game is definable in the copying $\mathcal{M} \times (|\Sigmain|+1)$.

A causal operator implementing $R_\mathcal{D}$ exists if and only if $(v_0,orig)$ is in the winning region of player output in the game $G$. This is a statement in $MTh\big(\mathcal{M} \times (|\Sigma_{in}|+1)\big)$, because the winning regions of the parity game $G$ are definable in $G$, and $G$ is definable in $\mathcal{M} \times (|\Sigmain|+1|)$. By Corollary \ref{copyingRecursive}, $MTh\big(\mathcal{M} \times (|\Sigma_{in}|+1)\big)$ is recursive in $MTh(\mathcal{M})$.

\item Since $\mathcal{M}$ has the selection property, so does $\mathcal{M}'$. The game $G$ is definable in $\mathcal{M}'$ and has bounded out degree. To apply Theorem \ref{boundedOutDegree}, a local linear order of $G$ must be defined. To do so, first take any linear order on the finite set $\Sigmain \cup \Sigmaout$. From this order, a local linear order of $G$ may easily be defined.

It follows that there are $\mathcal{M}'$-definable uniform memoryless winning strategies for both players. In particular, if player output wins, there is a formula $\phi(x,y)$ defining in $\mathcal{M}'$ a winning strategy for player output.

By Lemma \ref{copyingLem}, $\phi(x,y)$ may be rewritten as a tuple of $\MSO$-formulas \[\langle \tilde{\phi}_{\sigma_{in}}(v,v')\ |\ \sigma_{in} \in \Sigmain\rangle\] such that for every $v,v' \in M$ \[\mathcal{M}' \models \phi((v,\sigma_{in}),(v',Orig)) \iff \mathcal{M} \models \tilde{\phi}_{\sigma_{in}}(v,v').\]

As to the transducer, its vertices and initial vertex are the vertices and initial vertex of the DA. The transition and output function $\delta_\mathcal{T}$ uses the linear order we've defined on $\Sigmaout$: \[\big(\delta_\mathcal{T}(v,\sigma_{in})=(v',\sigma_{out})\big) \iff \]\[\tilde{\phi}_{\sigma_{in}}(v,v') \wedge \big(\sigma_{out}\ minimal\ s.t.\ \delta_\mathcal{D}(v,\sigma_{in},\sigma_{out})=v')\big).\]

\item Follows from the proof of (2).
\end{enumerate}
\end{proof}

The above result guarantees the existence of a definable implementing transducer over a finite alphabet whenever the structure has the selection property, and in particular in $(\alpha,<,\bp)$ for $\alpha < \omega^\omega$. In the parameterless case:
\begin{cor}
Let $\Sigmain, \Sigmaout$ be finite alphabets, and let $\mathcal{M}$ be one of the following structures:
\begin{enumerate}
\item $(\alpha,<)$ for $\alpha < \omega^\omega$.
\item $\mathcal{T}_k$.
\end{enumerate}

Let $\mathcal{D}$ be an $\mathcal{M}$-definable deterministic acceptor over $\Sigmain \times \Sigmaout$.
\begin{enumerate}
\item (Decidability) It is decidable whether there exists a causal operator that implements $R_\mathcal{D}$.
\item (Definability, Computability) If there is a causal operator that implements $R_\mathcal{D}$, then there is an $\mathcal{M}$-definable transducer over $\Sigmain$ and $\Sigmaout$ that implements $R_\mathcal{D}$.

Moreover, it is computable to find a definition of such a transducer.
\end{enumerate}

\end{cor}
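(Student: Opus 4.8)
# Proof Proposal for the Corollary

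The plan is to derive this corollary directly from Theorem~\ref{gcpFinite} by verifying that each of the two structures $(\alpha,<)$ for $\alpha<\omega^\omega$ and $\mathcal{T}_k$ satisfies the hypotheses needed: decidability of its monadic theory (for part~1), the solvable selection property (for parts~2 and~3), and the fact that these properties are preserved under the copying operation that the game construction in Theorem~\ref{gcpFinite} requires. Since $\Sigma_{in}$ and $\Sigma_{out}$ are finite, no issue about choosing labels in an infinite alphabet arises, so the finite-alphabet version of the generalized Church problem applies verbatim.

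First I would dispatch part~1. For decidability of the implementation question, Theorem~\ref{gcpFinite}(1) states that this is recursive in $MTh(\mathcal{M})$. For $\mathcal{M}=(\alpha,<)$ with $\alpha<\omega^\omega$, the monadic theory is decidable (this is classical, following from B\"uchi's theorem on $\omega$ together with the composition method for ordinals below $\omega^\omega$; it also follows from Theorem~\ref{unifoAlpha} via the uniformization machinery). For $\mathcal{M}=\mathcal{T}_k$, decidability of $MTh(\mathcal{T}_k)$ is exactly Rabin's theorem (Theorem~\ref{AutLogEqTrees} and the preceding statement). Hence in both cases the predicate ``there exists a causal operator implementing $R_\mathcal{D}$'' is decidable, establishing part~1.

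Next I would handle parts~2 and~3 together, since computability is stated to follow from the proof of definability in Theorem~\ref{gcpFinite}. The key point is that both structures have the \emph{solvable} selection property: for $(\alpha,<)$ with $\alpha<\omega^\omega$ this is precisely Theorem~\ref{unifoAlpha}, and for $\mathcal{T}_k=(T_k,<,\bE)$ it is Rabin's basis theorem (the stated theorem $\mathcal{T}_k$ has the solvable selection property). By Theorem~\ref{gcpFinite}(2) and~(3), solvable selection for $\mathcal{M}$ yields an $\mathcal{M}$-definable transducer over $\Sigmain$ and $\Sigmaout$ implementing $R_\mathcal{D}$ whenever an implementing causal operator exists, and a definition of this transducer is computable. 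This gives the combined definability/computability claim of part~2 of the corollary.

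The only genuinely verifiable step—and the place where I would be most careful—is confirming that the copying $\mathcal{M}\times(|\Sigmain|+1)$ used in the game construction of Theorem~\ref{gcpFinite} inherits solvable selection and decidability. But this is supplied by the general lemmas: Proposition~\ref{copyingSelection} gives that $\mathcal{M}\times d$ has the (solvable) selection property iff $\mathcal{M}$ does, and Corollary~\ref{copyingRecursive} gives that $MTh(\mathcal{M})$ and $MTh(\mathcal{M}\times d)$ are mutually recursive. So no new obstacle appears here; the corollary is a clean specialization. I would close by noting that these two instances are exactly the parameterless cases where both hypotheses of Theorem~\ref{gcpFinite} are known to hold, which is why the statement collapses decidability, definability, and computability into the simpler form displayed.
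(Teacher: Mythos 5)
Your proposal is correct and matches the paper's (implicit) argument: the corollary is stated there without a separate proof, as an immediate specialization of Theorem~\ref{gcpFinite} using exactly the facts you cite — decidability of $MTh(\alpha,<)$ for $\alpha<\omega^\omega$ and of $MTh(\mathcal{T}_k)$, together with solvable selection from Theorem~\ref{unifoAlpha} and Rabin's basis theorem, with Proposition~\ref{copyingSelection} and Corollary~\ref{copyingRecursive} handling the copying. Nothing further is needed.
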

Note that the out-degree of $\mathcal{D}$ is bounded by $|\Sigmain\times \Sigmaout|$.

\subsection{Generalized Church Problem over infinite alphabets} \label{GCPInfSect}

In the case of finite alphabets, the input-output relation was defined in terms of an $\mathcal{M}$-definable transition system. The first step of the solution of the generalized Church problem over finite alphabets then involved the definition of a parity game. That parity game represented the transition system in the sense that player output wins the game if and only if the input-output relation defined by the transition system may be implemented by a causal operator.

For the proof to hold, this parity game had to be defined in a copying of the structure $\mathcal{M}$. Indeed, the game was defined in $\mathcal{M} \times (|\Sigmain|+1)$,  and each position of player output was composed of the previous position and the move player input has just played.

This cannot be done whenever $\Sigmain$ is infinite. For this reason, in the generalized Church
problem over infinite alphabets one cannot, in general, interpret the game arena in a deterministic automaton, or in a structure in which the automaton itself is interpreted. Instead, the input--output relation is specified directly by an edge-labeled game arena.

Below framework captures $\mathbb{N}$-$\MSO$ automata \cite{BRUTSCHLT17, BrutschT22} -- see also Remark \ref{RemNAutomata} and Theorem \ref{NAutThm}.

\begin{defi}
\begin{enumerate}
\item An \textbf{edge labelled parity game} $\mathcal{G}$ over $\Sigmain$ and $\Sigmaout$ is \[(V_{in},V_{out},\Sigmain,\Sigmaout,v_0,\delta_\mathcal{G},\Pi_0,\dots,\Pi_{d-1})\] where $V_{in},V_{out}$ are sets of nodes, $\Sigmain, \Sigmaout$ are alphabets, $v_0 \in V_{in}$ is an initial node, $\Pi_0,\dots,\Pi_{d-1}$ are colors of $V$ and $\delta_\mathcal{G}$
is a transition function
\[\delta_\mathcal{G} : (V_{in} \times \Sigmain) \cup (V_{out} \times \Sigmaout) \to V\] that maps $(V_{in} \times \Sigmain)$ to $V_{out}$ and $(V_{out} \times \Sigmaout)$ to $V_{in}$.
\item Given an edge labelled parity game $\mathcal{G}$, we denote by $\tilde{\mathcal{G}}$ the (non edge labelled) parity game constructed from $\mathcal{G}$ by forgetting the labels of the edges, which is, $\tilde{\mathcal{G}}=(V_{in},V_{out},E,\Pi_0,\dots,\Pi_{d-1})$, where for $x \in V_{in}, y \in V_{out}$ \[E(x,y) \iff \exists \sigma_{in} \in \Sigmain\ (\delta_\mathcal{G}(x,\sigma_{in})=y)\] and similarly for $x \in V_{out},\ y \in V_{in}$.
\item Given an edge labelled parity game $\mathcal{G}$, a \textbf{play}, a \textbf{strategy}, and \textbf{winning} plays and strategies, are the ones of the parity game $\tilde{\mathcal{G}}$.
\item Let $x,y$ be $\omega$-words over $\Sigmain$ and $\Sigmaout$, respectively. The \textbf{play induced by} $(x,y)$ is a play $(v_0,w_0,v_1,w_1,\dots)$ such that $v_0$ is the initial node, and for every $n$ \[w_n=\delta_\mathcal{G}(v_n,x_n),\] \[v_{n+1}=\delta_\mathcal{G}(w_n,y_n).\] We will say that $(x,y)$ are \textbf{winning for player output} if their induced play is winning for player output.
\item The \textbf{input-output relation defined by} $\mathcal{G}$ is the set of pairs $(x,y) \in {\Sigmain}^\omega \times {\Sigmaout}^\omega$ that are winning for player output. It is denoted by $R_\mathcal{G}$.

\end{enumerate}
\end{defi}
Note that the transition relation of an edge labelled parity game is deterministic and complete.

\begin{defi}
Let $\mathcal{M}$ be a structure. An edge labelled parity game $\mathcal{G}$  over $\Sigmain$ and $\Sigmaout$ is \textbf{definable in} $\mathcal{M}$ if:
\begin{itemize}
\item $V_{in}$, $V_{out}$, $\Sigmain$, $\Sigmaout$, $v_0$, $\Pi_0,\dots,\Pi_{d-1}$ are definable subsets of $\mathcal{M}$.
\item The function $\delta_{\mathcal{G}}$ is definable in $\mathcal{M}$
\end{itemize}
\end{defi}

\begin{prob}[Generalized Church Problem for $\mathcal{M}$-definable parity games]
Let $\mathcal{G}$ be an $\mathcal{M}$-definable edge labelled parity game over $\Sigmain$ and $\Sigmaout$.

Is there a causal operator $f:{\Sigma_{in}}^\omega \to {\Sigma_{out}}^\omega$ that implements $R_\mathcal{G}$?
\end{prob}

When the alphabet becomes infinite, transforming a winning strategy into an implementing transducer carries a new challenge: even when the next game position is known, there might be infinitely many letters leading to that same game position. To overcome it, we introduce the following definitions:
\begin{defi}
\begin{enumerate}
\item We say that $\mathcal{M}$ has \textbf{first order uniformization} if every $\MSO(\mathcal{M})$-formula $\phi(\bx,\by)$ with first order \textbf{free variables} has a uniformizer with domain variables $\bx$ over $\mathcal{M}$.
\item We say that $\mathcal{M}$ has \textbf{weak first order uniformization} if every $\MSO(\mathcal{M})$-formula $\phi(x,y)$ with first order \textbf{free variables} has a uniformizer with domain variable $x$ over $\mathcal{M}$.
(Here there is only one domain variable and one image variable.)
\end{enumerate}
\end{defi}
Note that if $\mathcal{M}$ has (weak) first order uniformization, then so do the copyings of $\mathcal{M}$.

\smallskip

\smallskip
\begin{thm} \label{gcpInfinite}
Let $\mathcal{M}$ be a structure, and let $\mathcal{G}$ be an $\mathcal{M}$-definable edge labelled parity game over $\Sigmain$ and $\Sigmaout$.

\begin{enumerate}

\item (Decidability) It is recursive in $MTh(\mathcal{M})$ whether there exists a causal operator that implements $R_\mathcal{G}$.
\end{enumerate}
If in addition:
\begin{enumerate}[(a)]
\item$\mathcal{M}$-definable parity games have $\mathcal{M}$-definable uniform memoryless winning strategies,
\item $\mathcal{M}$ has weak first order uniformization,
\end{enumerate}
then:
\begin{enumerate}
\setcounter{enumi}{1}
\item (Definability) If there is a causal operator that implements $R_\mathcal{G}$, then there is an $\mathcal{M}$-definable transducer over $\Sigmain$ and $\Sigmaout$ that implements $R_\mathcal{G}$.

Moreover, the vertices of the transducer are the game vertices of player input, and if $\delta_\mathcal{T}(v,\sigma_{in})=(v',\sigma_{out})$ then $\delta_\mathcal{G}(\delta_\mathcal{G}(v,\sigma_{in}),\sigma_{out})=v'$.
\item (Computability) If there is a causal operator that implements $R_\mathcal{G}$, $\mathcal{M}$-definable parity games have \textbf{computable} $\mathcal{M}$-definable uniform memoryless winning strategies, and weak first order uniformization is solvable, then it is computable to find a definition of a transducer that implements $R_\mathcal{G}$.
\end{enumerate}
\end{thm}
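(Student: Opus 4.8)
The plan is to reduce everything to the parity game $\tilde{\mathcal{G}}$ obtained from $\mathcal{G}$ by forgetting edge labels, and then to transport a definable winning strategy for $\tilde{\mathcal{G}}$ into a transducer; the only genuinely new ingredient beyond the finite-alphabet Theorem \ref{gcpFinite} is the definable choice of an output letter realizing each strategic move. The guiding observation I would establish first is the equivalence: \emph{there is a causal operator implementing $R_\mathcal{G}$ if and only if player output wins $\tilde{\mathcal{G}}$ from $v_0$}. For the forward direction, a causal operator $f$ determines, at each player-output position $w_n$ reached along the play induced by $(x,f(x))$, an output letter $f(x)(n)$ depending only on $x(0),\dots,x(n)$; since the play history up to $w_n$ is itself recoverable from $x(0),\dots,x(n)$ through the deterministic $\delta_\mathcal{G}$ and the previously emitted outputs, $f$ yields a strategy for player output which is winning because $f$ implements $R_\mathcal{G}$. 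For the converse, any winning strategy can be executed causally: reading the inputs one at a time, recompute the current position, consult the strategy for the next vertex, and emit any letter realizing that (legal, hence realizable) move, producing a causal operator all of whose induced plays are winning.

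This equivalence immediately yields decidability. The game $\tilde{\mathcal{G}}$ is $\MSO$-definable in $\mathcal{M}$: its vertex sets, colors and initial vertex are definable by hypothesis, and its edge relation is defined by $E(x,y)\iff\exists\sigma_{in}\in\Sigmain\,(\delta_\mathcal{G}(x,\sigma_{in})=y)$ on $V_{in}\times V_{out}$ and symmetrically on $V_{out}\times V_{in}$. Hence the winning region of player output is $\MSO$-definable in $\tilde{\mathcal{G}}$, and therefore in $\mathcal{M}$, so ``$v_0$ lies in player output's winning region'' is a sentence of $MTh(\mathcal{M})$ and deciding it is recursive in $MTh(\mathcal{M})$.

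For definability, assume (a) and (b) and that a causal operator exists, so player output wins from $v_0$. By (a) fix an $\mathcal{M}$-definable uniform memoryless winning strategy $\phi(w,v')$ for player output in $\tilde{\mathcal{G}}$ (memorylessness is exactly what will let the transducer states be $V_{in}$). I would then form the $\MSO(\mathcal{M})$ formula with first-order free variables
\[
\psi(w,\sigma_{out})\ :=\ \exists v'\,\big(\phi(w,v')\ \wedge\ \delta_\mathcal{G}(w,\sigma_{out})=v'\big),
\]
relating a player-output vertex $w$ to those letters whose move lands on the vertex prescribed by the strategy. Applying weak first order uniformization (b) to $\psi$ with domain variable $w$ yields $\psi^*(w,\sigma_{out})$ selecting a unique such letter whenever one exists (and one does exist at every $w$ in player output's winning region, since $\phi$ there returns a legal move). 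The transducer has node set $V_{in}$, initial node $v_0$, and transition and output function
\[
\delta_\mathcal{T}(v,\sigma_{in})=(v',\sigma_{out}) \iff \psi^*\big(\delta_\mathcal{G}(v,\sigma_{in}),\sigma_{out}\big)\ \wedge\ \delta_\mathcal{G}\big(\delta_\mathcal{G}(v,\sigma_{in}),\sigma_{out}\big)=v',
\]
which is $\mathcal{M}$-definable and satisfies the asserted ``moreover'' relation; on the (irrelevant) non-winning states it may be completed arbitrarily. Correctness holds because, by construction, every run makes player output play exactly the moves dictated by $\phi$: starting from $v_0$ in player output's winning region, player input cannot leave that region and $\phi$ keeps every play inside it and winning, so each pair $(x,f_\mathcal{T}(x))$ is winning for player output, i.e. lies in $R_\mathcal{G}$. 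Computability (part 3) follows by tracking effectiveness: if $\phi$ is computable and weak first order uniformization is solvable, then $\psi$, $\psi^*$ and hence the definition of $\delta_\mathcal{T}$ are all computable.

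The hard part is precisely the step distinguishing this from the finite-alphabet case: a winning strategy specifies only the \emph{next vertex}, whereas a transducer must emit a concrete \emph{letter}, and over an infinite $\Sigmaout$ there may be infinitely many letters realizing a given move with no definable ``minimal'' choice available. Isolating the functional relation $\psi(w,\sigma_{out})$ and discharging this choice through weak first order uniformization is the crux; the remainder is routine transfer of the parity-game machinery already developed.
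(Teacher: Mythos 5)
Your proposal is correct and follows essentially the same route as the paper: forget the edge labels to obtain the definable parity game $\tilde{\mathcal{G}}$ (giving decidability via definability of the winning region), take a definable uniform memoryless winning strategy $\phi$, form the very same first-order relation $\exists v'\,(\phi(w,v')\wedge\delta_\mathcal{G}(w,\sigma_{out})=v')$, uniformize it to choose the output letter, and read off the transducer on $V_{in}$. The only difference is cosmetic: you spell out the equivalence between causal operators and winning strategies that the paper dismisses with ``clearly,'' and your transition formula is an equivalent rephrasing of the paper's (since $\delta_\mathcal{G}$ is functional, requiring $v'=\delta_\mathcal{G}(w,\sigma_{out})$ with $\sigma_{out}$ satisfying the uniformizer already forces $\phi(w,v')$).
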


\begin{proof}
The proof follows similar steps as the ones taken in the finite alphabet case. In the first step the edge labels are forgotten in order to form a standard parity game. In the second step, a definable memoryless winning strategy is found. In the third step, this strategy is transformed into an implementing transducer. On each move of player output, the transducer will use weak first order uniformization to pick up a letter $\sigma_{out}\in\Sigmaout$ that is compatible with the defined strategy.

\begin{enumerate}
\item Let $\tilde{\mathcal{G}}$ be the (non edge labelled) parity game constructed from $\mathcal{G}$ by forgetting the labels of the edges. Clearly, player output wins the game $\tilde{\mathcal{G}}$ if and only there exists a causal operator that implements $R_\mathcal{G}$. The game $\tilde{\mathcal{G}}$ is $\mathcal{M}$-definable, hence deciding whether $v_0$ is in the winning region of player output is a sentence in $MTh(\mathcal{M})$.

\item To define the transducer, we first let its vertices be the $\mathcal{G}$-vertices of player input, and its initial vertex be the same as the one of $\mathcal{G}$.

Let $\phi(v_{out},v_{in})$ be a definition of a uniform memoryless winning strategy for player output. Let \[\tilde{\phi}(v_{out},\sigma_{out}) =\]\[ \exists v_{in}\ \Big(\phi(v_{out},v_{in}) \land \big(\delta_\mathcal{G}(v_{out},\sigma_{out})=v_{in}\big)\Big),\] and let $\tilde{\phi}^*(v_{out},\sigma_{out})$ be a uniformizer of $\tilde{\phi}$ with domain variable $v_{out}$ over $\mathcal{M}$.

The transition and output function $\delta_\mathcal{T}$ is then defined as follows: \[\delta_\mathcal{T}(v_{in},\sigma_{in}) = ({v'_{in}},\sigma_{out}) \iff \]  \[\exists v_{out}\  \Big((\delta_\mathcal{G}(v_{in},\sigma_{in})=v_{out}) \wedge \phi(v_{out},{v'_{in}}) \wedge \]\[ \tilde{\phi}^*(v_{out},\sigma_{out})\Big).\]
\item Follows from the proof of (2).
\end{enumerate}
\end{proof}

As a trivial remark, all ordinals have first order uniformization. The following theorem implies the first order uniformization of $\mathcal{T}_k$, and also of $(\mathcal{T}_k,\bar{P^*})$ from Example \ref{YesSelExample} (see \cite{Semenov} p.172, in which this result for $\mathcal{T}_k$ is attributed to Muchnik):

\begin{restatable}[Uniformization of relations with first order domain variables]{thm}{unifoStatement}\label{unifoT2} Let $\bp \subseteq T_k$ be such that the class of subtrees $(T_{\geq z},<,\bE,\bp)$ has the (solvable) selection property. Let $\phi(\bx,\bY,\bp)$ be a definable relation, where $\bx$ are first order variables. Then there is (a computable) $\psi(\bx,\bY,\bp)$ that uniformizes $\phi$ with domain variables $\bx$ over the structure $(\mathcal{T}_k,\bp)$.
\end{restatable}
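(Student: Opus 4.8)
The plan is to deduce first-order-domain uniformization from the subtree-selection hypothesis by the composition method, in the spirit of subsections \ref{sect:comp} and \ref{sect:select}. The first move is to reformulate uniformization as selection over a class: just as a structure has the uniformization property iff for every $l$ the class of its $l$-parameter expansions has the selection property, $(\mathcal{T}_k,\bp)$ admits uniformization with a first-order domain $\bx$ iff the class of its expansions by $|\bx|$ distinguished \emph{singletons} has the selection property. Hence it suffices to prove, for every $m$, that the class
\[
\mathcal{C}_m=\{\,(T_{\geq w},<,\bE,\bp,z_1,\dots,z_m)\ :\ w\in T_k,\ z_1,\dots,z_m\in T_{\geq w}\,\}
\]
of subtrees carrying $m$ extra marked nodes has the (solvable) selection property. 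The base case $\mathcal{C}_0$ is exactly the hypothesis, and I would induct on $m$, each step spending one marked node.

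For the inductive step I would fix a member $(T_{\geq w},<,\bE,\bp,z_1,\dots,z_m)$, choose $z_m$ as a decomposition point, and split $T_{\geq w}$ along the path $[w,z_m]$ into three kinds of pieces: the finite labelled path $[w,z_m]$ together with its $d$-expansion (Definition \ref{def:expansion-path}), the side subtrees $T_{\geq c}$ hanging off off-path children $c$, and the bottom subtree $T_{\geq z_m}$. Since $z_m$ becomes the definable root of $T_{\geq z_m}$, every side and bottom piece carries at most $m-1$ of the old marked nodes, so each lies in $\mathcal{C}_{\le m-1}$, to which the induction hypothesis applies; the path is a finite linear order, for which selection is available via Theorem \ref{unifoAlpha} and type composition. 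Crucially, all but finitely many side subtrees contain \emph{none} of $z_1,\dots,z_{m-1}$, so those generic pieces live in $\mathcal{C}_0$ and are handled by the \emph{single} hypothesis selector applied uniformly; only the bottom subtree and the finitely many, definably located, exceptional side subtrees require the higher selectors. By Theorem \ref{BinaryTkParam} there are computable $n,d$ such that, for any definable relation $\theta(\bY)$ over the marked subtree, whether $\theta(\bY)$ holds is determined by the $n$-type of the $d$-expanded path and the $n$-types of the side and bottom subtrees, each taken with the corresponding restriction of $\bY$.

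Selection of $\bY$ then proceeds by: choosing, among the finitely many profiles of subtree-types and path-type that are jointly realizable and force $\theta$, the minimal one in a fixed enumeration (a parameter-free finite choice securing uniqueness); realizing the prescribed type in each bottom/side subtree by the relevant $\mathcal{C}_{\le m-1}$- or $\mathcal{C}_0$-selector applied at $z_m$ and at each relevant child $c$; and realizing the path-type by selecting, along $[w,z_m]$, a consistent run of types through the linear-order selector. Gluing the pieces yields a definable $\bY$, and the composition theorem guarantees it satisfies $\theta$. The main obstacle is exactly this gluing and its uniformity: one must pick a \emph{globally consistent} type-run whose accumulated value matches the target path-type while every side-subtree type it prescribes is realizable, and then realize all these inside a single formula over unboundedly many side subtrees. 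What makes this expressible is that the bulk of those subtrees is covered by one uniform selector, the finitely many exceptions are named relative to $w,z_1,\dots,z_m$, and the type sets are finite with computable $n,d$; minimality in the fixed enumeration gives the required uniqueness. Effectiveness follows throughout, as the realizable profiles and valid combinations are computable by decidability of $MTh(\mathcal{T}_k)$ and the constituent selectors are computable under the solvable hypotheses, so $\psi$ is computable whenever subtree selection is solvable. First-order-ness of $\bx$ is used essentially: a single node induces the finite-width path decomposition that drives the reduction, whereas a monadic domain variable would not, in accordance with the failure of full uniformization for $\mathcal{T}_k$.
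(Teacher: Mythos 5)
Your argument is essentially correct in outline, but it takes a genuinely different route from the paper's, and the two are worth contrasting. The paper does not use the composition method here at all: it converts $\phi$ into a parity $k$-tree automaton $\mathcal{A}$ via Theorem~\ref{AutLogEqTrees}, guesses a run $\bar{q}$ of $\mathcal{A}$ on the finite union of paths $B(\bx)=[\epsilon,x_0]\cup\dots\cup[\epsilon,x_{m-1}]$, requires of every off-path successor $z$ entered in state $q$ that $T_{\geq z}\models\exists\bY\,\phi_q^*(\bY,\bp)$, where $\phi_q^*$ is the hypothesis selector applied to ``there is an accepting run from $q$,'' and pins down uniqueness by taking $(\bar{q},\bY\restriction B(\bx))$ lexicographically minimal. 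Your type-labelling along $[w,z_m]$ plays exactly the role of the run $\bar{q}$, your realizable type-profiles play the role of the states, and your minimal-profile clause plays the role of the lexicographic minimization; so the two proofs are structurally parallel at their core, and both hinge on the same point you identify, namely that the unboundedly many off-path subtrees are covered by a single uniform selector from the hypothesis. What your route costs is twofold. First, the composition theorem you invoke is strictly stronger than Theorem~\ref{BinaryTkParam} as stated: that theorem handles two first-order variables, whereas you need the $d$-expansion of the path to be taken relative to $(\bp,\bY)$ with $\bY$ distributed over the side forests, which is precisely what forces your ``guess a consistent type-run, then realize it'' structure and would have to be proved (or imported from Lifsches--Shelah) before the argument is complete. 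Second, your induction on $m$, decomposing one marked node at a time, is avoidable: the paper treats all of $\bx$ simultaneously via the finite tree $B(\bx)$, and the automaton semantics delivers the composition and the consistency of the ``run'' for free. What your route buys is independence from tree automata and a statement that is visibly about types, which may generalize more readily to structures where no automaton model is available; but as written, the gluing step you flag as ``the main obstacle'' is exactly the work that Rabin's theorem does in the paper's proof, and it is the one place where your sketch would need to be expanded into a real argument.
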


\begin{proof}

By Theorem \ref{AutLogEqTrees}, $\phi$ is equivalent to a $k$-tree automaton. Let $\mathcal{A}$ be a $k$-tree automaton over $2^{\{1,\dots,n\}}$, $n=|\bx|+|\bY|+|\bp|$, that is equivalent to $\phi$.

For every $q\in Q_\mathcal{A}$, let $\phi_q(\bY,\bp)$ say that assuming $\bx=\emptyset$, the tree has an accepting run from $q$. For every $q$, let $\phi_q^*(\bY,\bp)$ select $\phi_q$ over the class of subtrees of $(
\mathcal{T}_k,\bp)$.

For every $q$, add predicates $q^0,\dots,q^{k-1}$. These predicates are for coding a run of the automaton, in the sense that $q^i(x)$ represents a run in which the $i$'th successor of $x$ is in state $q$. Let $\bar{q}$ stand for all of these predicates, and $B(\bx)=B(x_0,\dots,x_{m-1})$ stand for the union of paths $[\epsilon,x_0]\cup\dots\cup[\epsilon,x_{m-1}]$.

Let $\eta(\bx,\bY,\bp,\bar{q})$ say that
\begin{itemize}
\item $\bar{q}$ code a legal run of $\mathcal{A}$ on the union of paths $B(\bx)$.
\item If $y\in B(\bx)$, $q^i(y)$, $z$ is the $i$'th successor of $y$, and $z$ is not in $B(\bx)$, then \[T_{\geq z} \models \exists \bY\ \phi_q^*(\bY,\bp).\]
\end{itemize}

The lexicographic order on $T_k$ is a definable linear order, and on the finite set $B(\bx)$ it is a definable well-order. When a set is well-ordered by a definable relation, there is a definable linear order of its subsets given by: $A_0\prec A_1$ if the first element of $A_0 \triangle A_1$ is an element of $A_0$. It follows that there is a definable well-order $\prec$ of the subsets of $B(\bx)$. Let $\eta^*(\bx,\bY,\bp,\bar{q})$ state that $\eta$ holds, and $\bar{q}, \bY\restriction_{B(\bx)}$ are minimal with respect to $\prec$.

The uniformizer $\psi(\bx,\bY,\bp)$ then states that there is $\bar{q}$ such that
\begin{itemize}
\item $\eta^*(\bx,\bY,\bp,\bar{q})$, and
\item If $y\in B(\bx)$, $q^i(y)$, and the $i$'th successor of $y$, $z$, is not in $B(\bx)$, then \[T_{\geq z} \models \phi_q^*(\bY,\bp).\]
\end{itemize}

\end{proof}

\begin{cor} \label{GcpOrdTree}
Let $\mathcal{M}$ be one of the following structures:
\begin{enumerate}
\item $(\alpha,<)$ for $\alpha < \omega^\omega$.
\item $\mathcal{T}_k$.
\item $(\alpha,<,\bp)$ for $\alpha < \omega^\omega$.
\item $(\mathcal{T}_k,\bp)$ such that the class of subtrees  \[(T_{\geq z},<,\bE,\bp)\] has the selection property.
\end{enumerate}

Let $\mathcal{G}$ be an $\mathcal{M}$-definable edge labelled parity game over $\Sigmain$ and $\Sigmaout$.

\begin{enumerate}
\item (Decidability) In cases (1) and (2), it is decidable whether there exists a causal operator that implements $R_\mathcal{G}$.

\item (Definability) If there is a causal operator that implements $R_\mathcal{G}$, then there is an $\mathcal{M}$-definable transducer over $\Sigmain$ and $\Sigmaout$ that implements $R_\mathcal{G}$.
\item (Computability) In cases (1) and (2), it is computable to find a definition of such a transducer.
\end{enumerate}
\end{cor}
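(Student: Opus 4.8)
The plan is to derive this corollary directly from Theorem~\ref{gcpInfinite}, which reduces the problem to checking, for each of the four structures $\mathcal{M}$, three properties: that $MTh(\mathcal{M})$ is decidable (needed for the decidability and computability clauses), that $\mathcal{M}$-definable parity games admit $\mathcal{M}$-definable uniform memoryless winning strategies (hypothesis (a) of that theorem), and that $\mathcal{M}$ has weak first order uniformization (hypothesis (b)). Once these are in place, clause (1) of Theorem~\ref{gcpInfinite} gives decidability recursive in $MTh(\mathcal{M})$, clause (2) gives the $\mathcal{M}$-definable transducer, and clause (3) gives its computability. So the whole argument is a verification of hypotheses rather than a new construction.

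For hypothesis (a) I would invoke Theorem~\ref{MainDefStrategy}. Cases (1) and (3), namely $(\alpha,<)$ and $(\alpha,<,\bp)$ with $\alpha<\omega^\omega$, are exactly case (3) of that theorem, case (1) being its parameterless instance. Cases (2) and (4) are instances of its case (2), which requires $(\mathcal{T}_k,\bp)$ to have the selection property: for the bare tree $\mathcal{T}_k$ this is Rabin's basis theorem \cite{Rabin69}, and for case (4) it follows from the assumed hypothesis, since the full tree $T_k=T_{\geq\epsilon}$ is itself one of the subtrees $(T_{\geq z},<,\bE,\bp)$, so selection over the class of subtrees entails selection over $(\mathcal{T}_k,\bp)$.

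For hypothesis (b) I would treat the ordinal and tree cases separately. For the ordinals (cases (1) and (3)) I use the remark that every ordinal has first order uniformization, which a fortiori gives the weak form. For the trees (cases (2) and (4)) I appeal to Theorem~\ref{unifoT2}, whose hypothesis is precisely that the class of subtrees $(T_{\geq z},<,\bE,\bp)$ has the selection property: for case (4) this is assumed, while for the bare tree $\mathcal{T}_k$ every subtree is isomorphic to $\mathcal{T}_k$ and therefore inherits the selection property from Rabin's basis theorem. Thus in all four cases $\mathcal{M}$ has weak first order uniformization, and Theorem~\ref{gcpInfinite}(2) yields the $\mathcal{M}$-definable transducer of clause (2).

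Finally, for the decidability and computability clauses, asserted only for cases (1) and (2), I would note that $MTh((\alpha,<))$ is decidable for $\alpha<\omega^\omega$ (a classical result following from Büchi \cite{Buchi} via the composition method) and $MTh(\mathcal{T}_k)$ is decidable by Rabin \cite{Rabin69}; combined with Theorem~\ref{gcpInfinite}(1) this gives clause (1). For clause (3) I would upgrade to the solvable versions: $(\alpha,<)$ has solvable selection and solvable uniformization by Theorem~\ref{unifoAlpha}, and $\mathcal{T}_k$ has solvable selection by Rabin's basis theorem and computable first order uniformization by the effective form of Theorem~\ref{unifoT2}; hence the strategies of Theorem~\ref{MainDefStrategy} are computable and weak first order uniformization is solvable, so Theorem~\ref{gcpInfinite}(3) applies. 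There is no real obstacle here beyond this bookkeeping; the only point needing care is the split between the clauses holding for all four structures (definability) and those restricted to the parameterless cases (decidability, computability), which mirrors exactly whether $MTh(\mathcal{M})$ is decidable and selection is solvable --- properties that fail for arbitrary parameters $\bp$.
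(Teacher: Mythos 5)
Your proposal is correct and follows exactly the route the paper intends: the corollary is stated without an explicit proof precisely because it is the bookkeeping you describe, namely verifying hypotheses (a) and (b) of Theorem~\ref{gcpInfinite} via Theorem~\ref{MainDefStrategy}, Theorem~\ref{unifoT2}, and the remark on first order uniformization of ordinals, with the solvable/decidable variants restricted to the parameterless cases. Your observation that the full tree is itself one of the subtrees $(T_{\geq z},<,\bE,\bp)$, so that the case (4) hypothesis yields selection for $(\mathcal{T}_k,\bp)$ itself, is the right way to connect case (4) to Theorem~\ref{MainDefStrategy}(2).
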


\begin{rem} \label{RemNAutomata}
Brütsch and Thomas \cite{BrutschT22} gave a solution to the generalized Church problem for the case of $\mathbb{N}$-$\MSO$ automata.

An $\mathbb{N}$-$\MSO$ \textbf{automaton} $\mathcal{A}$ consists of $\langle Q,q_0,\phi_{p,q}, c \rangle$ where $Q$ is a finite set of states, %$m_0 \in M$ is a definable initial element,
$q_0 \in Q$ is an initial state, $\langle \phi_{p,q}(m,k,n)\ |\ p,q \in Q \rangle$ are $\MSO$ formulas over $\langle \omega,<\rangle$, and $c$ is a coloring function of $Q$.
A \textbf{configuration} is a pair $(m,q)$ where $m \in \omega$, $q \in Q$.
A \textbf{run} of an $\mathbb{N}$-$\MSO$ automaton $\mathcal{A}$ over a word $w \in \mathbb{N}^\omega$ is an $\omega$-sequence of configurations, such that the first configuration is $(0,q_0)$, and if the $i'th$ configuration is $(m_i,q_i)$ and the $i+1'th$ configuration is $(m_{i+1},q_{i+1})$ then $\omega \models \phi_{q_i,q_{i+1}}(m_i,w(i),m_{i+1}).$
An $\mathbb{N}$-$\MSO$ automaton is \textbf{deterministic} if the formulas $\phi_{p,q}(m,k,n)$ are functional in $p,m,k$, which is, for every $p,m,k$ there is a unique pair $q,n$ such that $\omega \models \phi_{p,q}(m,k,n)$.

\begin{thm}\label{NAutThm} [\cite{BrutschT22}]
Let $\mathcal{A}$ be a deterministic $\mathbb{N}$-$\MSO$ automaton, and let $L$ be the language of words accepted by $\mathcal{A}$. Let $\mathcal{G}(L)$ be the two player game with $\omega$ moves in which each player plays a natural number, and player $I$ wins if the result of the game is in $L$.
\begin{enumerate}
\item (Determinacy) One of the players has a winning strategy in $\mathcal{G}(L)$.
\item (Decidability) Finding the winner of $\mathcal{G}(L)$ is recursive.
\item (Definability) There is an $\mathbb{N}$-$\MSO$ transducer that is a winning strategy for the winner.
\item (Computability) It is computable to find an $\mathbb{N}$-$\MSO$ transducer that is a winning strategy for the winner.
\end{enumerate}
\end{thm}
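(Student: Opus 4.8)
The plan is to realise the game $\mathcal{G}(L)$ as an $\MSO$-definable edge labelled parity game over $\Sigmain=\Sigmaout=\mathbb{N}$ in the structure $\omega=(\mathbb{N},<)$, and then read off all four claims from the Generalized Church Problem results already proved for this structure, together with the determinacy of parity games.

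First I would build the game $\mathcal{G}$. A configuration of $\mathcal{A}$ is a pair $(m,q)$ with $m\in\mathbb{N}$ and $q\in Q$; since $Q$ is finite and there are two players, the vertices $\mathbb{N}\times Q\times\players$ sit inside a finite copying of $\omega$, and the finitely many ``(state, turn)-layers'' are themselves $\MSO$-definable in $(\mathbb{N},<)$ as periodic sets, so $\mathcal{G}$ may be taken to be $\omega$-definable. A configuration in which player $I$ is to move is a vertex of $V_{in}$ and one in which player $II$ is to move is a vertex of $V_{out}$; the colour of $(m,q,\cdot)$ is $c(q)$; and the edge leaving a player's vertex is labelled by the natural number that player plays: from $(m,q)$ on the letter $k$ the automaton passes to the unique $(n,q')$ with $\omega\models\phi_{q,q'}(m,k,n)$, where existence and uniqueness of this successor are exactly the determinism of $\mathcal{A}$. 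Since each $\phi_{p,q}$ is an $\MSO$ formula over $\omega$, the transition function $\delta_\mathcal{G}$ is $\omega$-definable; and a play is winning for player $I$ iff the induced run of $\mathcal{A}$ is accepting, i.e. iff the played word lies in $L$. Hence the winner of $\mathcal{G}$ from the initial vertex $(0,q_0)$ is the winner of $\mathcal{G}(L)$.

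Claim (1), determinacy, is then immediate from the determinacy of parity games: every vertex of the underlying game $\tilde{\mathcal{G}}$ lies in the winning region of exactly one player. For claims (2)--(4) I would invoke Corollary \ref{GcpOrdTree} with $\mathcal{M}=\omega$, which is its case (1) since $\omega<\omega^\omega$. Deciding the winner is recursive because $MTh(\omega)$ is decidable (B\"uchi) and hence, by Corollary \ref{copyingRecursive}, so is the monadic theory of the finite copying in which $\mathcal{G}$ lives; the winning regions are definable, so deciding which player wins from $(0,q_0)$ is recursive. For definability and computability, $\omega$ has solvable selection and, being an ordinal, (solvable) first order uniformization, so by Theorem \ref{MainDefStrategy} its definable parity games have computable $\omega$-definable uniform memoryless winning strategies, and the Church construction of Theorem \ref{gcpInfinite} turns the winner's strategy into an $\omega$-definable --- that is, $\mathbb{N}-MSO$ --- transducer.

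The one point needing care is that $\mathcal{G}(L)$ asks for a strategy for whichever player wins, while the Church framework is phrased for the responding player. If player $II$ wins this is literally Corollary \ref{GcpOrdTree}. If player $I$ (the first mover) wins, I would run the symmetric argument: Theorem \ref{MainDefStrategy} equally supplies an $\omega$-definable uniform memoryless winning strategy for player $I$, and the label-selection step in the proof of Theorem \ref{gcpInfinite} applies verbatim at player $I$'s vertices, yielding a transducer that reads player $II$'s moves and emits player $I$'s moves (with the obvious one-step shift, as $I$ moves first). The genuine obstacle is exactly this label selection: over the infinite alphabet $\mathbb{N}$ infinitely many moves may lead to the same successor configuration, so converting a memoryless strategy into a transducer requires choosing one such move definably --- precisely weak first order uniformization, which is available because $\omega$, as an ordinal, has solvable first order uniformization. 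Everything else is the routine bookkeeping of the reduction.
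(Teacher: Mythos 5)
Your proposal is correct and follows essentially the same route as the paper: present $\mathcal{G}(L)$ as an edge labelled parity game over the vertices $\mathbb{N}\times Q\times\{1,2\}$ (definable in a finite copying of $\omega$, which you re-encode into $\omega$ via periodic sets while the paper adapts Corollary \ref{GcpOrdTree} to copyings by standard arguments), then invoke the generalized Church results for $\omega$, with weak first order uniformization handling the choice among the infinitely many letters realizing a strategy edge. Your explicit treatment of the symmetric case where player $I$ wins and of the label-selection obstacle fills in details the paper leaves as ``standard arguments.''
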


This theorem easily follows from our results using the following steps:

\begin{enumerate}
\item Present the game as an edge labelled parity game definable in $\omega \times Q \times \{1,2\}$.

The nodes are triples  $(i,q ,t)$ where $i\in \om$, $q\in Q$ is a state, and $t\in \{1,2\}$ indicates that Player $t$ moves from this node.
Each transition changes the third coordinate, and there is a transition from $(m,p,t)$ to $(n,q,3-t)$ labelled by $k$
if $\phi_{p,q}(m, k,n)$ holds.
\item By Corollary \ref{GcpOrdTree}, we can find (compute) a transducer that is definable in $\omega \times Q \times \{1,2\}$ and induces a winning strategy for the winner. (By standard arguments, Corollary \ref{GcpOrdTree} may be adapted to apply to a copying of $\omega$.)
%RemoveCopyingCase

\item It is left to (effectively) transform the transducer into an $\mathbb{N}$-$\MSO$ transducer, and that is done by standard arguments.
\end{enumerate}

\end{rem}
\section{Conclusion}

The aim of our paper is to present a first systematic study of the generalized Church synthesis problem, a problem that hasn't been stated before. The original Church synthesis problem is about finite automata, whereas its generalized version is an almost identical synthesis problem for definable infinite state systems. Apart of being a natural generalization, it can also have future applications, given that finite state systems usually serve as models for hardware, whereas software is frequently represented by infinite state systems.

Although the Church synthesis problem has always been considered a key result in the area of logic, automata and games, there were very few attempts to expand it to a more general context, as we do here. For that purpose, we provided:
\begin{itemize}
\setlength\itemsep{0.001em}
\item A conceptual framework.
\item Some sufficient conditions.
\item Non-trivial technical results.
\end{itemize}
The outcome is a solution of the generalized Church problem in some important instances.

\section*{Acknowledgement}
We would like to thank Wolfgang Thomas for useful comments and discussions on the first versions of this paper. We would also like to thank the anonymous reviewers for their valuable comments and helpful suggestions, which helped improve the quality of this manuscript.

\end{document}